\newcommand*\bigcdot{\mathpalette\bigcdot@{.5}}
\newcommand*\bigcdot@[2]{\mathbin{\vcenter{\hbox{\scalebox{#2}{$\m@th#1\bullet$}}}}}
\newcommand{\bs}{\boldsymbol}
	\newtheorem{thm}{Theorem}[section]
	\newdefinition{rmk}[thm]{Remark}
	\newproof{pf}{Proof}
\journal{Applied Numerical Mathematics}
\begin{document}
\begin{frontmatter}

\title{A Thermodynamically Consistent Phase-Field Model and an Entropy Stable Numerical Method for Simulating Two-Phase Flows with Thermocapillary Effects\tnoteref{mytitlenote}}


\author[mymainaddress]{Yanxiao Sun}
\author[mysecondaryaddress]{Jiang Wu}
\author[mythirdaddress]{Maosheng Jiang}
\author[myfourthaddress]{Steven M. Wise}
\author[mymainaddress]{Zhenlin Guo \corref{mycorrespondingauthor}}
\cortext[mycorrespondingauthor]{Corresponding author}
\ead{zguo@csrc.ac.cn}

\address[mymainaddress]{Mechanics Division, Beijing Computational Science Research Center, Building 9, East Zone, ZPark II, No. 10 East Xibeiwang Road, Haidian District, Beijing 100193, People Republic of China}

\address[mysecondaryaddress]{School of Mathematics and Physics, University of Science and Technology Beijing, No.30, Xueyuan Road, Haidian District, Beijing 100083, People Republic of China}

\address[mythirdaddress]{School of Mathematics and Statistics, Qingdao University, Qingdao 266071, People Republic of China}

\address[myfourthaddress]{Mathematics Department, University of Tennessee, Knoxville TN 37996-1320, USA}

    \begin{abstract}
In this study, we have derived a thermodynamically consistent phase-field model for two-phase flows with thermocapillary effects. This model accommodates variations in physical properties such as density, viscosity, heat capacity, and thermal conductivity between the two components. The model equations encompass a Cahn-Hilliard equation with the volume fraction as the phase variable, a Navier-Stokes equation, and a heat equation, and meanwhile maintains mass conservation, energy conservation, and entropy increase simultaneously. Given the highly coupled and nonlinear nature of the model equations, we developed a semi-decoupled, mass-preserving, and entropy-stable time-discrete numerical method. We conducted several numerical tests to validate both our model and numerical method. Additionally, we have investigated the merging process of two bubbles under non-isothermal conditions and compared the results with those under isothermal conditions. Our findings reveal that temperature gradients influence bubble morphology and lead to earlier merging. Moreover, we have observed that the merging of bubbles slows down with increasing heat Peclect number $\mathrm{Pe}_{T}$ when the initial temperature field increases linearly along the channel, while bubbles merge faster with heat Peclect number $\mathrm{Pe}_{T}$ when the initial temperature field decreases linearly along the channel.
    \end{abstract}

\begin{keyword}
Two-phase flows \sep Thermocapillary effects \sep Thermodynamic consistency \sep Phase-field method
\end{keyword}

\end{frontmatter}

\section{Introduction}

 The variations of surface tension caused by temperature gradients at a fluid-fluid interface usually lead to an interfacial shear force along the interface, and thus induce the movement of fluids in the direction of the temperature gradient. This effect is known as the thermocapillary effect, which plays an important role in various industrial applications involving microgravity or microdevices~\cite{guo2015thermodynamically}. Several phase-field models have been developed for simulating the thermocapillary effects for two-phase flows \cite{guo2015thermodynamically, liu2012modeling, liu2013phase, mitchell2021computational, hu2019diffuse, xiao2022spectral, yang2022phase,liu2014lattice, yue2022improved, sun2020structure}. The essential idea for the phase-field model is to introduce an order parameter to characterize the different phases, which varies continuously over a thin interfacial layer and is essentially uniform in the bulk phases. However, most of the existing models are not thermodynamically consistent, namely the fluid flow equation (Navier-Stokes), the phase-field equation (Cahn-Hilliard), and the heat equation are simply coupled and thus do not satisfy energy conservation and entropy production laws.

Recently, a thermodynamically consistent phase-field model \cite{guo2015thermodynamically} has been developed for simulating two-phase flows with thermocapillary effects, where the mass concentration is employed as the phase variable. The model equations for the whole computational domain can be derived variationally from energy and entropy functional, which allows the two fluids to have different physical properties (including density, viscosity and thermal conductivity) and meanwhile maintains mass conservation, internal energy conservation, and entropy increase. The model equations are highly nonlinear and coupled, which leave a challenge to the numerical simulations. In \cite{sun2020structure}, another thermodynamically consistent phase-field model was developed for simulating thermocapillary effects, where the volume fraction is employed as the phase variable instead of the mass concentration, and the two components are assumed to be of equal density. Moreover, several stable and efficient numerical methods, including IEQ \cite{yang2016linear, yang2017efficient, yang2022efficient}, SAV \cite{shen2018scalar, shen2019new, zhu2019efficient, wang2021second}, extended SAV \cite{HouXu1, HouXu2,Jiang1, hou2019variant} and SVM \cite{GongHongWang,HongWangGong,sun2020structure}, have been developed for solving the phase-field models. In some of these methods, extra, auxiliary variables have been introduced to ensure the discrete energy conservation and entropy increase.

In this paper, we propose a thermodynamically consistent phase-field model for two-phase flows with thermocapillary effects, which allows the two components to have different physical properties, including density, viscosity, heat capacity, and thermal conductivity. The model equations consist of a Cahn-Hilliard equation with the volume fraction of one component as the phase variable, a Navier-Stokes equation, and a heat equation. These equations are highly coupled and nonlinear, and meanwhile satisfy mass conservation, total energy conservation, and entropy increase. To carry out the numerical simulations, we develop a first order, semi-decoupled, mass conservative and entropy stable numerical method for solving the model equations, where there is no need to introduce extra auxiliary variables to our method.

The paper is organized as follows. We present the derivations of the phase-field model for two-phase flows with thermocapillary effects in \S\ref{section-2}, and provide the model validation in \S\ref{section-3}.  The numerical method and the corresponding discrete mass conservation and entropy increase are shown in \S\ref{section-discrete-numerical-method}. The numerical results are presented in \S\ref{numerical result}, and finally the conclusion is given in \S\ref{conclusion}.

\section{Model equations} \label{section-2}
    \subsection{Phase-field variable and variable physical properties}
We use the phase-field model to represent a two-phase incompressible fluid flow with variable physical properties and thermocapillary effects along the fluid/fluid interface.  In particular, we use the following formulation as the variable density for the two-phase fluid:
    \begin{align}
\rho\left(\Phi_{1},\Phi_{2}\right)=\rho_{1}\Phi_{1} + \rho_{2}\Phi_{2},
    \end{align}
where $\rho_{i}>0$ is the constant density and $\Phi_{i}$ is the volume fraction of the $i^\mathrm{th}$ fluid, respectively, for $i = 1,2$. Here, we use the no-voids assumption,
    \begin{align}
\Phi_{1}+\Phi_{2}=1.
    \end{align}
We next treat the two-phase fluid as one mixture, and define the phase variable as
    \[
\psi=\Phi_{1},
    \]
such that the variable density $\rho$ for the mixture can be rewritten as
    \begin{align}
\rho=\rho_{1}\psi + \rho_{2}(1-\psi).  \label{defininition_density}
    \end{align}
Similarly, we define the other variable properties for the mixture:
    \begin{align}
{\rm variable~viscosity:}&~~\mu(\psi)=\mu_{1} \psi +\mu_{2}(1-\psi),
    \label{def_variable-vis}
    \\
{\rm variable~thermal~conductivity:}&~~k(\psi)=k_{1} \psi +k_{2}(1-\psi),
    \label{def_variable-k}
    \\
{\rm variable~heat~capacity:}&~~C_h(\psi)=C_{h,1} \psi +C_{h,2}(1-\psi).
    \label{def_variable-chc}
    \end{align}

\subsection{Internal energy, free energy, and entropy}

To investigate the thermocapillary effects, we expect the surface free energy of the two-phase fluid to be temperature-dependent. To address this, we propose the internal energy density, $\hat{u}$, free energy density, $\hat{f}$, and entropy density, $\hat{s}$, for the mixture as follows:
    \begin{align}
\hat{u}(T, \psi, \boldsymbol{\nabla} \psi)&=u(T, \psi)+\lambda_{u} \delta(\psi, \boldsymbol{\nabla} \psi),
    \label{definition_hat_u}
    \\
\hat{f}(T, \psi, \boldsymbol{\nabla} \psi)&=f(T, \psi)+\lambda_{f} \delta(\psi, \boldsymbol{\nabla} \psi),
    \\
    \label{definition_hat_s}	
\hat{s}(T, \psi, \boldsymbol{\nabla} \psi)&=s(T, \psi)+\lambda_{s} \delta(\psi, \boldsymbol{\nabla} \psi).
    \end{align}
Here $u$, $f$ and $s$ represent the classical components that can be defined as~\cite{guo2015thermodynamically}
    \begin{align}
u &= \rho C_h T,
    \\
f &= \rho C_hT-\rho C_hT \ln \left(\frac{T}{T_0}\right),
    \\
s &= \rho C_h \ln \left(\frac{T}{T_0}\right),
    \end{align}
where $T$ denotes the absolute temperature, and $T_{0}$ serves as the reference temperature. Furthermore, we note the following thermodynamic relation:
    \begin{equation}
     f = u - Ts. \label{the-relationship-of-u-and-f}
    \end{equation}
Additionally, $\delta$ represents the interface free energy, defined as:
    \begin{equation}
    \label{definition_of_delta}
\delta(\psi, \boldsymbol{\nabla} \psi) =\frac{1}{\epsilon} W(\psi)+\epsilon \frac{|\boldsymbol{\nabla} \psi|^2}{2},
    \end{equation}
where $W(\psi)=\psi^2(1-\psi)^2/4$ represents a double-well free energy, and $\epsilon$ serves as a small parameter denoting the thickness of the diffuse interface. 

In the framework of the sharp-interface model, it's common to assume that the surface tension decreases linearly with temperature, given by:
    \begin{equation}
\sigma(T) =\sigma_{0}-\sigma_{T}(T-T_{0}),
    \label{surface_tension_T}
    \end{equation}
where $\sigma_{0}$ represents the surface tension at the reference temperature $T_0$, and $\sigma_{T}$ denotes the rate of change of surface tension with temperature. Consequently, for our phase-field model, we consider $\lambda_u$, $\lambda_f$, and $\lambda_s$ as:
    \begin{equation}
\lambda_{u}= \eta (\sigma_{0}+\sigma_{T}T_{0}),\quad \lambda_{f}= \eta  \sigma(T), \quad \lambda_{s}= \eta \sigma_T,
    \label{surface-tension}
    \end{equation}
where $\eta > 0$ is a positive constant that establishes the relationship between the surface tensions of the diffuse-interface model and the sharp-interface model, as discussed in \cite{guo2015thermodynamically}. Moreover, we note the following thermodynamic relations 
    \begin{align}
\lambda_f &= \lambda_u - T\lambda_s, \label{relationship-of-lambdau}\\
\hat{f} &= \hat{u}-T \hat{s}.\label{the-relationship-of-hatu-and-hatf}
    \end{align}

\subsection{Conservation laws}
Next, we establish the conservation laws for the flow of the two-phase fluid. Assuming the two-phase fluid occupies a domain $\Omega$, we consider an arbitrary material volume $V(t)\in \Omega$ moving with the mixture. Here, we define the following quantities: the total mass $M$, volume of a single fluid phase $\Psi$, momentum $\boldsymbol{P}$, and internal energy $U$:
\begin{align}
M&=\int_{V(t)}\rho~dV,\label{def-mass}\\
\Psi&=\int_{V(t)}\psi~dV,\label{def-phase}\\
{\boldsymbol{P}}&=\int_{V(t)}\rho\boldsymbol{v}~dV,\label{def-mom}\\
U&=\int_{V(t)} \hat{u} ~dV,\label{def-energy}
\end{align}
where $\rho$ is defined in Eq.~(\ref{defininition_density}), $\boldsymbol{v}$ represents the mass-averaged velocity of the mixture as discussed in \cite{lowengrub1998quasi}, and $\hat{u}$ denotes the internal energy density as defined in Eq.~(\ref{definition_hat_u}). Given these considerations, the associated conservation laws can be expressed as follows:
\begin{align}
\frac{dM}{d t} &= 0, \label{def-mass-laws} \\
\frac{d\Psi}{dt} &=  \int_{\partial V(t)} -\frac{\boldsymbol{J}}{\rho_{1}} \cdot \boldsymbol{n}~dA,\label{def-phase-laws}\\
\frac{d\boldsymbol{P}}{ dt} &=\int_{\partial V(t)}  \mathbf{T}\cdot \boldsymbol{n}~dA-\int_{V(t)}\rho g \hat{\boldsymbol{z}}~dV,\label{def-mom-laws}\\
\frac{dU}{dt}&=\int_{\partial V(t)} -\boldsymbol{q}_{E}\cdot \boldsymbol{n} ~dA+\int_{V(t)}\boldsymbol{\nabla} \boldsymbol{v}: \mathbf{T} ~ d {V}.\label{def-energy-laws}
\end{align}
Here, $\boldsymbol{n}$ denotes the unit outward normal vector of the boundary $\partial V(t)$, while $\hat{\boldsymbol{z}}$ represents the unit vector in the vertical direction. Eq.~\eqref{def-mass-laws} represents the mass conservation of the mixture within the material volume, and Eq.~\eqref{def-phase-laws} stands for the volume conservation for a single fluid (here for the fluid 1), where $\boldsymbol{J}$ is the volume flux of fluid 1 through the boundary of the material volume. The momentum conservation Eq.~\eqref{def-mom-laws} states that the rate of change in total momentum equals the force (surface forces $\mathbf{T}$) acting on the volume boundary. Here we assume that
\begin{align} \label{defintion_T}
	\mathbf{T}= \tau-p \mathbf{I}+ \mathbf{q},
\end{align}
where $\mathbf{\tau}=\mathcal{\mu}\left(\boldsymbol{\nabla} \boldsymbol{v}+\boldsymbol{\nabla} \boldsymbol{v}^{\top}\right)-\frac{2}{3} \mathcal{\mu}(\boldsymbol{\nabla} \cdot \boldsymbol{v}) \mathbf{I}$ represents the deviatoric stress tensor, $p$ denotes the pressure, $\mathbf{I}$ stands for the unit tensor, and $\mathbf{q}$ represents the unknown term that needs to be determined to ensure entropy non-decrease. Additionally, $g$ in Eq.~\eqref{def-mom-laws} corresponds the gravitational constant. The internal energy conservation Eq.~\eqref{def-energy-laws} states that the change in internal energy equals the rate of work  done by the forces ($\mathbf{T}$) on the boundary plus the energy flux ($\boldsymbol{q}_{E}$) through the volume boundary.

\subsection{Model derivation}
By substituting Eqs.~(\ref{def-mass}) -- (\ref{def-energy}) into the conservation laws (\ref{def-mass-laws}) -- (\ref{def-energy-laws}), we derive the following equations:
    \begin{align}
\rho \frac{D \boldsymbol{v}}{D t}&=\boldsymbol{\nabla} \cdot \mathbf{T}-\rho g \hat{\boldsymbol{z}},
    \label{momentum_cons_unknown}
    \\
\boldsymbol{\nabla} \cdot \boldsymbol{v}&=-\alpha \frac{\boldsymbol{\nabla} \cdot \boldsymbol{J}}{\rho_{1}},
    \label{mass_cons_unknown}
    \\
\frac{\partial \psi}{\partial t}+\boldsymbol{\nabla} \cdot(\psi \boldsymbol{v})&=-\frac{\boldsymbol{\nabla} \cdot \boldsymbol{J}}{\rho_{1}} \quad \mbox{or}\quad \frac{D \psi}{D t}+\psi(\boldsymbol{\nabla} \cdot \boldsymbol{v})=\frac{-\boldsymbol{\nabla} \cdot \boldsymbol{J}}{\rho_{1}},
    \label{volume_cons_unknown}
    \\
\frac{\partial u}{\partial T} \frac{D T}{D t}&=-\boldsymbol{\nabla} \cdot \boldsymbol{q}_{E}+\boldsymbol{\nabla} \boldsymbol{v}: \mathbf{T}-src,
    \label{energy_cons_unknown}
    \end{align}
where
    \begin{align}
src&=\left(\frac{\partial u}{\partial \psi}+\lambda_{u} w\right) \frac{D \psi}{D t}+\lambda_{u} \epsilon \boldsymbol{\nabla} \cdot (\boldsymbol{\nabla} \psi \frac{\partial \psi}{\partial t})+\hat{u}(\boldsymbol{\nabla} \cdot \boldsymbol{v}) \nonumber
    \\
& \quad +\lambda_{u} \epsilon \boldsymbol{\nabla}\cdot (\boldsymbol{\nabla} \psi\otimes \boldsymbol{\nabla} \psi)\cdot \boldsymbol{v},
    \\
w&=\frac{W^{\prime}(\psi)}{\epsilon} -\epsilon \Delta \psi,\\
\alpha&={(\rho_{2}-\rho_{1})}/{ \rho_{2}}, \label{definition_alpha-nodim}
    \end{align}
and $D/Dt=\partial /\partial t+\boldsymbol{v} \cdot \boldsymbol{\nabla}$ denotes the material derivative. Note that the comprehensive derivation is provided in \ref{derivation_of_coservation_equa}.

We now define the entropy as
\begin{align}
S&=\int_{V(t)} \hat{s}  ~dV,
\end{align}
and calculate the time derivative of the entropy. Here $\hat{s}$ represents the entropy density as defined in Eq.~(\ref{definition_hat_s}), and the following relation is used
\begin{align} \label{partial_u_T}
\frac{\partial s}{\partial T}\left(\frac{\partial u}{\partial T}\right)^{-1}=\frac{1}{T}.
\end{align}
Similar to the derivation of Eq.~(\ref{energy_conser_equa_2}), we obtain
\begin{align}
\label{entropy_derivative_0}	\frac{\partial \hat{s}}{\partial t} &=\frac{\partial s}{\partial T} \frac{\partial T}{\partial t}+\left(\frac{\partial s}{\partial \psi}+\lambda_{s} w\right) \frac{\partial \psi}{\partial t}+\lambda_{s} \epsilon \boldsymbol{\nabla} \cdot (\boldsymbol{\nabla} \psi \frac{\partial \psi}{\partial t}),
\end{align}
such that
\begin{align}
\label{entropy_derivative}	\frac{d S}{d t} =&\int_{V(t)}\left\{\frac{\partial s}{\partial T} \frac{\partial T}{\partial t}+\left(\frac{\partial s}{\partial \psi}+\lambda_{s} w\right) \frac{\partial \psi}{\partial t}+\lambda_{s} \epsilon \boldsymbol{\nabla} \cdot (\boldsymbol{\nabla} \psi \frac{\partial \psi}{\partial t})+\boldsymbol{\nabla} \cdot({\hat{s}\boldsymbol{v}})\right\} d {V} \notag\\
	=&\int_{V(t)}\bigg\{\frac{1}{T} \frac{\partial u}{\partial T} \frac{\partial T}{\partial t}+\left(\frac{\partial s}{\partial \psi}+\lambda_{s} w\right) \frac{\partial \psi}{\partial t}+\lambda_{s} \epsilon \boldsymbol{\nabla} \cdot (\boldsymbol{\nabla} \psi \frac{\partial \psi}{\partial t})+\boldsymbol{\nabla} \cdot({\hat{s}\boldsymbol{v}})\bigg\} d{V}.
\end{align}
Substituting (\ref{energy_cons_unknown}) into (\ref{entropy_derivative}), we obtain
\begin{align} \label{entropy_derivative_1}
	\frac{d S}{d t}=&\int_{V(t)}\left\{-  \frac{\partial s}{\partial T} \boldsymbol{v} \cdot \boldsymbol{\nabla} T-\frac{1}{T}\left(\frac{\partial u}{\partial \psi}+\lambda_{u} w\right) \frac{D \psi}{D t}-\frac{1}{T}\lambda_{u} \epsilon \boldsymbol{\nabla} \cdot (\boldsymbol{\nabla} \psi \frac{\partial \psi}{\partial t})
	\right. \notag\\
&-\frac{1}{T}\hat{u}(\boldsymbol{\nabla} \cdot \boldsymbol{v})-\frac{1}{T}\boldsymbol{\nabla} \cdot \boldsymbol{q}_{E}+\frac{1}{T}\boldsymbol{\nabla} \boldsymbol{v}: \mathbf{T}-\frac{1}{T}\lambda_{u} \epsilon \boldsymbol{\nabla}\cdot (\boldsymbol{\nabla} \psi\otimes \boldsymbol{\nabla} \psi)\cdot \boldsymbol{v} \notag\\
&\left.+\left(\frac{\partial s}{\partial \psi}+\lambda_{s} w\right)\left(\frac{D \psi}{D t}-\boldsymbol{v}\cdot\boldsymbol{\nabla} \psi\right)+\lambda_{s} \epsilon \boldsymbol{\nabla} \cdot (\boldsymbol{\nabla} \psi \frac{\partial \psi}{\partial t})+\boldsymbol{\nabla} \cdot({\hat{s}\boldsymbol{v}})\right\}d {V}.
 \end{align}
With the help of Eq.~(\ref{relationship-of-lambdau}) and the following identities
 \begin{align}
 \frac{1}{T}\frac{\partial f}{\partial \psi}&=\frac{1}{T}\frac{\partial u}{\partial \psi}-\frac{\partial s}{\partial \psi},\label{partial_u_phi}\\
\boldsymbol{v}\cdot \boldsymbol{\nabla} \hat{s}&=\boldsymbol{v}\cdot\left(\frac{\partial s}{\partial T} \boldsymbol{\nabla} T+\left(\frac{\partial s}{\partial \psi}+\lambda_{s} w \right) \boldsymbol{\nabla} \psi +\lambda_{s} \epsilon \boldsymbol{\nabla}\cdot (\boldsymbol{\nabla} \psi\otimes \boldsymbol{\nabla} \psi)\right),  \label{nabla_s}
 \end{align}
Eq.~(\ref{entropy_derivative_1}) can be rewritten as
   \begin{align} \label{entropy_derivative_02}
   \frac{d S}{d t} =&\int_{V(t)}\left\{-\frac{1}{T} \boldsymbol{\nabla} \cdot \boldsymbol{q}_{E}+\frac{1}{T} \boldsymbol{\nabla} \boldsymbol{v}:\mathbf{T} -\frac{1}{T} \lambda_{u} \epsilon \boldsymbol{\nabla}\cdot (\boldsymbol{\nabla} \psi\otimes \boldsymbol{\nabla} \psi)\cdot \boldsymbol{v}  \right. \notag\\
  &-\frac{1}{T}\lambda_{u} \epsilon \boldsymbol{\nabla} \cdot (\boldsymbol{\nabla} \psi \frac{\partial \psi}{\partial t})+\hat{s}(\boldsymbol{\nabla}\cdot\boldsymbol{v})   + \lambda_{s} \epsilon \boldsymbol{\nabla}\cdot (\boldsymbol{\nabla} \psi\otimes \boldsymbol{\nabla} \psi)\cdot \boldsymbol{v} \notag\\
&\left. -\frac{1}{T} \hat{u}(\boldsymbol{\nabla} \cdot \boldsymbol{v})-\frac{1}{T}\left(\frac{\partial f}{\partial \psi}+\lambda_{f} w\right) \frac{D \psi}{D t}+\lambda_{s} \epsilon \boldsymbol{\nabla} \cdot (\boldsymbol{\nabla} \psi \frac{\partial \psi}{\partial t})     \right\} d {V}.
\end{align}
In addition, with the help of Eq.~(\ref{the-relationship-of-hatu-and-hatf}) and the following identity
\begin{align}
\boldsymbol{\nabla}\cdot (\boldsymbol{\nabla} \psi\otimes \boldsymbol{\nabla} \psi)\cdot \boldsymbol{v}&=\boldsymbol{\nabla}\cdot \left((\boldsymbol{\nabla} \psi\otimes \boldsymbol{\nabla} \psi)\cdot \boldsymbol{v}\right)-\boldsymbol{\nabla} \boldsymbol{v}:(\boldsymbol{\nabla} \psi\otimes \boldsymbol{\nabla} \psi),
\end{align}
we rewrite Eq.~(\ref{entropy_derivative_02}) as
 \begin{align}
 \frac{d S}{d t}  =&\int_{V(t)}\bigg\{-\frac{1}{T} \boldsymbol{\nabla} \cdot \left(\boldsymbol{q}_{E}+\lambda_{u} \epsilon (\boldsymbol{\nabla} \psi\otimes \boldsymbol{\nabla} \psi)\cdot \boldsymbol{v}+\lambda_{u} \epsilon \boldsymbol{\nabla} \psi \frac{\partial \psi}{\partial t}\right) \notag\\
 &+\frac{1}{T} \boldsymbol{\nabla} \boldsymbol{v}:\left(\mathbf{T}+\lambda_{f} \epsilon (\boldsymbol{\nabla} \psi\otimes \boldsymbol{\nabla} \psi)\right)-\frac{1}{T} \hat{f}(\boldsymbol{\nabla} \cdot \boldsymbol{v})  \notag\\
 &+ \lambda_{s} \epsilon \boldsymbol{\nabla}\cdot \left((\boldsymbol{\nabla} \psi\otimes \boldsymbol{\nabla} \psi)\cdot \boldsymbol{v}\right)-\frac{1}{T}\left(\frac{\partial f}{\partial \psi}+\lambda_{f} w\right) \frac{D \psi}{D t} \notag\\
 &+\lambda_{s} \epsilon \boldsymbol{\nabla} \cdot (\boldsymbol{\nabla} \psi \frac{\partial \psi}{\partial t}) \bigg\} d {V}.  \label{entropy_derivative_04}
   \end{align}
Next, by substituting Eq.~(\ref{defintion_T}) (the definition of $\mathbf{T}$), mass conservation Eq.~(\ref{mass_cons_unknown}) and volume conservation Eq.~(\ref{volume_cons_unknown}) into Eq.~(\ref{entropy_derivative_04}), we obtain
 \begin{align} \label{entropy_derivative_05}
	\frac{d S}{d t}=&\int_{V(t)}\bigg\{-\frac{1}{T} \boldsymbol{\nabla} \cdot \left(\boldsymbol{q}_{E}+\lambda_{u} \epsilon (\boldsymbol{\nabla} \psi\otimes \boldsymbol{\nabla} \psi)\cdot \boldsymbol{v}+\lambda_{u} \epsilon \boldsymbol{\nabla} \psi \frac{\partial \psi}{\partial t}\right) \notag\\
&+\frac{1}{T} \boldsymbol{\nabla} \boldsymbol{v}:\left(\mathbf{\tau}+\mathbf{q}+\lambda_{f} \epsilon (\boldsymbol{\nabla} \psi\otimes \boldsymbol{\nabla} \psi)+\mu_{0}\psi \mathbf{I}\right) \notag\\
 &-\frac{1}{T} \hat{f}(\boldsymbol{\nabla} \cdot \boldsymbol{v})+\frac{1}{T}\mu_{c} \boldsymbol{\nabla} \cdot \frac{\boldsymbol{J}}{\rho_{1}}+ \lambda_{s} \epsilon \boldsymbol{\nabla}\cdot \left((\boldsymbol{\nabla} \psi\otimes \boldsymbol{\nabla} \psi)\cdot \boldsymbol{v}\right) \notag\\
 &+\lambda_{s} \epsilon \boldsymbol{\nabla} \cdot (\boldsymbol{\nabla} \psi \frac{\partial \psi}{\partial t})\bigg\} d {V},
\end{align}
where
\begin{align}
\mu_{0}&=\frac{\partial f}{\partial \psi}+\lambda_{f} w,\\
\mu_{c}&=\mu_{0}+\alpha p.
\end{align}
With the help of the following identity
\begin{align}
\frac{1}{T} \mu_{c} \boldsymbol{\nabla} \cdot  \frac{\boldsymbol{J}}{\rho_{1}}=\boldsymbol{\nabla} \cdot \left(\frac{1}{T}\mu_{c} \frac{\boldsymbol{J}}{\rho_{1}}\right)-\mu_{c}\frac{\boldsymbol{J}}{\rho_{1}}\cdot \boldsymbol{\nabla}\left(\frac{1}{T}\right)-\frac{1}{T}\frac{\boldsymbol{J}}{\rho_{1}}\cdot \boldsymbol{\nabla} \mu_{c},
\end{align}
Eq.~(\ref{entropy_derivative_05}) can be rewritten as
\begin{align}
	\frac{d S}{d t}
 =&\int_{V(t)}\left\{  - \boldsymbol{\nabla} \cdot  \left(\frac{1}{T}\boldsymbol{q}_{E}+\frac{1}{T}\lambda_{f} \epsilon (\boldsymbol{\nabla} \psi\otimes \boldsymbol{\nabla} \psi)\cdot \boldsymbol{v}+\frac{1}{T}\lambda_{f} \epsilon \boldsymbol{\nabla} \psi \frac{\partial \psi}{\partial t}-\frac{1}{T}\mu_{c} \frac{\boldsymbol{J}}{\rho_{1}} \right) \right. \notag\\
 &+ \boldsymbol{\nabla} \left(\frac{1}{T}\right) \cdot \left(\boldsymbol{q}_{E}+\lambda_{u} \epsilon (\boldsymbol{\nabla} \psi\otimes \boldsymbol{\nabla} \psi)\cdot \boldsymbol{v}+\lambda_{u} \epsilon \boldsymbol{\nabla} \psi \frac{\partial \psi}{\partial t}-\mu_{c}\frac{\boldsymbol{J}}{\rho_{1}}\right) \notag\\
&+\frac{1}{T} \boldsymbol{\nabla} \boldsymbol{v}:\left(\mathbf{\tau}+\mathbf{q}+\lambda_{f} \epsilon (\boldsymbol{\nabla} \psi\otimes \boldsymbol{\nabla} \psi)+\mu_{0}\psi \mathbf{I}-\hat{f}\mathbf{I}\right) \notag\\
	 &\left.-\frac{1}{T}\frac{\boldsymbol{J}}{\rho_{1}}\cdot \boldsymbol{\nabla} \mu_{c}\right\} d {V}.
\end{align}
For the above equation, we denote the first part as the entropy flux of the material volume, and the rest parts as the local entropy increase, $S_{inc}$, which needs to be non-negative according the second law of thermodynamics, such that
\begin{align}
S_{inc}&=\int_{V(t)}\bigg\{\boldsymbol{\nabla} \left(\frac{1}{T}\right) \cdot \left(\boldsymbol{q}_{E}+\lambda_{u} \epsilon (\boldsymbol{\nabla} \psi\otimes \boldsymbol{\nabla} \psi)\cdot \boldsymbol{v}+\lambda_{u} \epsilon \boldsymbol{\nabla} \psi \frac{\partial \psi}{\partial t}-\mu_{c}\frac{\boldsymbol{J}}{\rho_{1}}\right) \notag\\
&+\frac{1}{T} \boldsymbol{\nabla} \boldsymbol{v}:\left(\mathbf{\tau}+\mathbf{q}+\lambda_{f} \epsilon (\boldsymbol{\nabla} \psi\otimes \boldsymbol{\nabla} \psi)+\mu_{0}\psi \mathbf{I}-\hat{f}\mathbf{I}\right)-\frac{1}{T}\frac{\boldsymbol{J}}{\rho_{1}}\cdot \boldsymbol{\nabla} \mu_{c} \bigg\} \ge 0. \label{Sgeneration}
\end{align}
To comply with the second law of thermodynamics, which states that local entropy generation must be non-negative for an irreversible process, we specify the unknown terms as:
\begin{align}
\mathbf{q}=&\hat{f}\mathbf{I}-\lambda_{f} \epsilon (\boldsymbol{\nabla} \psi\otimes \boldsymbol{\nabla} \psi)-\mu_{0}\psi \mathbf{I},\label{stress_tensor}\\
\boldsymbol{J}=&-\rho_{1}m_{\psi}\boldsymbol{\nabla} \mu_{c},\\ 
\boldsymbol{q}_{E}=&-k\boldsymbol{\nabla} T-\lambda_{u} \epsilon (\boldsymbol{\nabla} \psi\otimes \boldsymbol{\nabla} \psi)\cdot \boldsymbol{v}-\lambda_{u} \epsilon \boldsymbol{\nabla} \psi \frac{\partial \psi}{\partial t}+\mu_{c}\frac{\boldsymbol{J}}{\rho_{1}}, \label{energy_flux}
\end{align}
where $m_{\psi}=m\sqrt{\psi^2(1-\psi)^2}$ is the degenerate mobility for the diffuse interface, and $m$ is a constant.
Substituting Eqs.~(\ref{stress_tensor}) -- (\ref{energy_flux}) into Eq.~(\ref{Sgeneration}), we obtain the entropy generation for the two-phase fluid system
\begin{equation}
	\begin{aligned}
	S_{inc}=\int_{V(t)}\bigg\{  k\frac{|\boldsymbol{\nabla} T|^2}{T^2}+\frac{1}{T} \boldsymbol{\nabla} \boldsymbol{v}: \mathbf{\tau}+\frac{1}{T}m_{\psi}|\boldsymbol{\nabla} \mu_{c}|^2\bigg\} \geq0.
	\end{aligned}
\end{equation}

Substituting Eqs.~(\ref{stress_tensor}) -- (\ref{energy_flux}) into Eqs.~(\ref{momentum_cons_unknown}) -- (\ref{energy_cons_unknown}), respectively, we finally obtain the model equations for two-phase flow with variable properties and thermocapillary effects:
\begin{align}
\rho \frac{\partial \boldsymbol{v}}{\partial t}+\rho\boldsymbol{v}\cdot\boldsymbol{\nabla} \boldsymbol{v}=&\boldsymbol{\nabla} \cdot \mathbf{T}-\rho g \hat{\boldsymbol{z}},\label{model_momentum-conservation}\\
\boldsymbol{\nabla} \cdot \boldsymbol{v}=&\alpha \boldsymbol{\nabla} \cdot(m_{\psi} \boldsymbol{\nabla}\mu_{c}),\label{model_mass-conservation}\\
\frac{\partial \psi}{\partial t}+\boldsymbol{\nabla} \cdot(\psi \boldsymbol{v}) =&\boldsymbol{\nabla} \cdot(m_{\psi} \boldsymbol{\nabla}\mu_{c}), \label{model_volume_conservation}\\
\frac{\partial \left(\rho C_h T\right)}{\partial t}+\boldsymbol{\nabla}\cdot (\rho C_h T \boldsymbol{v}) =& -\boldsymbol{\nabla} \cdot \boldsymbol{q}_{E}+\boldsymbol{\nabla}\boldsymbol{v}:\mathbf{T}
    \nonumber
    \\
& - \lambda_{u}\left(\frac{\partial \delta}{\partial t}+\boldsymbol{\nabla}\cdot (\delta \boldsymbol{v})\right), \label{model_energy_conservation}
\end{align}
where
    \begin{align}
\Lambda=&\Lambda_{1} \psi+\Lambda_{2}(1-\psi)~~{\rm for} ~~\Lambda=\rho, \mu, C_h,k, \label{definition_rho}\\
\mathbf{T}=&-p \mathbf{I}+ \hat{f} \mathbf{I}+\mathbf{\tau} -\lambda_{f}(T) \epsilon (\boldsymbol{\nabla} \psi\otimes \boldsymbol{\nabla} \psi)-\mu_{0}\psi \mathbf{I},\\
\hat{f}=&f+\lambda_{f}(T) \delta,~~~
f = \rho C_hT-\rho C_hT \ln \left(\frac{T}{T_0}\right),\\
\delta=&\frac{W(\psi)}{\epsilon} +\epsilon \frac{|\boldsymbol{\nabla} \psi|^2}{2}=\frac{\psi^2(1-\psi)^2}{4\epsilon}+\epsilon \frac{|\boldsymbol{\nabla} \psi|^2}{2},
    \label{definition_delta_n+1}
    \\
\mathbf{\tau}=&\mathcal{\mu}\left(\boldsymbol{\nabla} \boldsymbol{v}+\boldsymbol{\nabla} \boldsymbol{v}^{\top}\right)-\frac{2}{3} \mathcal{\mu}(\boldsymbol{\nabla} \cdot \boldsymbol{v}) \mathbf{I},\\
\mu_{0}=&\frac{\partial f}{\partial \psi}+\lambda_{f} w, \label{continue_mu_0}\\
\frac{\partial f}{\partial \psi}=&(\rho_{1}-\rho_{2}) C_hT(1-\ln \frac{T}{T_0})+\rho (C_{h,1}-C_{h,2})T(1-\ln \frac{T}{T_0}),\\
w=&\frac{W^{\prime}(\psi)}{\epsilon} -\epsilon \Delta \psi, \label{continue_w}\\
\mu_{c}=&\mu_{0}+\alpha p,
    \label{continue_mu_c-2}
    \\
\boldsymbol{q}_{E}=&-k\boldsymbol{\nabla} T-\lambda_{u} \epsilon (\boldsymbol{\nabla} \psi\otimes \boldsymbol{\nabla} \psi)\cdot \boldsymbol{v}-\lambda_{u} \epsilon \boldsymbol{\nabla} \psi \frac{\partial \psi}{\partial t}-m_{\psi}\mu_{c}\boldsymbol{\nabla} \mu_{c}.
\end{align}

\subsection{Non-dimensionalization} 
We now non-dimensionalize the model Eqs.~(\ref{model_momentum-conservation}) -- (\ref{model_energy_conservation}). Using the properties of fluid 1 as the characteristic quantities, we non-dimensionalize the variable properties as
\begin{align}
\bar{\Lambda}&=\psi + \zeta_{\Lambda} (1-\psi) ,
\end{align}
where $\Lambda$ stands for $\rho, \mu, c_{hc}$ and $k$, respectively, and $\zeta_{\Lambda}=\Lambda_{2}/\Lambda_{1}$ are the corresponding physical property ratios. 
By selecting $R^*$, $V^*$, and $T^{*}$ as the characteristic length, velocity, and temperature, respectively, and $\mu^*_{c}=\rho_1 (V^*)^2$ as the characteristic chemical potential, energy density, and pressure, $\sigma_{0}$ as the characteristic surface tension, and ${m}^*_{\psi}=m$ as the characteristic mobility for the phase field, we non-dimensionalize the physical quantities as follows:
\begin{align}
&\bar{t}=\frac{tV^*}{R^*},~~\bar{\bs{v}}=\frac{\bs{v}}{V^*},~~\bar{T}=\frac{T}{T^{*}},~~\bar{T}_{0}=\frac{{T}_{0}}{T^{*}},
~~\bar{\epsilon}=\frac{\epsilon}{R^*},~~\bar{\mu}_{c}=\frac{\mu_{c}}{\rho_1 (V^*)^2},\notag\\
&\bar{\hat{f}}=\frac{\hat{f}}{\rho_1 (V^*)^2},~~\bar{\hat{u}}=\frac{\hat{u}}{\rho_1 (V^*)^2},~~\bar{\hat{s}}=\frac{\hat{s}T^{*}}{\rho_1 (V^*)^2},~~\bar{p}=\frac{p}{\rho_1 (V^*)^2},\notag\\
&\bar{\lambda}_{u}=\frac{\lambda_{u}}{\sigma_{0}},~~\bar{\lambda}_{f}=\frac{\lambda_{f}}{\sigma_{0}},~~\bar{\lambda}_{s}=\frac{\lambda_{s}T^{*}}{\sigma_{0}},~~\bar{m}_{\psi}=\frac{{m}_{\psi}}{m}.
\end{align}
After dropping the bar notations, the non-dimensional phase-field model for two-phase flows with thermocapillary effects can be given as follows:
    \begin{align}
\rho \frac{\partial \boldsymbol{v}}{\partial t}+\rho\boldsymbol{v}\cdot\boldsymbol{\nabla} \boldsymbol{v}&=\boldsymbol{\nabla} \cdot \mathbf{T}-\frac{\rho}{{\rm Fr}} \hat{\boldsymbol{z}},
    \label{sys-vel-nodim}
    \\
\boldsymbol{\nabla}\cdot \boldsymbol{v}& = \frac{ \alpha}{\mathrm{Pe}_{\psi}} \boldsymbol{\nabla} \cdot \left(m_{\psi} \boldsymbol{\nabla} {\mu}_{c}\right),
    \label{sys-mass-nodim}
    \\
\frac{\partial \psi}{\partial t}+\boldsymbol{\nabla} \cdot(\psi \boldsymbol{v}) &= \frac{1}{\mathrm{Pe}_{\psi}} \boldsymbol{\nabla} \cdot \left( m_{\psi}\boldsymbol{\nabla} {{\mu}}_{c} \right),
    \label{sys-phase-nodim}
    \\
\frac{\partial \left(\rho C_hT\right)}{\partial t}+\boldsymbol{\nabla}\cdot (\rho C_hT \boldsymbol{v}) &= -\boldsymbol{\nabla} \cdot \boldsymbol{q}_{E}+{\rm Ec}\boldsymbol{\nabla}\boldsymbol{v}:\mathbf{T}
    \nonumber
    \\
& - \frac{{\rm Ec}}{{\rm We}}\lambda_{u}\left(\frac{\partial \delta}{\partial t}+\boldsymbol{\nabla}\cdot (\delta \boldsymbol{v})\right) ,
    \label{sys-energy-nodim}
    \end{align}
where
    \begin{align}
\Lambda &= \psi + \zeta_{\Lambda} (1-\psi), \quad \zeta_{\Lambda}=\Lambda_{2}/\Lambda_{1}~~{\rm for}~~\Lambda=\rho,\mu,C_h,k,
    \label{property-nodim}
    \\
\mathbf{T}&=-p \mathbf{I}+ \hat{f} \mathbf{I}+ \frac{1}{\mathrm{Re}} \mathbf{\tau} -\frac{1}{\rm We}\lambda_{f}(T) \epsilon (\boldsymbol{\nabla} \psi\otimes \boldsymbol{\nabla} \psi)-\mu_{0}\psi \mathbf{I},
    \label{stress-tensor-nodim}
    \\
\hat{f}&=f(T, \psi)+\frac{1}{\rm We}\lambda_{f}(T) \delta, \label{non-dimensional-free-en}\\
f &= \frac{1}{{\rm Ec}} \rho C_hT -\frac{1}{{\rm Ec}} \rho C_hT\ln \left(\frac{T}{T_0}\right),
    \\
\lambda_{f}(T)&= \eta\left(1 - {\rm Ca}~{\rm Ma} (T-T_{0}) \right),
    \\
\delta&=\frac{W(\psi)}{\epsilon} +\epsilon \frac{|\boldsymbol{\nabla} \psi|^2}{2},
    \label{definition_delta_n+1_nodim}
    \\
W(\psi)&=\frac{\psi^2(1-\psi)^2}{4},
    \\
\mathbf{\tau}&=\mathcal{\mu}\left(\boldsymbol{\nabla} \boldsymbol{v}+\boldsymbol{\nabla} \boldsymbol{v}^{\top}\right)-\frac{2}{3} \mathcal{\mu}(\boldsymbol{\nabla} \cdot \boldsymbol{v}) \mathbf{I},
    \\
\mu_{0}&=\frac{\partial f}{\partial \psi}+\frac{1}{\rm We}\lambda_{f}(T) w,
    \label{continue_mu_0-nodim}
    \\
\frac{\partial f}{\partial \psi}&=\frac{1}{{\rm Ec}} \frac{\rho_{1}-\rho_{2}}{\rho_{1}} C_hT\left(1-\ln \left(\frac{T}{T_0}\right)\right)+\frac{1}{{\rm Ec}} \rho \frac{C_{h,1}-C_{h,2}}{C_{h,1}}T\left(1-\ln \left(\frac{T}{T_0}\right)\right),\\
w&=\frac{W^{\prime}(\psi)}{\epsilon} -\epsilon \Delta \psi,
    \label{continue_w-nodim}
    \\
W^{\prime}(\psi)&=\psi(\psi-1)(\psi-\frac{1}{2}),\\
m_{\psi}&=\sqrt{\psi^2(1-\psi)^2},
    \\
\mu_{c}&=\mu_{0}+\alpha p,
    \label{continue_mu_c}
    \\
\boldsymbol{q}_{E}&=-\frac{1}{\mathrm{Pe}_{T}}k\boldsymbol{\nabla} T-\frac{{\rm Ec}}{{\rm We}}\lambda_{u} \epsilon (\boldsymbol{\nabla} \psi\otimes \boldsymbol{\nabla} \psi)\cdot \boldsymbol{v}-\frac{{\rm Ec}}{{\rm We}}\lambda_{u} \epsilon \boldsymbol{\nabla} \psi \frac{\partial \psi}{\partial t}
    \nonumber
    \\
&-\frac{{\rm Ec}}{\mathrm{Pe}_{\psi}} m_{\psi}\mu_{c}\boldsymbol{\nabla} \mu_{c},
    \\
\lambda_{u}&= \eta(1+{\rm Ca}~{\rm Ma} T_{0}).
    \end{align}
 Here the non-dimensional parameters are given by
    \begin{equation}
    \begin{split}
&\mathrm{Re} = \frac{\rho_{1} V^* R^*}{\mu_{1}},~~{\rm Ca}= \frac{\mu_{1}V^*}{\sigma_{0}},~~{\rm Ma} = \frac{\sigma_{T}T^{*}}{\mu_{1} V^*},~~{\rm Fr} = \frac{(V^*)^2}{gR^*},
    \\
&\mathrm{Pe}_{\psi} = \frac{V^* R^*}{m^*_{\psi}\mu^*_{c} },~~\mathrm{Pe}_{T}=\frac{\rho_{1}C_{h,1}V^*R^*}{k_{1}},~~{\rm Ec}=\frac{(V^{*})^2}{C_{h,1}T^{*}},~~{\rm We}={\rm Ca}~\mathrm{Re},
    \end{split}
    \label{non-dim-para}
\end{equation}
where $\mathrm{Re}$ is the Reynolds number, ${\rm Ca}$ is the Capillary number, ${\rm Ma}$ is the thermal Marangoni number, ${\rm Fr}$ is the Froude number, $\mathrm{Pe}_{\psi}$ and $\mathrm{Pe}_{T}$ are the Peclect numbers for the phase-field and heat equations, respectively, ${\rm Ec}$ is the Eckert number and ${\rm We}$ is the Weber number. 
In addition, the non-dimensional internal energy and entropy densities are given by
\begin{align}
\hat{ u}&=u + \frac{1}{{\rm We}}\lambda_{u}\delta,~~~~u=\frac{1}{\rm Ec}\rho C_hT,\label{non-dimensional-internal-en}\\
\hat{s}&= s + \frac{1}{{\rm We}}\lambda_{s}\delta, ~~~~s= \frac{1}{{\rm Ec}} \rho C_h\ln \left(\frac{T}{T_0}\right), ~~~\lambda_{s}= \eta{\rm Ca}~{\rm Ma}.\label{non-dimensional-entropy}
\end{align}

\section{Model validation}\label{section-3}
In this section, to validate the model derivations, we demonstrate that the conservation laws of mass and energy, and entropy increase can be derived from the model Eqs.~(\ref{sys-vel-nodim})-(\ref{sys-energy-nodim}), which can be further served as the foundation for designing numerical methods.
\begin{thm}\label{them-continous-mass-energy} Consider a closed system in $\Omega$ with following boundary conditions
\begin{align}
\boldsymbol{v}|_{\partial\Omega}=0,~~~\boldsymbol{n}\cdot \boldsymbol{\nabla}\mu_{c}|_{\partial\Omega}=\boldsymbol{n}\cdot\boldsymbol{\nabla}\psi|_{\partial\Omega}=\boldsymbol{n}\cdot\boldsymbol{\nabla} {T}|_{\partial\Omega}= 0, \label{proof_bc-nodim}
\end{align}
the model Eqs.~(\ref{sys-vel-nodim}) -- (\ref{sys-energy-nodim}) satisfy the following mass and energy conservation
\begin{align}
\frac{d}{dt}\int_{\Omega}\rho ~{d}{x}=0~~{\rm and}~~\frac{d E}{d t}=\frac{d}{dt}\int_{\Omega}\left\{\hat{u}+\frac{1}{2} \rho |\boldsymbol{v}|^2+\frac{\rho}{{\rm Fr}} z\right\} d {x}=0,\label{proof_mass-energy-nodim}
\end{align}
where $E$ denotes the dimensionless total energy of the system, comprising the internal energy $\hat{u}$, kinetic energy $\frac{1}{2}\rho|\boldsymbol{v}|^2$, and potential energy $\frac{\rho}{{\rm Fr}} z$.
\end{thm}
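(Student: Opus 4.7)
The overall plan is to first recover the usual continuity equation $\partial_t\rho + \boldsymbol{\nabla}\cdot(\rho\boldsymbol{v})=0$, then to establish three separate balance laws for internal, kinetic, and potential energy whose boundary fluxes vanish under the prescribed BCs and whose volumetric source/sink terms cancel pairwise.

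For mass conservation, I would eliminate the mobility divergence between~(\ref{sys-mass-nodim}) and~(\ref{sys-phase-nodim}). Using $\rho = \zeta_\rho + (1-\zeta_\rho)\psi$ gives $\partial_t\rho = (1-\zeta_\rho)\partial_t\psi$ and $\boldsymbol{\nabla}\cdot(\rho\boldsymbol{v}) = \zeta_\rho\,\boldsymbol{\nabla}\cdot\boldsymbol{v} + (1-\zeta_\rho)\,\boldsymbol{\nabla}\cdot(\psi\boldsymbol{v})$; adding the phase equation scaled by $(1-\zeta_\rho)$ to the velocity-divergence equation scaled by $\zeta_\rho$ produces a mobility coefficient $(1-\zeta_\rho)+\zeta_\rho\alpha$, which vanishes because the non-dimensional $\alpha=(\zeta_\rho-1)/\zeta_\rho$. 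Thus $\partial_t\rho + \boldsymbol{\nabla}\cdot(\rho\boldsymbol{v})=0$; integrating over $\Omega$ and using $\boldsymbol{v}|_{\partial\Omega}=0$ gives the first assertion of~(\ref{proof_mass-energy-nodim}).

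For the kinetic balance I would dot~(\ref{sys-vel-nodim}) with $\boldsymbol{v}$ and invoke the continuity equation just obtained to rewrite the left-hand side in conservative form $\partial_t(\tfrac{1}{2}\rho|\boldsymbol{v}|^2) + \boldsymbol{\nabla}\cdot(\tfrac{1}{2}\rho|\boldsymbol{v}|^2\boldsymbol{v})$. Integrating over $\Omega$ and integrating by parts once on $\boldsymbol{v}\cdot(\boldsymbol{\nabla}\cdot\mathbf{T})$ leaves $\tfrac{d}{dt}\int_\Omega\tfrac{1}{2}\rho|\boldsymbol{v}|^2\,d\boldsymbol{x} = -\int_\Omega\boldsymbol{\nabla}\boldsymbol{v}:\mathbf{T}\,d\boldsymbol{x} - \int_\Omega(\rho/\mathrm{Fr})\boldsymbol{v}\cdot\hat{\boldsymbol{z}}\,d\boldsymbol{x}$, since every boundary integral carries a factor of $\boldsymbol{v}|_{\partial\Omega}=0$. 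For the potential energy, differentiating $\int_\Omega(\rho/\mathrm{Fr})z\,d\boldsymbol{x}$, substituting the continuity equation and integrating by parts with $\boldsymbol{\nabla} z = \hat{\boldsymbol{z}}$ yields exactly $+\int_\Omega(\rho/\mathrm{Fr})\boldsymbol{v}\cdot\hat{\boldsymbol{z}}\,d\boldsymbol{x}$, cancelling the gravity work term.

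For the internal energy I decompose $\hat{u}=u+(\lambda_u/\mathrm{We})\delta$ with $\lambda_u$ independent of time; dividing~(\ref{sys-energy-nodim}) by $\mathrm{Ec}$ and folding the $\delta$ terms into the left-hand side rearranges it to $\partial_t\hat{u} + \boldsymbol{\nabla}\cdot(\hat{u}\boldsymbol{v}) = -\mathrm{Ec}^{-1}\boldsymbol{\nabla}\cdot\boldsymbol{q}_E + \boldsymbol{\nabla}\boldsymbol{v}:\mathbf{T}$. Integrating over $\Omega$, the $\hat{u}\boldsymbol{v}$ boundary flux vanishes via $\boldsymbol{v}|_{\partial\Omega}=0$, while each of the four pieces of $\boldsymbol{q}_E$ contributes a boundary integrand involving exactly one of $\boldsymbol{n}\cdot\boldsymbol{\nabla} T$, $\boldsymbol{v}$, $\boldsymbol{n}\cdot\boldsymbol{\nabla}\psi$, or $\boldsymbol{n}\cdot\boldsymbol{\nabla}\mu_c$, all zero on $\partial\Omega$. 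Hence $\tfrac{d}{dt}\int_\Omega\hat{u}\,d\boldsymbol{x} = \int_\Omega\boldsymbol{\nabla}\boldsymbol{v}:\mathbf{T}\,d\boldsymbol{x}$. Summing the three balances, the $\pm\int_\Omega\boldsymbol{\nabla}\boldsymbol{v}:\mathbf{T}\,d\boldsymbol{x}$ and $\pm\int_\Omega(\rho/\mathrm{Fr})\boldsymbol{v}\cdot\hat{\boldsymbol{z}}\,d\boldsymbol{x}$ contributions cancel pairwise, giving $dE/dt=0$. The main obstacle will be the careful bookkeeping needed to confirm that the non-dimensional $\alpha$ has the correct sign to close the continuity equation and that the mixed term $(\boldsymbol{\nabla}\psi\otimes\boldsymbol{\nabla}\psi)\cdot\boldsymbol{v}$ in $\boldsymbol{q}_E$ produces no boundary flux once both no-slip and no-flux conditions are combined.
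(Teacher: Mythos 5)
Your proposal is correct and follows essentially the same route as the paper: recover the continuity equation $\partial_t\rho+\boldsymbol{\nabla}\cdot(\rho\boldsymbol{v})=0$ by eliminating the mobility term between the divergence constraint and the phase equation, rewrite the heat equation in the conservative form $\partial_t\hat{u}+\boldsymbol{\nabla}\cdot(\hat{u}\boldsymbol{v})=-\mathrm{Ec}^{-1}\boldsymbol{\nabla}\cdot\boldsymbol{q}_E+\boldsymbol{\nabla}\boldsymbol{v}:\mathbf{T}$, form the kinetic and potential balances from the momentum and continuity equations, and cancel the $\boldsymbol{\nabla}\boldsymbol{v}:\mathbf{T}$ and gravity-work terms after checking that every boundary flux vanishes under the stated conditions. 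The only difference is organizational (you integrate the three balances separately and sum, whereas the paper combines the kinetic and internal balances before adding the potential one), and your sign check $(1-\zeta_\rho)+\zeta_\rho\alpha=0$ with $\alpha=(\zeta_\rho-1)/\zeta_\rho$ matches the paper's use of $\alpha=(\rho_2-\rho_1)/\rho_2$.
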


\begin{proof}
We first show the proof for the mass conservation (\ref{proof_mass-energy-nodim}).
Multiplying Eq.~(\ref{sys-phase-nodim}) by $\alpha$ and using Eq.~(\ref{sys-mass-nodim}), we obtain
\begin{align}
\alpha \psi_{t} +\alpha\boldsymbol{\nabla}\cdot (\psi\boldsymbol{v})= \boldsymbol{\nabla}\cdot \boldsymbol{v}.\label{proof-mass-conservation}
\end{align}
With the help of $\alpha$ (\ref{definition_alpha-nodim}) and $\rho$ (\ref{property-nodim}), Eq.~(\ref{proof-mass-conservation}) can be reformulated as
\begin{align}
\rho_{t} +\boldsymbol{\nabla}\cdot(\rho \boldsymbol{v})= 0.\label{model-mass-con2}
\end{align}
By integrating over $\Omega$ and applying the divergence theorem with the boundary conditions in (\ref{proof_bc-nodim}), we deduce the mass conservation (\ref{proof_mass-energy-nodim}).\\
We now show the proof for the energy conservation.
Multiplying Eq.~(\ref{sys-phase-nodim}) by $-\rho_{2}\alpha$ and using Eq.~(\ref{sys-mass-nodim}), we obtain
\begin{align} \label{rc_energy_1}
-\rho_{2}\alpha\frac{\partial \psi}{\partial t}-\rho_{2}\alpha\boldsymbol{\nabla} \cdot(\psi \boldsymbol{v}) &= -\rho_{2}\boldsymbol{\nabla}\cdot \boldsymbol{v}.
\end{align}
With the definitions of $\alpha$ (\ref{definition_alpha-nodim}) and $\rho$ (\ref{property-nodim}), Eq.~(\ref{rc_energy_1}) can be rewritten as
\begin{align}\label{rc_energy_3}
\frac{\partial \rho}{\partial t}+\boldsymbol{\nabla} \cdot(\rho \boldsymbol{v}) =0.
\end{align}
Multiplying Eq.~(\ref{sys-vel-nodim}) by $\boldsymbol{v}$ and Eq.~(\ref{rc_energy_3}) by $|\boldsymbol{v}|^{2}/2$, and adding them together, we obtain
\begin{align} \label{rc_energy_4}
\frac{1}{2}\frac{\partial ({\rho |\boldsymbol{v}|^{2}})}{\partial {t}} +\frac{1}{2}\boldsymbol{\nabla}\cdot(\rho|\boldsymbol{v}|^{2} \boldsymbol{v})=&\boldsymbol{\nabla} \cdot \mathbf{T}\cdot \boldsymbol{v}-\frac{\rho}{{\rm Fr}} \hat{\boldsymbol{z}}\cdot \boldsymbol{v},
\end{align}
where the following identity is used
\begin{align}
\rho(\boldsymbol{v} \cdot \boldsymbol{\nabla} \boldsymbol{v})\cdot  \boldsymbol{v} + \frac{1}{2}\boldsymbol{\nabla} \cdot(\rho \boldsymbol{v}) |\boldsymbol{v}|^{2}=\frac{1}{2}\boldsymbol{\nabla}\cdot(\rho|\boldsymbol{v}|^{2} \boldsymbol{v}).
\end{align}
In addition, with the help of $\hat{u}$ (\ref{non-dimensional-internal-en}), Eq.~(\ref{sys-energy-nodim}) can be rewritten as
\begin{align} \label{rc_energy_4-1}
\frac{\partial \hat{u}}{\partial t}+\boldsymbol{\nabla} \cdot(\hat{u}\boldsymbol{v})=-\frac{1}{{\rm Ec}}\boldsymbol{\nabla} \cdot \boldsymbol{q}_{E}+\boldsymbol{\nabla} \boldsymbol{v}: \mathbf{T}.
\end{align}
Combining Eqs.~(\ref{rc_energy_4}) and (\ref{rc_energy_4-1}), and integrating over $\Omega$ while applying the divergence theorem with the boundary conditions in Eq.~(\ref{proof_bc-nodim}), we obtain
\begin{align}\label{rc_energy_5}
\frac{d}{dt}\int_{\Omega}\frac{1}{2} \rho |\boldsymbol{v}|^{2} +\hat{u}~dx=\int_{\Omega} -\frac{\rho}{{\rm Fr}} \hat{\boldsymbol{z}}\cdot \boldsymbol{v}~dx,
\end{align}
where the following identity is used
\begin{align}
\boldsymbol{\nabla} \cdot (\mathbf{T}\cdot \boldsymbol{v})=\boldsymbol{\nabla} \cdot \mathbf{T}\cdot \boldsymbol{v}+\boldsymbol{\nabla} \boldsymbol{v}: \mathbf{T}.
\end{align}
In addition, using Eq.~(\ref{rc_energy_3}), we have
\begin{align}\label{rc_energy_7}
\frac{d}{dt}\int_{\Omega}\frac{\rho}{{\rm Fr}} z~dx=\int_{\Omega} \frac{\partial \rho}{\partial t}\frac{z}{{\rm Fr}}~dx
=\int_{\Omega} -\boldsymbol{\nabla}\cdot(
\rho \boldsymbol{v})\frac{z}{{\rm Fr}}~dx.
\end{align}
Adding Eqs.~(\ref{rc_energy_5}) and (\ref{rc_energy_7}) together, and applying the divergence theorem with the boundary conditions in (\ref{proof_bc-nodim}), we deduce the energy conservation (\ref{proof_mass-energy-nodim}).
\end{proof}

\begin{thm}\label{them-continous-entropy} Consider a closed system in $\Omega$ with following boundary conditions
\begin{align}
\boldsymbol{v}|_{\partial\Omega}=0,~~~\boldsymbol{n}\cdot \boldsymbol{\nabla}\mu_{c}|_{\partial\Omega}=\boldsymbol{n}\cdot\boldsymbol{\nabla}\psi|_{\partial\Omega}=\boldsymbol{n}\cdot\boldsymbol{\nabla} {T}|_{\partial\Omega}= 0, \label{rc_bc}
\end{align}
the model Eqs.~(\ref{sys-vel-nodim}) -- (\ref{sys-energy-nodim}) satisfy the following entropy increase
\begin{align}
\frac{d S}{d t}&=\frac{d}{dt}\int_{\Omega}\left\{\frac{1}{{\rm Ec}} \rho C_h\ln \left(\frac{T}{T_0}\right) + \frac{1}{{\rm We}}\lambda_{s}\delta\right\} d {x} \notag \\&=\int_{\Omega}\bigg\{\frac{k}{{\rm Ec}~\mathrm{Pe}_{T}}\frac{|\boldsymbol{\nabla} T|^2}{T^2}+\frac{1}{\mathrm{Re}}\frac{\mathbf{\tau}:\boldsymbol{\nabla} \boldsymbol{v}}{T}+\frac{m_{\psi}}{\mathrm{Pe}_{\psi}}\frac{|\boldsymbol{\nabla}\mu_{c}|^2}{T}\bigg\} d {x} \geq0, \label{proof_entropy-nodim}
\end{align}
where the entropy $S$ of the system is defined as
\begin{align}
	S=\int_{\Omega} \hat{s}(T, \psi, \boldsymbol{\nabla} \psi) d {x}, \label{non-dimension-total-entropy}
\end{align}
and $\hat{s}$ is defined in Eq.~(\ref{non-dimensional-entropy}).
\end{thm}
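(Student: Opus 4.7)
The plan is to retrace, at the level of the model equations, the variational calculation already carried out in \S2.4 but now in the reverse direction: starting from Eqs.~(\ref{sys-vel-nodim})--(\ref{sys-energy-nodim}) with the specific closure forms for $\mathbf{q}$, $\boldsymbol{J}$, and $\boldsymbol{q}_E$ already fixed, I would show that $dS/dt$ reduces to the stated sum of three non-negative dissipation terms. Because all the hard structural work (the thermodynamic identities, the manipulations of the Korteweg-type stress, and the definition of $\mu_c$) is already encoded in the model, the proof becomes a bookkeeping exercise rather than a new derivation.

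First I would differentiate $\hat{s}$ using (\ref{non-dimensional-entropy}) and the chain rule, producing $\partial_t\hat{s} = \tfrac{\partial s}{\partial T}\partial_t T + (\tfrac{\partial s}{\partial\psi} + \tfrac{1}{\rm We}\lambda_s w)\partial_t\psi + \tfrac{1}{\rm We}\lambda_s\epsilon\,\boldsymbol{\nabla}\cdot(\boldsymbol{\nabla}\psi\,\partial_t\psi)$, then add $\boldsymbol{\nabla}\cdot(\hat{s}\boldsymbol{v})$ to rewrite the time derivative as a material-plus-transport form, exactly as in (\ref{entropy_derivative_0})--(\ref{entropy_derivative}). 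Next I would use the energy equation (\ref{sys-energy-nodim}), rewritten in terms of $\partial u/\partial T \cdot DT/Dt$ as in (\ref{energy_cons_unknown}), together with the identity (\ref{partial_u_T}) to eliminate $\partial_t T$ in favor of $-\boldsymbol{\nabla}\cdot\boldsymbol{q}_E + \boldsymbol{\nabla}\boldsymbol{v}:\mathbf{T}$ divided by $T$, picking up the $src$ contribution that carries the surface-energy pieces.

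The third step is to collapse everything using the thermodynamic relations $\hat{f}=\hat{u}-T\hat{s}$, $\lambda_f=\lambda_u-T\lambda_s$, and $\tfrac{1}{T}\tfrac{\partial f}{\partial\psi}=\tfrac{1}{T}\tfrac{\partial u}{\partial\psi}-\tfrac{\partial s}{\partial\psi}$, together with the vector identity $\boldsymbol{\nabla}\cdot(\boldsymbol{\nabla}\psi\otimes\boldsymbol{\nabla}\psi)\cdot\boldsymbol{v}=\boldsymbol{\nabla}\cdot((\boldsymbol{\nabla}\psi\otimes\boldsymbol{\nabla}\psi)\cdot\boldsymbol{v})-\boldsymbol{\nabla}\boldsymbol{v}:(\boldsymbol{\nabla}\psi\otimes\boldsymbol{\nabla}\psi)$. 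This is exactly the sequence (\ref{entropy_derivative_1})--(\ref{entropy_derivative_05}), so the algebra has already been validated; the only new content is that the closure (\ref{stress_tensor}) forces the $\mathbf{T}$-contraction to collapse to $\tau:\boldsymbol{\nabla}\boldsymbol{v}/T$, the closure for $\boldsymbol{J}=-\rho_1 m_\psi\boldsymbol{\nabla}\mu_c$ converts the $-\tfrac{1}{T}\tfrac{\boldsymbol{J}}{\rho_1}\cdot\boldsymbol{\nabla}\mu_c$ term into $\tfrac{m_\psi}{\mathrm{Pe}_\psi}|\boldsymbol{\nabla}\mu_c|^2/T$, and the closure for $\boldsymbol{q}_E$ combined with the $\boldsymbol{\nabla}(1/T)$ factor leaves only $\tfrac{k}{\mathrm{Pe}_T}|\boldsymbol{\nabla}T|^2/T^2$ after cancellation of the Korteweg cross-terms. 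Finally I would integrate over $\Omega$ and apply the divergence theorem: the no-slip condition kills every $(\cdot)\boldsymbol{v}$ boundary integral (including the $(\boldsymbol{\nabla}\psi\otimes\boldsymbol{\nabla}\psi)\cdot\boldsymbol{v}$ flux), $\boldsymbol{n}\cdot\boldsymbol{\nabla}\psi|_{\partial\Omega}=0$ eliminates the $\lambda_u\epsilon\boldsymbol{\nabla}\psi\,\partial_t\psi$ surface flux, $\boldsymbol{n}\cdot\boldsymbol{\nabla}T|_{\partial\Omega}=0$ eliminates the Fourier flux, and $\boldsymbol{n}\cdot\boldsymbol{\nabla}\mu_c|_{\partial\Omega}=0$ eliminates the $\mu_c\boldsymbol{J}/\rho_1$ flux, so only the volumetric dissipation terms survive.

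The main obstacle, and essentially the only point requiring genuine care, is the accounting of the capillary terms: the contributions of $\tfrac{1}{\rm We}\lambda_s\epsilon\boldsymbol{\nabla}\cdot(\boldsymbol{\nabla}\psi\,\partial_t\psi)$ from $\partial_t\hat{s}$, the $-\tfrac{1}{T}\lambda_u\epsilon\boldsymbol{\nabla}\cdot(\boldsymbol{\nabla}\psi\,\partial_t\psi)$ piece coming from dividing $src$ by $T$, the $\hat{f}\mathbf{I}-\lambda_f\epsilon(\boldsymbol{\nabla}\psi\otimes\boldsymbol{\nabla}\psi)-\mu_0\psi\mathbf{I}$ pieces inside $\mathbf{T}$ from (\ref{stress-tensor-nodim}), and the Korteweg pieces inside $\boldsymbol{q}_E$ must all combine so that what remains inside the divergence is exactly $\tfrac{1}{T}\lambda_f\epsilon\boldsymbol{\nabla}\psi\,\partial_t\psi$ plus an analogous velocity-weighted piece, both of which are then killed by the boundary conditions. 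Once that cancellation is verified termwise, multiplying by $1/{\rm Ec}$ where appropriate (since $\hat{s}$ carries a $1/{\rm Ec}$ prefactor through $s$) produces the stated coefficients $k/({\rm Ec}\,\mathrm{Pe}_T)$, $1/\mathrm{Re}$, and $m_\psi/\mathrm{Pe}_\psi$, and manifest non-negativity of each term finishes the proof.
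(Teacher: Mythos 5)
Your proposal is correct, but it is organized differently from the proof the paper actually gives for Theorem~\ref{them-continous-entropy}. You run the Section~2.4 calculation forward: differentiate $\hat{s}$ by the chain rule as in (\ref{entropy_derivative_0}), eliminate $\partial_t T$ through the energy equation and the relation (\ref{partial_u_T}), collapse with $\hat f=\hat u-T\hat s$, $\lambda_f=\lambda_u-T\lambda_s$ and the Korteweg identity, and then observe that the closures for $\mathbf{q}$, $\boldsymbol{J}$, $\boldsymbol{q}_E$ reduce $S_{inc}$ in (\ref{Sgeneration}) to the three quadratic dissipation terms. The paper's proof instead works entirely in the nondimensional variables and in the opposite order: it tests the Cahn--Hilliard equation (\ref{sys-phase-nodim}) with $\mu_0$ and the divergence constraint (\ref{sys-mass-nodim}) with $p$ to assemble $\mu_c$ and extract $-\tfrac{m_\psi}{\mathrm{Pe}_\psi}|\boldsymbol{\nabla}\mu_c|^2$ plus divergences, then subtracts the result from the energy equation (\ref{rc_entropy_12-1}) and uses the identities $\partial_t u-\tfrac{\partial f}{\partial\psi}\partial_t\psi=T\partial_t s$, etc., to land on the pointwise balance (\ref{rc_entropy_16}) for $\partial_t\hat s$. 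The two routes are algebraically equivalent; yours buys economy by reusing the already-validated Section~2.4 algebra and makes the role of the closures transparent, but it requires you to redo the nondimensionalization bookkeeping (which you acknowledge) since Section~2.4 is dimensional. The paper's route is self-contained in the nondimensional system and, more importantly, is structured to parallel the discrete proof of Theorem~\ref{discrete-conservation-law} term by term, which is presumably why the authors chose it. Your handling of the boundary terms is complete: with $\boldsymbol{v}|_{\partial\Omega}=0$, $\boldsymbol{n}\cdot\boldsymbol{\nabla}\psi|_{\partial\Omega}=0$, $\boldsymbol{n}\cdot\boldsymbol{\nabla}\mu_c|_{\partial\Omega}=0$ and $\boldsymbol{n}\cdot\boldsymbol{\nabla}T|_{\partial\Omega}=0$ every surface flux in the entropy balance, including $\boldsymbol{q}_E\cdot\boldsymbol{n}$ itself, vanishes, so only the non-negative volumetric terms survive.
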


\begin{proof}
Multiplying Eq.~(\ref{sys-phase-nodim}) by $\mu_{0}$, we obtain
\begin{align} \label{rc_entropy_0}
\frac{\partial \psi}{\partial t}  \mu_{0}+\boldsymbol{\nabla} \cdot(\psi \boldsymbol{v})\mu_{0} =&\frac{1}{\mathrm{Pe}_{\psi}}  \boldsymbol{\nabla} \cdot \left(m_{\psi} \boldsymbol{\nabla} {\mu}_{c}\right)\mu_{0}.
\end{align}
Using the definitions of $\mu_{0}$ (Eq.~(\ref{continue_mu_0-nodim})) and $w$ (Eq.~(\ref{continue_w-nodim})), the first term leads to
\begin{align} \label{rc_entropy_1}
\frac{\partial \psi}{\partial t} \mu_{0}=\frac{\partial \psi}{\partial t}\frac{\partial f}{\partial \psi}+\frac{1}{\rm We}\lambda_{f}(T)\frac{\partial \psi}{\partial t}\frac{W^{\prime}(\psi)}{\epsilon}-\frac{1}{\rm We}\lambda_{f}(T)\frac{\partial \psi}{\partial t}\epsilon \Delta \psi.
\end{align}
Furthermore, we note that
\begin{align} \label{rc_entropy_3}
\lambda_{f}(T)\frac{\partial \psi}{\partial t}\frac{W^{\prime}(\psi)}{\epsilon}&=\lambda_{f}(T)\frac{1}{{\epsilon}}\frac{\partial W}{\partial t},
\end{align}
and
\begin{align} \label{rc_entropy_4}
-\lambda_{f}(T)\frac{\partial \psi}{\partial t} \epsilon \Delta \psi
=&-\lambda_{f}(T) \epsilon\boldsymbol{\nabla}\cdot(\frac{\partial \psi}{\partial t}\boldsymbol{\nabla}\psi)+\frac{1}{2}\lambda_{f}(T) \epsilon \frac{\partial |\boldsymbol{\nabla} \psi|^2}{\partial t}.
\end{align}
Substituting Eqs.~(\ref{rc_entropy_3}) and (\ref{rc_entropy_4}) into Eq.~(\ref{rc_entropy_1}), and using the definition of $\delta$ (Eq.~(\ref{definition_delta_n+1_nodim})), the first term of Eq.~(\ref{rc_entropy_0}) can be rewritten as
\begin{align}
\frac{\partial \psi}{\partial t} \mu_{0}
&=\frac{\partial \psi}{\partial t}\frac{\partial f}{\partial \psi}+\frac{1}{\rm We}\lambda_{f}(T)\frac{\partial \delta}{\partial t}
-\frac{1}{\rm We}\lambda_{f}(T)\epsilon\boldsymbol{\nabla}\cdot(\frac{\partial \psi}{\partial t}\boldsymbol{\nabla}\psi). \label{rc_entropy_6}
\end{align}
The second term of Eq.~(\ref{rc_entropy_0}) leads to
\begin{align}\label{rc_entropy_7}
\boldsymbol{\nabla} \cdot(\psi \boldsymbol{v})\mu_{0}&=\mu_{0}\psi (\boldsymbol{\nabla}\cdot \boldsymbol{v})+  \mu_{0}(\boldsymbol{v} \cdot \boldsymbol{\nabla} \psi) \notag\\
&=\mu_{0}\psi (\boldsymbol{\nabla}\cdot \boldsymbol{v})+\frac{\partial f}{\partial \psi}(\boldsymbol{v} \cdot \boldsymbol{\nabla} \psi)+\frac{1}{\rm We}\lambda_{f}(T)(\boldsymbol{v} \cdot \boldsymbol{\nabla} \psi) w,
\end{align}
where Eq.~(\ref{continue_mu_0-nodim}) (the definition of $\mu_{0}$) is used. Furthermore, we obtain
\begin{align}\label{rc_entropy_8}
\lambda_{f}(T) (\boldsymbol{v} \cdot \boldsymbol{\nabla} \psi)w&=\lambda_{f}(T)(\boldsymbol{v} \cdot \boldsymbol{\nabla} \psi)\frac{W^{\prime}(\psi)}{\epsilon} -\lambda_{f}(T) (\boldsymbol{v} \cdot \boldsymbol{\nabla} \psi)\epsilon \Delta \psi \notag\\
&=\lambda_{f}(T)(\boldsymbol{v} \cdot \boldsymbol{\nabla} \delta)-\lambda_{f}(T)\epsilon \boldsymbol{\nabla}\cdot \left((\boldsymbol{\nabla} \psi\otimes \boldsymbol{\nabla} \psi)\cdot \boldsymbol{v}\right) \notag\\
&+\lambda_{f}(T)\epsilon\boldsymbol{\nabla} \boldsymbol{v}:(\boldsymbol{\nabla} \psi\otimes \boldsymbol{\nabla} \psi),
\end{align}
where the following identity and the definition of $\delta$ (Eq.~(\ref{definition_delta_n+1_nodim})) are used
\begin{align}
-(\boldsymbol{v} \cdot \boldsymbol{\nabla} \psi) \Delta \psi&=-\boldsymbol{\nabla}\cdot \left((\boldsymbol{\nabla} \psi\otimes \boldsymbol{\nabla} \psi)\cdot \boldsymbol{v}\right)+\boldsymbol{\nabla} \boldsymbol{v}:(\boldsymbol{\nabla} \psi\otimes \boldsymbol{\nabla} \psi)  \notag\\
&+\boldsymbol{v} \cdot \boldsymbol{\nabla}\frac{|\boldsymbol{\nabla}\psi|^{2}}{2}.
\end{align}
Substituting Eq.~(\ref{rc_entropy_8}) into Eq.~(\ref{rc_entropy_7}), the second term of Eq.~(\ref{rc_entropy_0}) can be rewritten as
\begin{align}\label{rc_entropy_9}
\boldsymbol{\nabla} \cdot(\psi \boldsymbol{v})\mu_{0}
&=\mu_{0}\psi (\boldsymbol{\nabla}\cdot \boldsymbol{v})+\frac{\partial f}{\partial \psi}(\boldsymbol{v} \cdot \boldsymbol{\nabla} \psi) +\frac{1}{\rm We}\lambda_{f}(T)(\boldsymbol{v} \cdot \boldsymbol{\nabla} \delta)\notag\\
&-\frac{1}{\rm We}\lambda_{f}(T)\epsilon \boldsymbol{\nabla}\cdot \left((\boldsymbol{\nabla} \psi\otimes \boldsymbol{\nabla} \psi)\cdot \boldsymbol{v}\right)+\frac{1}{\rm We}\lambda_{f}(T)\epsilon\boldsymbol{\nabla} \boldsymbol{v}:(\boldsymbol{\nabla} \psi\otimes \boldsymbol{\nabla} \psi).
\end{align}
Substituting Eqs.~(\ref{rc_entropy_6}) and (\ref{rc_entropy_9}) into Eq.~(\ref{rc_entropy_0}), we obtain
\begin{align} \label{rc_entropy_10}
\frac{\partial \psi}{\partial t}\frac{\partial f}{\partial \psi}&=-\frac{\partial f}{\partial \psi}(\boldsymbol{v} \cdot \boldsymbol{\nabla} \psi)-\frac{1}{\rm We}\lambda_{f}(T)\frac{\partial \delta}{\partial t} \notag\\
&+\frac{1}{\rm We}\lambda_{f}(T)\epsilon\boldsymbol{\nabla}\cdot \left(\frac{\partial \psi}{\partial t}\boldsymbol{\nabla}\psi+(\boldsymbol{\nabla} \psi\otimes \boldsymbol{\nabla} \psi)\cdot \boldsymbol{v}\right) \notag\\
&-\mu_{0}\psi (\boldsymbol{\nabla}\cdot \boldsymbol{v})-\frac{1}{\rm We}\lambda_{f}(T)(\boldsymbol{v} \cdot \boldsymbol{\nabla} \delta) \notag\\
&-\frac{1}{\rm We}\lambda_{f}(T)\epsilon\boldsymbol{\nabla} \boldsymbol{v}:(\boldsymbol{\nabla} \psi\otimes \boldsymbol{\nabla} \psi)+\frac{1}{\mathrm{Pe}_{\psi}}  \boldsymbol{\nabla} \cdot \left(m_{\psi} \boldsymbol{\nabla} {\mu}_{c}\right)\mu_{0}.
\end{align}
Multiplying Eq.~(\ref{sys-mass-nodim}) by $p$, and adding it to Eq.~(\ref{rc_entropy_10}), we obtain
\begin{align} \label{rc_entropy_12}
\frac{\partial \psi}{\partial t}\frac{\partial f}{\partial \psi}&=-\frac{\partial f}{\partial \psi}(\boldsymbol{v} \cdot \boldsymbol{\nabla} \psi)-\frac{1}{\rm We}\lambda_{f}(T)\frac{\partial \delta}{\partial t}  \notag\\
&+\frac{1}{\rm We}\lambda_{f}(T)\epsilon\boldsymbol{\nabla}\cdot\left(\frac{\partial \psi}{\partial t}\boldsymbol{\nabla}\psi+(\boldsymbol{\nabla} \psi\otimes \boldsymbol{\nabla} \psi)\cdot \boldsymbol{v}\right) \notag\\
&-\frac{1}{\rm We}\lambda_{f}(T)(\boldsymbol{v} \cdot \boldsymbol{\nabla} \delta) \notag\\
&-\frac{1}{\rm We}\lambda_{f}(T)\epsilon\boldsymbol{\nabla} \boldsymbol{v}:(\boldsymbol{\nabla} \psi\otimes \boldsymbol{\nabla} \psi) \notag\\
&-\mu_{0}\psi (\boldsymbol{\nabla}\cdot \boldsymbol{v})-p(\boldsymbol{\nabla} \cdot \boldsymbol{v}) \notag\\
&+\frac{1}{\mathrm{Pe}_{\psi}}\boldsymbol{\nabla} \cdot (m_{\psi}\mu_{c} \boldsymbol{\nabla}  \mu_{c} ) -\frac{m_{\psi}}{\mathrm{Pe}_{\psi}}|\boldsymbol{\nabla}\mu_{c}|^2,
\end{align}
where the following  identity is used
\begin{align}
&\frac{1}{\mathrm{Pe}_{\psi}}  \boldsymbol{\nabla} \cdot \left(m_{\psi} \boldsymbol{\nabla} {\mu}_{c}\right)\mu_{0}+\frac{ \alpha}{\mathrm{Pe}_{\psi}} \boldsymbol{\nabla} \cdot \left(m_{\psi} \boldsymbol{\nabla} {\mu}_{c}\right) p \notag\\
=&\frac{1}{\mathrm{Pe}_{\psi}}  \boldsymbol{\nabla} \cdot \left(m_{\psi} \boldsymbol{\nabla} {\mu}_{c}\right)\mu_{c}\notag\\
=&\frac{1}{\mathrm{Pe}_{\psi}}\boldsymbol{\nabla} \cdot (m_{\psi}\mu_{c}\boldsymbol{\nabla}  \mu_{c} ) -\frac{m_{\psi}}{\mathrm{Pe}_{\psi}}|\boldsymbol{\nabla}\mu_{c}|^2.\notag
\end{align}
In addition, with the help of the definition of $u$ (\ref{non-dimensional-internal-en}), Eq.~(\ref{sys-energy-nodim}) can be rewritten as
\begin{align} \label{rc_entropy_12-1}
\frac{\partial u}{\partial t}+\boldsymbol{\nabla} \cdot(u\boldsymbol{v})=-\frac{1}{{\rm Ec}}\boldsymbol{\nabla} \cdot \boldsymbol{q}_{E}+\boldsymbol{\nabla} \boldsymbol{v}: \mathbf{T}- \frac{1}{{\rm We}}\lambda_{u}\left(\frac{\partial \delta}{\partial t}+\boldsymbol{\nabla}\cdot (\delta \boldsymbol{v})\right).
\end{align}
Subtracting Eq.~(\ref{rc_entropy_12}) from Eq.~(\ref{rc_entropy_12-1}), we obtain
\begin{align}  \label{rc_entropy_13-1}
\frac{\partial u}{\partial t}-\frac{\partial f}{\partial \psi}\frac{\partial \psi}{\partial t}&=-\boldsymbol{v}\cdot \boldsymbol{\nabla} u +\frac{\partial f}{\partial \psi}(\boldsymbol{v} \cdot \boldsymbol{\nabla} \psi)
-\frac{1}{{\rm We}}T\lambda_{s}\frac{\partial \delta}{\partial t} \notag\\
&-\frac{1}{{\rm We}}T\lambda_{s}(\boldsymbol{v}\cdot \boldsymbol{\nabla} \delta) +  \frac{1}{{\rm Ec}~\mathrm{Pe}_{T}}\boldsymbol{\nabla} \cdot (k\boldsymbol{\nabla} T) \notag\\
&+\frac{1}{\rm We}T\lambda_{s}\epsilon \boldsymbol{\nabla} \cdot \left(\boldsymbol{\nabla}\psi  \frac{\partial \psi}{\partial t} +\boldsymbol{\nabla}\psi (\boldsymbol{v}\cdot \boldsymbol{\nabla} \psi) \right) \nonumber
    \\
&-T\hat{s}(\boldsymbol{\nabla} \cdot \boldsymbol{v})+ \frac{1}{\mathrm{Re}} \mathbf{\tau}:\boldsymbol{\nabla} \boldsymbol{v}+\frac{m_{\psi}}{\mathrm{Pe}_{\psi}}|\boldsymbol{\nabla}\mu_{c}|^2,
\end{align}
where the following identities are used
\begin{align}
\frac{1}{{\rm We}}\lambda_{u}\frac{\partial \delta}{\partial t}-\frac{1}{\rm We}\lambda_{f}(T)\frac{\partial \delta}{\partial t}&=\frac{1}{\rm We}T\lambda_{s}\frac{\partial \delta}{\partial t},\notag\\
\boldsymbol{\nabla} \cdot(u\boldsymbol{v})+ \frac{1}{{\rm We}}\lambda_{u}\boldsymbol{\nabla}\cdot (\delta \boldsymbol{v})-\boldsymbol{\nabla}\boldsymbol{v}: \hat{f}\mathbf{I}&=
\boldsymbol{v}\cdot \boldsymbol{\nabla} u +T\hat{s} (\boldsymbol{\nabla}\cdot\boldsymbol{v})+\frac{1}{{\rm We}}\lambda_{u} (\boldsymbol{v}\cdot\boldsymbol{\nabla} \delta),\notag\\
\frac{1}{{\rm We}}\lambda_{u}(\boldsymbol{v}\cdot \boldsymbol{\nabla}\delta)-\frac{1}{\rm We}\lambda_{f}(T)(\boldsymbol{v}\cdot \boldsymbol{\nabla}\delta)&=\frac{1}{\rm We}T\lambda_{s}(\boldsymbol{v}\cdot \boldsymbol{\nabla}\delta).\notag
\end{align}
Furthermore, Eq.~(\ref{rc_entropy_13-1}) can be rewritten as
    \begin{align}
 \frac{\partial \hat{s}}{\partial t}
&=-\boldsymbol{\nabla}\cdot(\hat{s}\boldsymbol{v}) +\frac{1}{{\rm Ec}~\mathrm{Pe}_{T}}\boldsymbol{\nabla} \cdot (\frac{1}{T}k\boldsymbol{\nabla} T)+ \frac{k}{{\rm Ec}~\mathrm{Pe}_{T}}\frac{|\boldsymbol{\nabla} T|^2}{T^2}
    \nonumber
    \\
&\quad +\frac{1}{\rm We}\lambda_{s}\epsilon \boldsymbol{\nabla} \cdot \left(\boldsymbol{\nabla}\psi  \frac{\partial \psi}{\partial t} +\boldsymbol{\nabla}\psi (\boldsymbol{v}\cdot \boldsymbol{\nabla} \psi) \right)
    \nonumber
    \\
&\quad +\frac{1}{\mathrm{Re}}\frac{\mathbf{\tau}:\boldsymbol{\nabla} \boldsymbol{v}}{T}+\frac{m_{\psi}}{\mathrm{Pe}_{\psi}}\frac{|\boldsymbol{\nabla}\mu_{c}|^2}{T},
     \label{rc_entropy_16}
    \end{align}
with the help of the following identities
\begin{align}
\frac{\partial u}{\partial t}-\frac{\partial \psi}{\partial t}\frac{\partial f}{\partial \psi}&=T\frac{\partial {s}}{\partial t},\notag\\
T\frac{\partial {s}}{\partial t}+\frac{1}{{\rm We}}T\lambda_{s}\frac{\partial \delta}{\partial t}&=T\frac{\partial \hat{s}}{\partial t},\notag\\
\boldsymbol{v}\cdot \boldsymbol{\nabla}u-\frac{\partial f}{\partial \psi}(\boldsymbol{v} \cdot \boldsymbol{\nabla} \psi)&=T\boldsymbol{v}\cdot \boldsymbol{\nabla}s,\notag\\
T\boldsymbol{v}\cdot \boldsymbol{\nabla}s+\frac{1}{{\rm We}}T\lambda_{s}(\boldsymbol{v}\cdot \boldsymbol{\nabla} \delta)+T\hat{s}(\boldsymbol{\nabla} \cdot \boldsymbol{v})&=T\boldsymbol{\nabla}\cdot(\hat{s}\boldsymbol{v}).\notag
\end{align}
Taking the integral over $\Omega$  and applying the divergence theorem with the boundary conditions in Eq.~(\ref{rc_bc}), we finally deduce the entropy increasing Eq.~(\ref{proof_entropy-nodim}).
\end{proof}


\section{Numerical method} \label{section-discrete-numerical-method}
We now present a first order, semi-decoupled, mass conserving and entropy increasing temporal discrete numerical method for solving the model Eqs.~(\ref{sys-vel-nodim}) -- (\ref{sys-energy-nodim}): 
\begin{align}
\rho^{n} \boldsymbol{v}_{\bar{t}}+\rho^{n} \boldsymbol{v}^{n}\cdot\boldsymbol{\nabla} \boldsymbol{v}^{n+1}=&\boldsymbol{\nabla} \cdot \mathbf{T}^{n+1}-\frac{\rho^{n}}{{\rm Fr}}\hat{\boldsymbol{z}},\label{discrete_momentum-conservation}\\
\boldsymbol{\nabla} \cdot \boldsymbol{v}^{n}=&\frac{ \alpha}{\mathrm{Pe}_{\psi}} \boldsymbol{\nabla} \cdot \left(m_{\psi}^{n} \boldsymbol{\nabla} {\mu}_{c}^{n+1}\right),\label{discrete_mass-conservation}\\
\psi_{\bar{t}}+\boldsymbol{\nabla} \cdot(\psi^{n+1} \boldsymbol{v}^{n}) =&\frac{1}{\mathrm{Pe}_{\psi}}  \boldsymbol{\nabla} \cdot \left(m_{\psi}^{n} \boldsymbol{\nabla} {\mu}_{c}^{n+1}\right), \label{discrete_volume_conservation}\\
(\rho C_h)^{n+1} T_{\bar{t}}+(\rho C_h)^{n+1} (\boldsymbol{v}^{n}\cdot \boldsymbol{\nabla} T^{n})=&-\boldsymbol{\nabla} \cdot \boldsymbol{q}_{E}^{n+1}+hs, \label{discrete_energy_conservation}
\end{align}
where
    \begin{align}
\Lambda^{n+1}&=\psi^{n+1}+\zeta_{\Lambda}(1-\psi^{n+1}), \quad \zeta_{\Lambda}=\Lambda_{2}/\Lambda_{1}~~\mbox{for}~~\Lambda=\rho,\mu,C_h,k,
    \label{discrete_property}
    \\
\mathbf{T}^{n+1}&=\mathbf{m}^{n+1}+\frac{1}{\mathrm{Re}} \mathbf{\tau}^{n+1}+ \hat{f}^{n+1} \mathbf{I},
    \\
\mathbf{m}^{n+1}&=-p^{n+1} \mathbf{I} -\frac{1}{\rm We}\lambda_{f}(T^{n}) \epsilon (\boldsymbol{\nabla} \psi^{n+1}\otimes \boldsymbol{\nabla} \psi^{n+1})-\mu_{0}^{n+1}\psi^{n+1}\mathbf{I},
    \\
\mu_{0}^{n+1}&=\frac{\partial f}{\partial \psi}+\frac{1}{\rm We}\lambda_{f}(T^{n}) w^{n+1},
    \label{mu_0_n+1}
    \\
\frac{\partial f}{\partial \psi}& =\frac{1}{{\rm Ec}} \frac{\rho_{1}-\rho_{2}}{\rho_{1}} C_h^{n+1}T^{n}\left(1-\ln \left(\frac{T^{n}}{T_0}\right)\right)
    \nonumber
    \\
&+\frac{1}{{\rm Ec}} \rho^{n} \frac{C_{h,1}-C_{h,2}}{C_{h,1}}T^{n}\left(1-\ln\left( \frac{T^{n}}{T_0}\right)\right),
    \label{discrete_f_psi}
    \\
w^{n+1}&=\frac{W^{\prime}(\psi^{n+1})}{\epsilon} -\epsilon \Delta \psi^{n+1},
    \label{definition_w_n+1}
    \\
W^{\prime}(\psi^{n+1})&=\psi^{n+1}(\psi^{n+1}-1)(\psi^{n+1}-\frac{1}{2}),
    \\
\mathbf{\tau}^{n+1}&=\mu^{n}\left(\boldsymbol{\nabla} \boldsymbol{v}^{n+1}+(\boldsymbol{\nabla} {\boldsymbol{v}^{n+1}})^{\top}\right)-\frac{2}{3} \mu^{n}(\boldsymbol{\nabla} \cdot \boldsymbol{v}^{n+1}) \mathbf{I},
    \\
\hat{f}^{n+1}&=f^{n+1}+\frac{1}{\rm We}\lambda_{f}(T^{n}) \delta^{n+1},
    \\
f^{n+1} &=\frac{1}{{\rm Ec}}\rho^{n+1} C_h^{n+1}T^{n}\left(1-\ln\left( \frac{T^{n}}{T_0}\right)\right),
    \\
\delta^{n+1}&=\frac{W(\psi^{n+1})}{\epsilon} +\epsilon \frac{|\boldsymbol{\nabla} \psi^{n+1}|^2}{2},
    \label{definition_delta_n+1-2}
    \\
W(\psi^{n+1})&=\frac{(\psi^{n+1})^2(1-\psi^{n+1})^2}{4},
    \\
\mu_{c}^{n+1}&=\mu_{0}^{n+1}+\alpha p^{n+1},
    \label{mu_c}
    \\
\boldsymbol{q}_{E}^{n+1}&=-\frac{1}{\mathrm{Pe}_{T}}k\frac{\boldsymbol{\nabla} T^{n+1}+\boldsymbol{\nabla} T^{n}}{2}-\frac{{\rm Ec}}{\mathrm{Pe}_{\psi}} m_{\psi}^{n}\mu_{c}^{n+1}\boldsymbol{\nabla} \mu_{c}^{n+1}
    \nonumber
    \\
&-\frac{{\rm Ec}}{{\rm We}} \lambda_{u} \epsilon \left(\boldsymbol{\nabla}\psi^{n+1} \psi_{\bar{t}}+(\boldsymbol{\nabla}\psi^{n+1} \otimes \boldsymbol{\nabla} \psi^{n+1}) \cdot \boldsymbol{v}^{n} \right),\\
hs&=-{\rm Ec}\frac{\partial u}{\partial \psi} \psi_{\bar{t}}-{\rm Ec}\frac{\partial \tilde{u}}{\partial \psi}(\boldsymbol{v}^{n}\cdot \boldsymbol{\nabla} \psi^{n+1})+{\rm Ec}~\mathbf{m}^{n+1}:\boldsymbol{\nabla} \boldsymbol{v}^{n}
    \nonumber
    \\
& \quad +\frac{{\rm Ec}}{\mathrm{Re}}\mathbf{\tau}^{n+1}:\boldsymbol{\nabla} \boldsymbol{v}^{n+1}-{\rm Ec}~T^{n}\tilde{s}^{n+1}(\boldsymbol{\nabla} \cdot \boldsymbol{v}^{n})-\frac{{\rm Ec}}{\rm We}\lambda_{u}\delta_{\bar{t}}
    \nonumber
    \\
&\quad -\frac{{\rm Ec}}{\rm We} \lambda_{u}(\boldsymbol{v}^{n}\cdot \boldsymbol{\nabla} \delta^{n+1})+corr^{n},
    \\
\frac{\partial u}{\partial \psi} &= \frac{1}{{\rm Ec}} \frac{\rho_{1}-\rho_{2}}{\rho_{1}} C_h^{n+1} T^{n} +\frac{1}{{\rm Ec}} \rho^{n} \frac{C_{h,1}-C_{h,2}}{C_{h,1}} T^{n},
    \label{discrete_u_psi}
    \\
\frac{\partial \tilde{u}}{\partial \psi}&=\frac{\partial f}{\partial \psi}+T^{n}\frac{\partial \tilde{S}^{n+1}}{\partial \psi^{n+1}},
    \label{discrete_u_tile}
    \\
\tilde{S}^{n+1}&=\frac{1}{{\rm Ec}}\rho^{n+1} C_h^{n+1}\ln \left(\frac{T^{n}}{T_0}\right),
    \\
\tilde{s}^{n+1}&=\tilde{S}^{n+1}+\frac{1}{\rm We}\lambda_{s}\delta^{n+1},
    \label{tilde_s}
    \\
corr^{n}&=corr_{1}+corr_{2}+corr_{3}+corr_{4},
    \label{corr}
    \\
corr_{1}&=\frac{{\rm Ec}}{\rm We}\lambda_{f}(T^{n})\frac{({\psi}^{n+1}-\psi^{n})^2}{8\delta t\epsilon},
    \\
corr_{2}&=-\frac{{\rm Ec}}{\rm We}\lambda_{f}(T^{n})\epsilon\frac{|\boldsymbol{\nabla}\psi^{n+1}
-\boldsymbol{\nabla}{\psi}^{n}|^2}{2\delta t},
    \\
corr_{3}&=\frac{(\rho C_h)^{n+1}T^{n} (T^{n+1}-T^{n})^{2}}{2(T_{min})^{2} \delta t},
    \\
corr_{4}&=\frac{k|\boldsymbol{\nabla} T^{n+1}|^{2}-k|\boldsymbol{\nabla}T^{n}|^2}{4~\mathrm{Pe}_{T}~T^{n}}.
    \end{align}
Here, $\tilde{S}^{n+1}$ is the approximation of $s^{n+1}$, and $\tilde{s}^{n+1}$ is the approximation of ${\hat{s}}^{n+1}$. $\boldsymbol{v}^{n}$, $p^{n}$, $\psi^n$, $\mu_{0}^n$ and $T^n$ are the approximations of $\boldsymbol{v}(n\delta t)$, $p(n\delta t)$, $\psi(n\delta t)$, $\mu_{0}(n\delta t)$ and $T(n\delta t)$ at time $t=n\delta t$, respectively, and $\delta t$ is the time step.
$*_{\bar{t}}=(*^{n+1}-*^{n})/\delta t$, and $T_{min}$ is the global minimum value of the initial temperature.

Next, we prove that the numerical method (\ref{discrete_momentum-conservation}) -- (\ref{discrete_energy_conservation}) preserves the temporal discrete mass conservation and entropy increasing.
\begin{thm}\label{discrete-conservation-law} Consider a closed system in $\Omega$ with following boundary conditions
\begin{align}
\boldsymbol{v}^{n}|_{\partial\Omega}=0,~~~\boldsymbol{n}\cdot \boldsymbol{\nabla}\mu_{c}^{n}|_{\partial\Omega}=\boldsymbol{n}\cdot\boldsymbol{\nabla}\psi^{n}|_{\partial\Omega}=\boldsymbol{n}\cdot\boldsymbol{\nabla} {T^{n}}|_{\partial\Omega}= 0, \label{proof_bc_discrete}
\end{align}
the numerical method (\ref{discrete_momentum-conservation}) -- (\ref{discrete_energy_conservation}) preserves the following temporal discrete mass conservation
\begin{align}
\int_{\Omega}\rho_{\bar{t}}~{d}{x}&=0, \label{dis-mass-law}
\end{align}
and discrete entropy increase corresponding to the continuous counterpart (\ref{proof_entropy-nodim})
\begin{align}
S_{\bar{t}}&=\int_{\Omega}\frac{\hat{s}^{n+1}-\hat{s}^{n}}{\delta t}d {x}=\int_{\Omega}\bigg\{\frac{k}{{\rm Ec}~\mathrm{Pe}_{T}}\frac{|\boldsymbol{\nabla}  T^{n+1}+\boldsymbol{\nabla}T^{n}|^2}{4(T^{n})^2}\notag+\frac{1}{\mathrm{Re}}\frac{\mathbf{\tau}^{n+1}:\boldsymbol{\nabla} \boldsymbol{v}^{n+1}}{T^{n}}\notag\\&+\frac{m_{\psi}^{n}}{\mathrm{Pe}_{\psi}}\frac{|\boldsymbol{\nabla}\mu_{c}^{n+1}|^2}{T^{n}}+res\bigg\} d {x} \geq0,\label{dis-entropy-law}
\end{align}
where $\hat{s}^{n+1}=\frac{1}{{\rm Ec}} \rho^{n+1} C_h^{n+1}\ln \left(\frac{T^{n+1}}{T_0}\right) + \frac{1}{{\rm We}}\lambda_{s}\delta^{n+1}$, and  $res$ is defined in Eq.~(\ref{res}).
\end{thm}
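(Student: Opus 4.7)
The plan is to mirror the continuous-level arguments of Theorems \ref{them-continous-mass-energy} and \ref{them-continous-entropy}, replacing every continuous chain and product rule by its discrete counterpart. Each such replacement produces a commutator or remainder term; the correctors $corr_1,\ldots,corr_4$ gathered in (\ref{corr}) are designed precisely to absorb those remainders and restore a non-negative entropy production. First I would dispatch the mass identity, then attack the entropy production, which is the real work.

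\textbf{Mass conservation.} Multiplying the discrete phase equation (\ref{discrete_volume_conservation}) by $\alpha$ and subtracting the divergence constraint (\ref{discrete_mass-conservation}) produces
\begin{equation*}
\alpha\,\psi_{\bar t} + \alpha\,\boldsymbol{\nabla}\cdot(\psi^{n+1}\boldsymbol{v}^{n}) - \boldsymbol{\nabla}\cdot\boldsymbol{v}^{n}=0.
\end{equation*}
Multiplying through by $-\rho_{2}$ and invoking the linearity (\ref{discrete_property}) together with $\alpha=(\rho_{2}-\rho_{1})/\rho_{2}$ collapses this to the discrete continuity identity $\rho_{\bar t}+\boldsymbol{\nabla}\cdot(\rho^{n+1}\boldsymbol{v}^{n})=0$. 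Integrating over $\Omega$, the divergence term vanishes by $\boldsymbol{v}^{n}|_{\partial\Omega}=0$ in (\ref{proof_bc_discrete}), which proves (\ref{dis-mass-law}).

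\textbf{Entropy increase.} Following the template of Theorem \ref{them-continous-entropy}, I would test (\ref{discrete_volume_conservation}) against $\mu_{0}^{n+1}$ and (\ref{discrete_mass-conservation}) against $p^{n+1}$, add the results, and use $\mu_{c}^{n+1}=\mu_{0}^{n+1}+\alpha p^{n+1}$ from (\ref{mu_c}) to produce the dissipation block $\frac{m_{\psi}^{n}}{\mathrm{Pe}_{\psi}}|\boldsymbol{\nabla}\mu_{c}^{n+1}|^{2}$ plus pure-divergence pieces. On the left-hand side, the $\psi_{\bar t}\mu_{0}^{n+1}$ factor is expanded via (\ref{mu_0_n+1})--(\ref{definition_w_n+1}) into a chain-rule-like contribution for $f$ and for $\delta^{n+1}$, while $\boldsymbol{\nabla}\cdot(\psi^{n+1}\boldsymbol{v}^{n})\mu_{0}^{n+1}$ is processed by the vector identity $-(\boldsymbol{v}^{n}\cdot\boldsymbol{\nabla}\psi^{n+1})\Delta\psi^{n+1}=-\boldsymbol{\nabla}\cdot\bigl((\boldsymbol{\nabla}\psi^{n+1}\otimes\boldsymbol{\nabla}\psi^{n+1})\cdot\boldsymbol{v}^{n}\bigr)+\boldsymbol{\nabla}\boldsymbol{v}^{n}:(\boldsymbol{\nabla}\psi^{n+1}\otimes\boldsymbol{\nabla}\psi^{n+1})+\boldsymbol{v}^{n}\cdot\boldsymbol{\nabla}\frac{|\boldsymbol{\nabla}\psi^{n+1}|^{2}}{2}$, which mirrors the continuous manipulation leading from (\ref{rc_entropy_7}) to (\ref{rc_entropy_9}). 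Subtracting the resulting identity from the discrete energy equation (\ref{discrete_energy_conservation}) with the heat source $hs$ substituted in, and then dividing by $T^{n}$, yields a discrete evolution of the approximate entropy $\tilde{s}^{n+1}$ defined in (\ref{tilde_s}); after reorganising with the help of the auxiliary quantities $\partial u/\partial\psi$ and $\partial\tilde u/\partial\psi$ from (\ref{discrete_u_psi})--(\ref{discrete_u_tile}), one arrives at a discrete analogue of (\ref{rc_entropy_16}) whose right-hand side is the claimed sum of squares together with the residual $res$.

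\textbf{Main obstacle.} The hard part is the bookkeeping of four families of discrete chain-rule errors and showing that each is neutralised by one of the correctors. The identities $W(\psi^{n+1})-W(\psi^{n})\neq W'(\psi^{n+1})(\psi^{n+1}-\psi^{n})$ and $\tfrac{1}{2}(|\boldsymbol{\nabla}\psi^{n+1}|^{2}-|\boldsymbol{\nabla}\psi^{n}|^{2})\neq\boldsymbol{\nabla}\psi^{n+1}\cdot(\boldsymbol{\nabla}\psi^{n+1}-\boldsymbol{\nabla}\psi^{n})$ leave signed quadratic remainders which must match $corr_{1}$ and $corr_{2}$; passing from $\hat{s}^{n+1}$ back from its approximation $\tilde{s}^{n+1}$ via a Taylor expansion of $\ln(T^{n+1}/T_{0})$ around $T^{n}$ produces a remainder bounded below by $-corr_{3}$ once the pointwise bound $T\geq T_{\min}$ is used; and the symmetric treatment of the heat flux $\boldsymbol{q}_{E}^{n+1}$, which averages $\boldsymbol{\nabla}T^{n+1}$ and $\boldsymbol{\nabla}T^{n}$, generates the telescoping piece $corr_{4}$ needed to upgrade $\boldsymbol{\nabla}T^{n}\cdot(\boldsymbol{\nabla}T^{n+1}+\boldsymbol{\nabla}T^{n})$ to $|\boldsymbol{\nabla}T^{n+1}+\boldsymbol{\nabla}T^{n}|^{2}$. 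I would verify term-by-term that these four remainders, together with the correctors lumped into $hs$, collect exactly into the non-negative residual $res$ stated in the theorem. Finally, integrating in space and applying the boundary conditions (\ref{proof_bc_discrete}) removes every divergence, yielding (\ref{dis-entropy-law}).
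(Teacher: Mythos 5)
Your proposal is correct and follows essentially the same route as the paper's proof: multiplying the discrete phase equation by $\alpha$ to obtain the discrete continuity identity for the mass law, then testing (\ref{discrete_volume_conservation}) against $\mu_{0}^{n+1}$ and (\ref{discrete_mass-conservation}) against $p^{n+1}$, combining with the discrete energy equation, and pairing the four discrete chain-rule remainders ($err_{1}$--$err_{4}$ in the paper's notation) with $corr_{1}$--$corr_{4}$ exactly as you describe, with the non-negativity of $res$ resting on $W''(\xi)\geq-\tfrac{1}{4}$, $\lambda_{f}(T^{n})\geq 0$ and $T_{\min}\leq\tilde{T}$. No substantive differences to report.
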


\begin{proof}
We first show the proof for the discrete mass conservation.
Multiplying Eq.~(\ref{discrete_volume_conservation}) by $\alpha$ and using Eq.~(\ref{discrete_mass-conservation}), we obtain
\begin{align}
&\alpha\psi_{\bar{t}}+\alpha\boldsymbol{\nabla}\cdot({\psi^{n+1}}{\boldsymbol{v}^{n}}) =\boldsymbol{\nabla}\cdot {\boldsymbol{v}}^{n}. \label{proof-discrete-mass-conservation}
\end{align}
With the help of the definitions of $\alpha$ (\ref{definition_alpha-nodim}) and $\rho^{n+1}$ (\ref{discrete_property}), Eq.~(\ref{proof-discrete-mass-conservation}) can be rewritten as
\begin{align}
&\rho_{\bar{t}}+\boldsymbol{\nabla}\cdot({\rho}^{n+1} \boldsymbol{v}^{n}) = 0.\label{dis-mass-con-1st}
\end{align}
Taking the integral over $\Omega$ with the boundary conditions in (\ref{proof_bc_discrete}), we obtain the discrete mass conservation (\ref{dis-mass-law}).

We now show the proof for the discrete entropy increasing law.
Multiplying Eq.~\eqref{discrete_volume_conservation} by $\mu_{0}^{n+1}$, we obtain
\begin{align} \label{discrete_entropy_0}
\psi_{\bar{t}}  \mu_{0}^{n+1}+\boldsymbol{\nabla} \cdot(\psi^{n+1} \boldsymbol{v}^{n})\mu_{0}^{n+1} =&\frac{1}{\mathrm{Pe}_{\psi}}  \boldsymbol{\nabla} \cdot \left(m_{\psi}^{n} \boldsymbol{\nabla} {\mu}_{c}^{n+1}\right)\mu_{0}^{n+1}.
\end{align}
Using Eq.~(\ref{mu_0_n+1}) (the definition of $\mu_{0}^{n+1}$), the first term leads to
\begin{align} \label{discrete_entropy_1}
\psi_{\bar{t}} \mu_{0}^{n+1}&=\psi_{\bar{t}}\frac{\partial f}{\partial \psi}+\frac{1}{\rm We}\lambda_{f}(T^{n})\psi_{\bar{t}}\frac{W^{\prime}(\psi^{n+1})}{\epsilon}\notag\\
&-\frac{1}{\rm We}\lambda_{f}(T^{n})\psi_{\bar{t}}\epsilon \Delta \psi^{n+1}.
\end{align}
Furthermore, we have
\begin{align} \label{discrete_entropy_3}
\frac{1}{\rm We}\lambda_{f}(T^{n})\psi_{\bar{t}}\frac{W^{\prime}(\psi^{n+1})}{\epsilon}&=\frac{1}{\rm We}\lambda_{f}(T^{n})\frac{W_{\bar{t}}}{\epsilon}+err_1,\\
err_1 &=\frac{1}{\rm We}\lambda_{f}(T^{n})\frac{W''(\xi)
({\psi}^{n}-\psi^{n+1})^2}{2 \delta t \epsilon}, \label{err_1}
\end{align}
where $\xi$ is between $\psi^{n+1}$ and $\psi^{n}$,
and we have used the following Taylor expansion
\begin{align}
W'(\psi^{n+1})(\psi^{n+1}-{\psi}^{n})&=W(\psi^{n+1})-W({\psi}^{n})+\frac{1}{2}W''(\xi)({\psi}^{n}-\psi^{n+1})^2.
\end{align}
In addition, we have
\begin{align}
-\frac{1}{\rm We}\lambda_{f}(T^{n})\psi_{\bar{t}} \epsilon \Delta \psi^{n+1}&=-\frac{1}{\rm We}\lambda_{f}(T^{n}) \epsilon \boldsymbol{\nabla}\cdot(\frac{\psi^{n+1}-{\psi}^{n}}{\delta t}\boldsymbol{\nabla}\psi^{n+1}) \notag\\
&+\frac{1}{\rm We}\lambda_{f}(T^{n}) \epsilon \left(\frac{|\boldsymbol{\nabla}\psi^{n+1}|^2}{\delta t}-\frac{\boldsymbol{\nabla}
{\psi}^{n}\cdot\boldsymbol{\nabla}\psi^{n+1}}{\delta t} \right) \notag \\
&=-\frac{1}{\rm We}\lambda_{f}(T^{n}) \epsilon\boldsymbol{\nabla}\cdot(\psi_{\bar{t}}\boldsymbol{\nabla}\psi^{n+1})\notag \\
&+\frac{1}{\rm We}\lambda_{f}(T^{n}) \epsilon\frac{|\boldsymbol{\nabla}\psi|^{2}_{\bar{t}}}{2} +err_2, \label{discrete_entropy_4}\\
err_2&=\frac{1}{\rm We}\lambda_{f}(T^{n})\epsilon\frac{|\boldsymbol{\nabla}\psi^{n+1}-\boldsymbol{\nabla}{\psi}^{n}|^2}{2\delta t}. \label{err_2}
\end{align}
Substituting Eqs.~(\ref{discrete_entropy_3}) and (\ref{discrete_entropy_4}) into Eq.~(\ref{discrete_entropy_1}), 
 the first term of Eq.~(\ref{discrete_entropy_0}) can be rewritten as
\begin{align}\label{discrete_entropy_6}
\psi_{\bar{t}} \mu_{0}^{n+1}
&=\psi_{\bar{t}}\frac{\partial f}{\partial \psi}+\frac{1}{\rm We}\lambda_{f}(T^{n})\delta_{\bar{t}}
-\frac{1}{\rm We}\lambda_{f}(T^{n})\epsilon\boldsymbol{\nabla}\cdot(\psi_{\bar{t}}\boldsymbol{\nabla}\psi^{n+1}) \notag\\
&+err_1 +err_2.
\end{align}
The second term of Eq.~(\ref{discrete_entropy_0}) leads to
\begin{align}\label{discrete_entropy_7}
\boldsymbol{\nabla} \cdot(\psi^{n+1} \boldsymbol{v}^{n})\mu_{0}^{n+1}&=\mu_{0}^{n+1}\psi^{n+1} (\boldsymbol{\nabla}\cdot \boldsymbol{v}^{n})+\frac{\partial f}{\partial \psi}(\boldsymbol{v}^{n} \cdot \boldsymbol{\nabla} \psi^{n+1}) \notag\\
&+\frac{1}{\rm We}\lambda_{f}(T^{n})(\boldsymbol{v}^{n} \cdot \boldsymbol{\nabla} \psi^{n+1}) w^{n+1},
\end{align}
where the definition of $\mu_{0}^{n+1}$ (Eq.~(\ref{mu_0_n+1})) is used. Furthermore, we have
\begin{align}\label{discrete_entropy_9}
\lambda_{f}(T^{n}) (\boldsymbol{v}^{n} \cdot \boldsymbol{\nabla} \psi^{n+1})w^{n+1}&=\lambda_{f}(T^{n})(\boldsymbol{v}^{n} \cdot \boldsymbol{\nabla} \psi^{n+1})\frac{W^{\prime}(\psi^{n+1})}{\epsilon} \notag\\
&-\lambda_{f}(T^{n}) (\boldsymbol{v}^{n} \cdot \boldsymbol{\nabla} \psi^{n+1})\epsilon \Delta \psi^{n+1} \notag\\
&=\lambda_{f}(T^{n})(\boldsymbol{v}^{n} \cdot \boldsymbol{\nabla} \delta^{n+1}) \notag\\
&-\lambda_{f}(T^{n})\epsilon \boldsymbol{\nabla}\cdot \left((\boldsymbol{\nabla} \psi^{n+1}\otimes \boldsymbol{\nabla} \psi^{n+1})\cdot \boldsymbol{v}^{n}\right) \notag\\
&+\lambda_{f}(T^{n})\epsilon\boldsymbol{\nabla} \boldsymbol{v}^{n}:(\boldsymbol{\nabla} \psi^{n+1}\otimes \boldsymbol{\nabla} \psi^{n+1}),
\end{align}
where the following identity and the definition of $\delta^{n+1}$ (Eq.~(\ref{definition_delta_n+1-2})) are used
\begin{align}
-(\boldsymbol{v}^{n} \cdot \boldsymbol{\nabla} \psi^{n+1}) \Delta \psi^{n+1}&=-\boldsymbol{\nabla}\cdot \left((\boldsymbol{\nabla} \psi^{n+1}\otimes \boldsymbol{\nabla} \psi^{n+1})\cdot \boldsymbol{v}^{n}\right) \notag\\
&+\boldsymbol{\nabla} \boldsymbol{v}^{n}:(\boldsymbol{\nabla} \psi^{n+1}\otimes \boldsymbol{\nabla} \psi^{n+1})+\boldsymbol{v}^{n} \cdot \boldsymbol{\nabla}\frac{|\boldsymbol{\nabla}\psi^{n+1}|^{2}}{2}.
\end{align}
Substituting Eq.~(\ref{discrete_entropy_9}) into Eq.~(\ref{discrete_entropy_7}), the second term of Eq.~(\ref{discrete_entropy_0}) can be rewritten as
\begin{align}\label{discrete_entropy_10}
\boldsymbol{\nabla} \cdot(\psi^{n+1} \boldsymbol{v}^{n})\mu_{0}^{n+1}
&=\mu_{0}^{n+1}\psi^{n+1} (\boldsymbol{\nabla}\cdot \boldsymbol{v}^{n})+\frac{\partial f}{\partial \psi}(\boldsymbol{v}^{n} \cdot \boldsymbol{\nabla} \psi^{n+1}) \notag\\
&+\frac{1}{\rm We}\lambda_{f}(T^{n})(\boldsymbol{v}^{n} \cdot \boldsymbol{\nabla} \delta^{n+1}) \notag\\
&-\frac{1}{\rm We}\lambda_{f}(T^{n})\epsilon \boldsymbol{\nabla}\cdot \left((\boldsymbol{\nabla} \psi^{n+1}\otimes \boldsymbol{\nabla} \psi^{n+1})\cdot \boldsymbol{v}^{n}\right) \notag\\
&+\frac{1}{\rm We}\lambda_{f}(T^{n})\epsilon\boldsymbol{\nabla} \boldsymbol{v}^{n}:(\boldsymbol{\nabla} \psi^{n+1}\otimes \boldsymbol{\nabla} \psi^{n+1}).
\end{align}
fSubstituting Eqs.~(\ref{discrete_entropy_6}) and (\ref{discrete_entropy_10}) into Eq.~(\ref{discrete_entropy_0}), we obtain
\begin{align} \label{discrete_entropy_12}
\psi_{\bar{t}}\frac{\partial f}{\partial \psi}&=-\frac{\partial f}{\partial \psi}(\boldsymbol{v}^{n} \cdot \boldsymbol{\nabla} \psi^{n+1})-\frac{1}{\rm We}\lambda_{f}(T^{n})\delta_{\bar{t}} \notag\\
&+\frac{1}{\rm We}\lambda_{f}(T^{n})\epsilon\boldsymbol{\nabla}\cdot \left(\psi_{\bar{t}}\boldsymbol{\nabla}\psi^{n+1}+(\boldsymbol{\nabla} \psi^{n+1}\otimes \boldsymbol{\nabla} \psi^{n+1})\cdot \boldsymbol{v}^{n}\right) \notag\\
&-\mu_{0}^{n+1}\psi^{n+1} (\boldsymbol{\nabla}\cdot \boldsymbol{v}^{n})-\frac{1}{\rm We}\lambda_{f}(T^{n})(\boldsymbol{v}^{n} \cdot \boldsymbol{\nabla} \delta^{n+1}) \notag\\
&-\frac{1}{\rm We}\lambda_{f}(T^{n})\epsilon\boldsymbol{\nabla} \boldsymbol{v}^{n}:(\boldsymbol{\nabla} \psi^{n+1}\otimes \boldsymbol{\nabla} \psi^{n+1}) \notag\\
&+\frac{1}{\mathrm{Pe}_{\psi}}  \boldsymbol{\nabla} \cdot \left(m_{\psi}^{n} \boldsymbol{\nabla} {\mu}_{c}^{n+1}\right)\mu_{0}^{n+1}-err_1 -err_2.
\end{align}
Multiplying Eq.~(\ref{discrete_mass-conservation}) by $p^{n+1}$, we obtain
\begin{align} \label{discrete_entropy_13}
p^{n+1}(\boldsymbol{\nabla} \cdot \boldsymbol{v}^{n}) &=\frac{ \alpha}{\mathrm{Pe}_{\psi}} \boldsymbol{\nabla} \cdot \left(m_{\psi}^{n} \boldsymbol{\nabla} {\mu}_{c}^{n+1}\right) p^{n+1}.
\end{align}
Adding Eq.~(\ref{discrete_entropy_13}) to Eq.~(\ref{discrete_entropy_12}), we obtain
\begin{align} \label{discrete_entropy_14}
\psi_{\bar{t}}\frac{\partial f}{\partial \psi}&=-\frac{\partial f}{\partial \psi}(\boldsymbol{v}^{n} \cdot \boldsymbol{\nabla} \psi^{n+1})-\frac{1}{\rm We}\lambda_{f}(T^{n})\delta_{\bar{t}} \notag\\
&+\frac{1}{\rm We}\lambda_{f}(T^{n})\epsilon\boldsymbol{\nabla}\cdot\left(\psi_{\bar{t}}\boldsymbol{\nabla}\psi^{n+1}+(\boldsymbol{\nabla} \psi^{n+1}\otimes \boldsymbol{\nabla} \psi^{n+1})\cdot \boldsymbol{v}^{n}\right) \notag\\
&-\frac{1}{\rm We}\lambda_{f}(T^{n})(\boldsymbol{v}^{n} \cdot \boldsymbol{\nabla} \delta^{n+1}) \notag\\
&-\frac{1}{\rm We}\lambda_{f}(T^{n})\epsilon\boldsymbol{\nabla} \boldsymbol{v}^{n}:(\boldsymbol{\nabla} \psi^{n+1}\otimes \boldsymbol{\nabla} \psi^{n+1}) \notag\\
&-\mu_{0}^{n+1}\psi^{n+1} (\boldsymbol{\nabla}\cdot \boldsymbol{v}^{n})-p^{n+1}(\boldsymbol{\nabla} \cdot \boldsymbol{v}^{n}) \notag\\
&+ \frac{1}{\mathrm{Pe}_{\psi}}\boldsymbol{\nabla} \cdot (m_{\psi}^{n}\mu_{c}^{n+1} \boldsymbol{\nabla}  \mu_{c}^{n+1} ) -\frac{m_{\psi}^{n}}{\mathrm{Pe}_{\psi}}|\boldsymbol{\nabla}\mu_{c}^{n+1}|^2 \notag\\
&-err_1 -err_2,
\end{align}
where the following identity is used
\begin{align}
&\frac{1}{\mathrm{Pe}_{\psi}}  \boldsymbol{\nabla} \cdot \left(m_{\psi}^{n} \boldsymbol{\nabla} {\mu}_{c}^{n+1}\right)\mu_{0}^{n+1}+\frac{ \alpha}{\mathrm{Pe}_{\psi}} \boldsymbol{\nabla} \cdot \left(m_{\psi}^{n} \boldsymbol{\nabla} {\mu}_{c}^{n+1}\right) p^{n+1} \notag\\
=&\frac{1}{\mathrm{Pe}_{\psi}}  \boldsymbol{\nabla} \cdot \left(m_{\psi}^{n} \boldsymbol{\nabla} {\mu}_{c}^{n+1}\right)\mu_{c}^{n+1}\notag\\
=&\frac{1}{\mathrm{Pe}_{\psi}}\boldsymbol{\nabla} \cdot (m_{\psi}^{n}\mu_{c}^{n+1} \boldsymbol{\nabla}  \mu_{c}^{n+1} ) -\frac{m_{\psi}^{n}}{\mathrm{Pe}_{\psi}}|\boldsymbol{\nabla}\mu_{c}^{n+1}|^2.\notag
\end{align}
Next, multiplying Eq.~(\ref{discrete_entropy_14}) by $-{\rm Ec}$, and adding to Eq.~(\ref{discrete_energy_conservation}), we obtain
%
    \begin{align}
    \label{discrete_entropy_15}
(\rho C_h)^{n+1} T_{\bar{t}}+{\rm Ec}\frac{\partial u}{\partial \psi} \psi_{\bar{t}}-{\rm Ec}\frac{\partial f}{\partial \psi}\psi_{\bar{t}}&=-(\rho C_h)^{n+1}(\boldsymbol{v}^{n}\cdot \boldsymbol{\nabla} T^{n})
    \notag
    \\
&-{\rm Ec}\frac{\partial \tilde{u}}{\partial \psi}(\boldsymbol{v}^{n}\cdot \boldsymbol{\nabla} \psi^{n+1})+{\rm Ec}\frac{\partial f}{\partial \psi}(\boldsymbol{v}^{n} \cdot \boldsymbol{\nabla} \psi^{n+1})
    \notag
    \\
&-\frac{{\rm Ec}}{{\rm We}}T^{n}\lambda_{s}\delta_{\bar{t}}+ \frac{1}{\mathrm{Pe}_{T}} \boldsymbol{\nabla}\cdot\left(k \frac{\boldsymbol{\nabla} T^{n+1}+\boldsymbol{\nabla} T^{n}}{2}\right)
    \notag
    \\
&+\frac{{\rm Ec}}{{\rm We}}T^{n}\lambda_{s}\epsilon \boldsymbol{\nabla} \cdot \left(\boldsymbol{\nabla}\psi^{n+1} \psi_{\bar{t}}+\boldsymbol{\nabla}\psi^{n+1} (\boldsymbol{v}^{n}\cdot \boldsymbol{\nabla} \psi^{n+1}) \right)
    \notag
    \\
&-\frac{{\rm Ec}}{{\rm We}}T^{n}\lambda_{s}(\boldsymbol{v}^{n}\cdot \boldsymbol{\nabla} \delta^{n+1})
    \notag
    \\
&-{\rm Ec}~T^{n}\tilde{s}^{n+1}(\boldsymbol{\nabla} \cdot \boldsymbol{v}^{n})+\frac{{\rm Ec}}{\mathrm{Re}}\mathbf{\tau}^{n+1}:\boldsymbol{\nabla} \boldsymbol{v}^{n+1}
    \notag
    \\
& +\frac{{\rm Ec}}{\mathrm{Pe}_{\psi}}m_{\psi}^{n}|\boldsymbol{\nabla}\mu_{c}^{n+1}|^2+corr^{n}
    \notag
    \\
&+{\rm Ec}~err_1 +{\rm Ec}~err_2,
    \end{align}
where the following identities are used
\begin{align}
T^{n}\lambda_{s}&=\lambda_{u}-\lambda_{f}(T^{n}),\\
(\boldsymbol{\nabla} \psi^{n+1}\otimes \boldsymbol{\nabla} \psi^{n+1})\cdot \boldsymbol{v}^{n}&=\boldsymbol{\nabla}\psi^{n+1} (\boldsymbol{v}^{n}\cdot \boldsymbol{\nabla} \psi^{n+1}),\\
\boldsymbol{\nabla} \boldsymbol{v}^{n}:\mu_{0}^{n+1}\psi^{n+1}\mathbf{I}&=\mu_{0}^{n+1}\psi^{n+1} (\boldsymbol{\nabla}\cdot \boldsymbol{v}^{n}).
\end{align}
Furthermore, using Eqs.~(\ref{discrete_f_psi}) and (\ref{discrete_u_psi}), we obtain
\begin{align} \label{discrete_entropy_16}
{\rm Ec}\frac{\partial f}{\partial \psi} \psi_{\bar{t}}&=\rho_{\bar{t}} C_h^{n+1}T^{n}-\rho_{\bar{t}} C_h^{n+1}T^{n}\ln \left(\frac{T^{n}}{T_0}\right) \notag\\
& + \rho^{n} (C_h)_{\bar{t}}T^{n}-\rho^{n} (C_h)_{\bar{t}}T^{n}\ln \left(\frac{T^{n}}{T_0}\right)\notag \\
&= (\rho C_h)_{\bar{t}}T^{n}-(\rho C_h)_{\bar{t}}T^{n}\ln \left(\frac{T^{n}}{T_0}\right),
\end{align}
and
\begin{align} \label{discrete_entropy_17}
{\rm Ec}\frac{\partial u}{\partial \psi}\psi_{\bar{t}}+(\rho C_h)^{n+1} T_{\bar{t}}  &= (\rho C_h)_{\bar{t}}T^{n} \notag\\
&+ \frac{(\rho C_h)^{n+1}T^n}{\delta t}\left(\ln(\frac{T^{n+1}}{T_0}) - \ln(\frac{T^{n}}{T_0})\right) \notag\\
&+err_{3},\\
err_{3}&=\frac{(\rho C_h)^{n+1}T^{n} (T^{n+1}-T^{n})^{2}}{2(\tilde{T})^{2} \delta t}, \label{err_3}
\end{align}
where $\tilde{T}$ is between $T^{n+1}$ and $T^{n}$, and we have used the following identity
\begin{align}
T^{n+1}-T^{n}=&\left(\ln(\frac{T^{n+1}}{T_0})-\ln(\frac{T^{n}}{T_0})\right)T^{n}+\frac{T^{n}}{2(\tilde{T})^{2}}(T^{n+1}-T^{n})^{2}.
\end{align}
Subtracting Eq.~(\ref{discrete_entropy_16}) from Eq.~(\ref{discrete_entropy_17}), we obtain
\begin{align} \label{discrete_entropy_18}
(\rho C_h)^{n+1} T_{\bar{t}}+{\rm Ec}\frac{\partial u}{\partial \psi} \psi_{\bar{t}}-{\rm Ec}\frac{\partial f}{\partial \psi}\psi_{\bar{t}}&=
 \frac{(\rho C_h)^{n+1}T^n}{\delta t}\ln(\frac{T^{n+1}}{T_0}) \notag\\
 &-\frac{(\rho C_h)^{n}T^{n}}{\delta t}\ln (\frac{T^{n}}{T_0})+err_{3} \notag\\
&={\rm Ec}~T^{n}s_{\bar{t}}+err_{3}.
\end{align}
In addition, we have
\begin{align}\label{discrete_entropy_19}
&-(\rho C_h)^{n+1}(\boldsymbol{v}^{n}\cdot \boldsymbol{\nabla} T^{n})-{\rm Ec}\frac{\partial \tilde{u}}{\partial \psi}(\boldsymbol{v}^{n}\cdot \boldsymbol{\nabla} \psi^{n+1})+{\rm Ec}\frac{\partial f}{\partial \psi}(\boldsymbol{v}^{n} \cdot \boldsymbol{\nabla} \psi^{n+1}) \notag\\
=&-{\rm Ec}~T^{n}(\boldsymbol{v}^{n}\cdot \boldsymbol{\nabla} \tilde{S}^{n+1}),
\end{align}
where the following identities are used
\begin{align}
-(\rho C_h)^{n+1}(\boldsymbol{v}^{n}\cdot \boldsymbol{\nabla} T^{n})&=-{\rm Ec}~T^{n}\frac{\partial \tilde{S}^{n+1}}{\partial T^{n}}(\boldsymbol{v}^{n}\cdot \boldsymbol{\nabla} T^{n}),\\
-\frac{\partial \tilde{u}}{\partial \psi}(\boldsymbol{v}^{n}\cdot \boldsymbol{\nabla} \psi^{n+1})+\frac{\partial f}{\partial \psi}(\boldsymbol{v}^{n} \cdot \boldsymbol{\nabla} \psi^{n+1})&=-T^{n}\frac{\partial \tilde{S}^{n+1}}{\partial \psi^{n+1}}(\boldsymbol{v}^{n} \cdot \boldsymbol{\nabla} \psi^{n+1}).
\end{align}
Substituting Eqs.~(\ref{discrete_entropy_18}) and (\ref{discrete_entropy_19}) into Eq.~(\ref{discrete_entropy_15}), and recalling the definitions of $\hat{s} $ (Eq.~(\ref{non-dimensional-entropy})) and $\tilde{s}^{n+1}$ (Eq.~(\ref{tilde_s})), we obtain
    \begin{align}
    \label{discrete_entropy_20_2}
{\rm Ec}~T^{n}\hat{s}_{\bar{t}}&=\frac{1}{\mathrm{Pe}_{T}} \boldsymbol{\nabla}\cdot\left(k \frac{\boldsymbol{\nabla} T^{n+1}+\boldsymbol{\nabla} T^{n}}{2}\right)
    \notag
    \\
&+\frac{{\rm Ec}}{{\rm We}}T^{n}\lambda_{s}\epsilon \boldsymbol{\nabla} \cdot \left(\boldsymbol{\nabla}\psi^{n+1} \psi_{\bar{t}}+\boldsymbol{\nabla}\psi^{n+1} (\boldsymbol{v}^{n}\cdot \boldsymbol{\nabla} \psi^{n+1}) \right)
    \notag
    \\
&-{\rm Ec}~T^{n}\boldsymbol{\nabla}\cdot (\tilde{s}^{n+1}{v}^{n})+\frac{{\rm Ec}}{\mathrm{Re}}\mathbf{\tau}^{n+1}:\boldsymbol{\nabla} \boldsymbol{v}^{n+1}
    \notag
    \\
& +\frac{{\rm Ec}}{\mathrm{Pe}_{\psi}}m_{\psi}^{n}|\boldsymbol{\nabla}\mu_{c}^{n+1}|^2+corr^{n}+{\rm Ec}~err_1 +{\rm Ec}~err_2-err_3.
    \end{align}
Multiplying Eq. (\ref{discrete_entropy_20_2}) by $1/({\rm Ec}~T^{n})$, and using the following identities
\begin{align}
\frac{1}{T^{n}} \boldsymbol{\nabla}\cdot\left(k \frac{\boldsymbol{\nabla} T^{n+1}+\boldsymbol{\nabla} T^{n}}{2}\right)
&=\boldsymbol{\nabla} \cdot(\frac{k}{T^{n}} \frac{\boldsymbol{\nabla}  T^{n+1}+\boldsymbol{\nabla}T^{n}}{2})+\frac{k|\boldsymbol{\nabla}  T^{n+1}+\boldsymbol{\nabla}T^{n}|^2}{4(T^{n})^2} \notag\\
&+\frac{err_{4}}{T^{n}}, \\
err_{4}&=\frac{k|\boldsymbol{\nabla}T^{n}|^2-k|\boldsymbol{\nabla} T^{n+1}|^{2}}{4T^{n}}, \label{err_4}
\end{align}
we obtain
\begin{align}  \label{discrete_entropy_21}
\hat{s}_{\bar{t}}&=\frac{k}{{\rm Ec}~\mathrm{Pe}_{T}}\frac{|\boldsymbol{\nabla}  T^{n+1}+\boldsymbol{\nabla}T^{n}|^2}{4(T^{n})^2} \notag\\
&+\frac{1}{{\rm Ec}~\mathrm{Pe}_{T}}\boldsymbol{\nabla} \cdot(\frac{k}{T^{n}} \frac{\boldsymbol{\nabla}  T^{n+1}+\boldsymbol{\nabla}T^{n}}{2}) \notag\\
&+\frac{1}{\rm We}\lambda_{s}\epsilon \boldsymbol{\nabla} \cdot \left(\boldsymbol{\nabla}\psi^{n+1} \psi_{\bar{t}}+\boldsymbol{\nabla}\psi^{n+1} (\boldsymbol{v}^{n}\cdot \boldsymbol{\nabla} \psi^{n+1}) \right) \notag\\
&-\boldsymbol{\nabla}\cdot (\tilde{s}^{n+1}{v}^{n})+\frac{1}{\mathrm{Re}}\frac{\mathbf{\tau}^{n+1}:\boldsymbol{\nabla} \boldsymbol{v}^{n+1}}{T^{n}}+\frac{m_{\psi}^{n}}{\mathrm{Pe}_{\psi}}\frac{|\boldsymbol{\nabla}\mu_{c}^{n+1}|^2}{T^{n}}+res,
\end{align}
where
\begin{align}
res=\frac{corr^{n}+{\rm Ec}~err_1 +{\rm Ec}~err_2-err_3+\frac{1}{\mathrm{Pe}_{T}}err_4}{{\rm Ec}~T^{n}}.
\end{align}
Recalling the definitions of $corr^{n}$ (Eq.~({\ref{corr}})), $err_{1}$ (Eq.~({\ref{err_1}})), $err_{2}$ (Eq.~({\ref{err_2}})), $err_{3}$ (Eq.~({\ref{err_3}})) and $err_{4}$ (Eq.~({\ref{err_4}})), we obtain
\begin{align} \label{res}
res&=\frac{{\rm Ec}}{\rm We}\lambda_{f}(T^{n})\left(\frac{1}{4}+W''(\xi)\right)\frac{
({\psi}^{n}-\psi^{n+1})^2}{2 T^{n} \delta t \epsilon}+\frac{(\rho C_h)^{n+1} (T^{n+1}-T^{n})^{2}}{2(T_{min})^{2} \delta t} \notag\\
&-\frac{(\rho C_h)^{n+1} (T^{n+1}-T^{n})^{2}}{2 (\tilde{T})^{2} \delta t}\geq 0,
\end{align}
where we have used the relations $W''(\xi)\geq -1/{4}$, $\lambda_{f}(T^{n})\geq 0$ and $T_{min}\leq \tilde{T}$.\\
Taking the integral over $\Omega$  and applying the divergence theorem with the boundary conditions in Eq.~(\ref{proof_bc_discrete}), we finally deduce the discrete entropy increase (\ref{dis-mass-law}) of our numerical method
\begin{align}  \label{discrete_entropy_22}
S_{\bar{t}} =\int_{\Omega}\hat{s}_{\bar{t}}~d{x}=& \int_{\Omega}\bigg\{\frac{k}{{\rm Ec}~\mathrm{Pe}_{T}}\frac{|\boldsymbol{\nabla}  T^{n+1}+\boldsymbol{\nabla}T^{n}|^2}{4(T^{n})^2}+\frac{1}{\mathrm{Re}}\frac{\mathbf{\tau}^{n+1}:\boldsymbol{\nabla} \boldsymbol{v}^{n+1}}{T^{n}} \notag\\
&+\frac{m_{\psi}^{n}}{\mathrm{Pe}_{\psi}}\frac{|\boldsymbol{\nabla}\mu_{c}^{n+1}|^2}{T^{n}}+res\bigg\} d {x} \geq0.
\end{align}

\end{proof}

\section{Numerical results} \label{numerical result}

\subsection{Time accuracy test}\label{convergence_test}
We first conduct a time accuracy test for our numerical method (\ref{discrete_momentum-conservation}) -- (\ref{discrete_energy_conservation}) in a 2D domain $[0,2]\times[0,2]$, while also demonstrating that the entropy remains increasing. Here, we fix the grid size as $1024\times1024$, ensuring that errors from spatial discretization are negligible compared to time discretization errors. Without considering gravity, we set the non-dimensional parameters as follows
\begin{align}
&\mathrm{Pe}_{\psi}=100, \mathrm{Re}=1, {\rm We}=10, {\rm Ca}=10, {\rm Ma}=0.01, \mathrm{Pe}_{T}=1,{\rm Ec}=1,\notag\\
&\zeta_{\rho}=1, \zeta_{\mu}=0.1, \zeta_{C_h}=0.1, \zeta_{k}=0.1,\epsilon=0.05,\eta=6 \sqrt{2}.
\end{align}
The initial conditions are given as
\begin{align}
\psi(x,y,0)&=0.5\left(\sin (\pi x) \cos (\pi y)+1\right), \\
\boldsymbol{v}(x,y,0)&=\left(u_0, v_0\right)=\left(\sin (\pi x) \sin (\pi y),\sin (\pi x) \sin (\pi y)\right),\\
T(x,y,0)&=0.5\left(\cos (\pi x) \cos (\pi y)+1\right),\\
 p(x,y,0)&=\sin (\pi x) \sin (\pi y).
\end{align}
On the left and right boundaries of the domain, we apply periodic boundary conditions. On the top and bottom walls, we apply no-slip boundary condition for $\boldsymbol{v}$ and no-flux boundary conditions for $\psi$, $\mu_{c}$ and $T$.
As exact solutions are not available, errors in $L^2$ norms are calculated as the difference between the solution of the coarse time step and that of the adjacent finer time step. The errors of the phase variable $\psi$ and velocity components $u$ and $v$ at $t=0.25$ with various time step sizes are presented in Tab.~\ref{tab:1}. We observe that our numerical method (\ref{discrete_momentum-conservation}) -- (\ref{discrete_energy_conservation}) almost perfectly matches the first-order accuracy in time. Additionally, the numerical results of the entropy and volume of the two-phase fluid system with various time steps are shown in Figs.~\ref{entropy} and \ref{volume}, where we observe that the entropy remains increasing, and the volume is preserved up to $10^{-12}$ (all the lines in Fig.~\ref{volume} are coincident).

\begin{table}[ht]
\centering
\caption{Comparison of $L^2$ errors for the phase variable $\psi$, and velocity components $u$ and $v$ obtained at $t = 0.25$ with various $\delta t$ on a fixed grid size of $1024\times1024$. Refer to \S\ref{convergence_test} for details. \label{tab:1}}
\renewcommand\arraystretch{1.25}
\begin{tabular}{ccccccc}
\hline
        $\delta t$ & $L^2$ error of $\psi$ & Order & $L^2$ error of $u$ & Order & $L^2$ error of $v$ & Order\\
        \hline
        $1/64$ & 4.2312$\rm{e}$-2  &  & 9.7261$\rm{e}$-3 &  &1.0042$\rm{e}$-2 & \\
        $1/128$ & 2.1379$\rm{e}$-2 & 0.9849  & 4.7447$\rm{e}$-3  &1.0355 &4.9610$\rm{e}$-3 & 1.0174\\
        $1/256$ & 1.0806$\rm{e}$-2 & 0.9844 & 2.3420$\rm{e}$-3  &1.0186 &2.4671$\rm{e}$-3 & 1.0078\\
        $1/512$ & 5.4513$\rm{e}$-3 & 0.9871 & 1.1675$\rm{e}$-3 &1.0043  &1.2358$\rm{e}$-3 &0.9974 \\
        $1/1024$ & 2.7434$\rm{e}$-3 & 0.9906 & 5.8559$\rm{e}$-4 &0.9955  &6.2164$\rm{e}$-4 &0.9913 \\
        \hline
\end{tabular}
\end{table}

\begin{figure}[ht]
\centering
\includegraphics[scale=1]{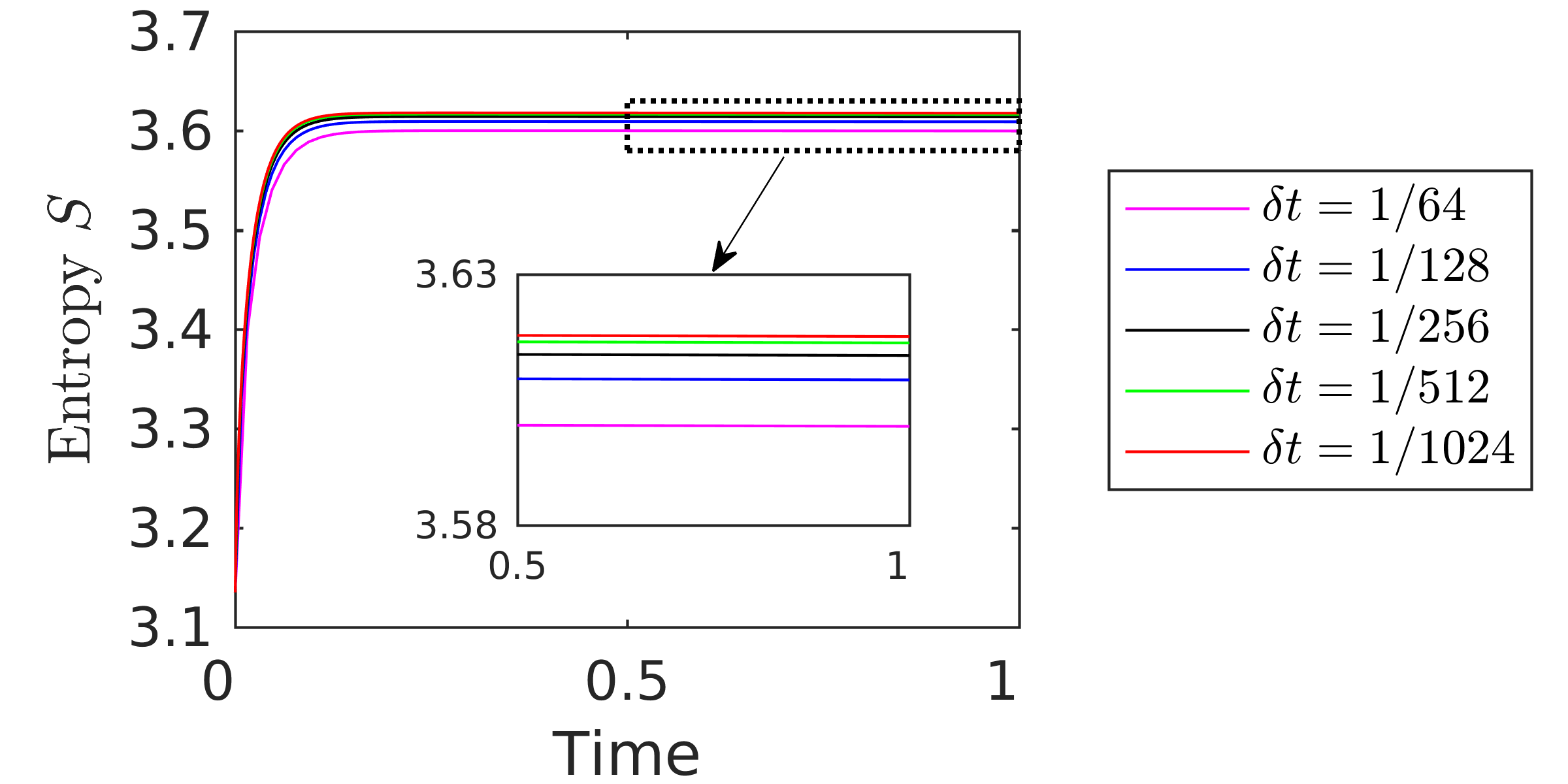}
\caption{Entropy ($S$) computed using Eq.~(\ref{non-dimension-total-entropy}) with different time steps on a fixed grid size of $1024\times1024$. Refer to \S\ref{convergence_test} for details.}       \label{entropy}
\end{figure}

\begin{figure}[ht]
\centering
\includegraphics[scale=1]{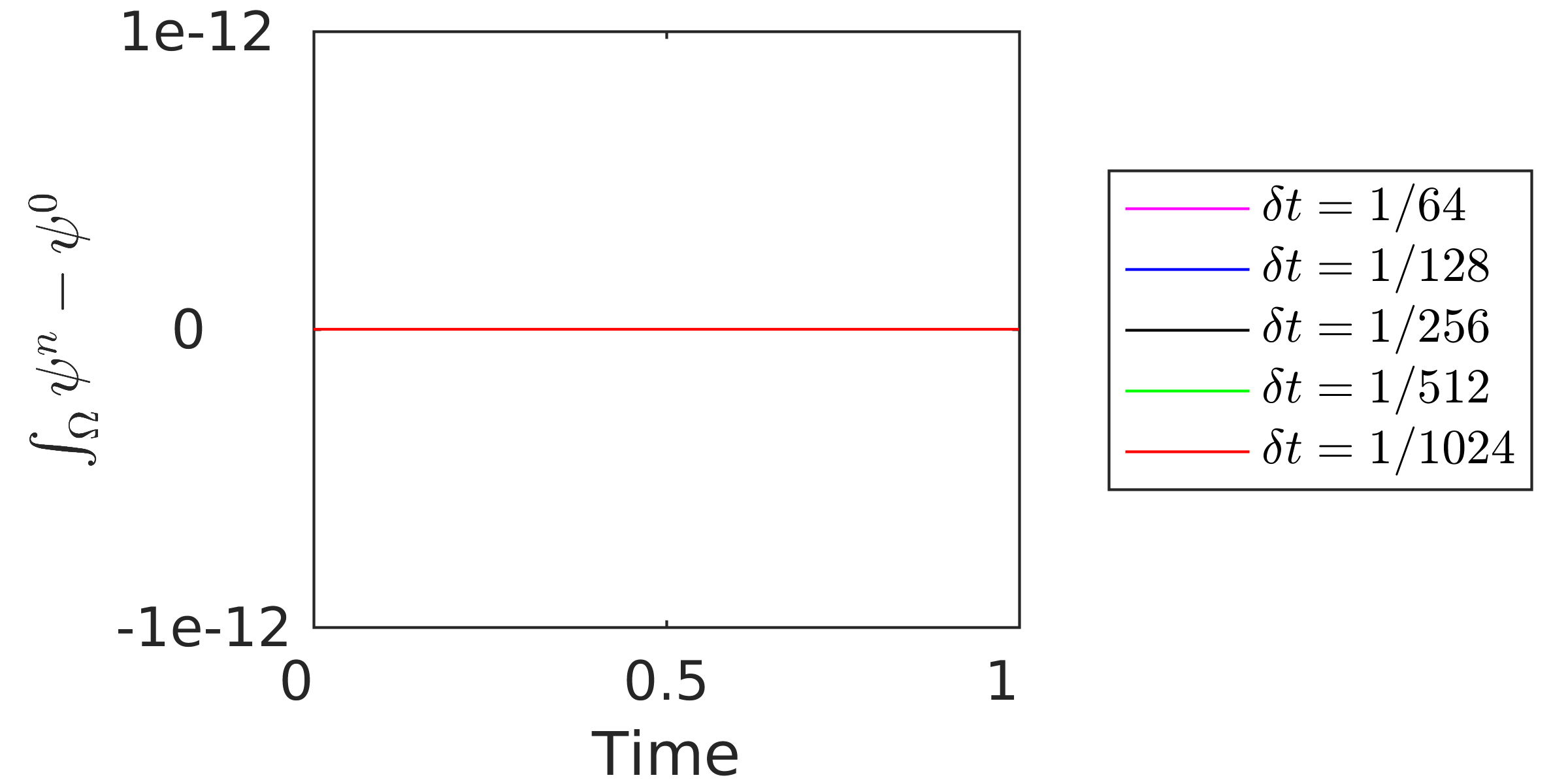}
\caption{Discrete volume conservation $\int_{\Omega}\psi^{n}-\psi^{0}$ with different time steps on a fixed grid size of $1024\times1024$. Refer to \S\ref{convergence_test} for details.}       \label{volume}
\end{figure}

\subsection{Thermocapillary migration of a droplet with infinitely small $\mathrm{Pe}_{T}$}\label{Thermocapillary-migration-of-a-drop}
In this subsection, we explore the thermocapillary migration of a droplet in a square microchannel with a fixed linearly increasing temperature field imposed along the channel. The temperature gradient is ${\partial T}/{\partial z}=\nabla T_{\infty}>0$, and the effect of gravity is assumed to be negligible.

The droplet (fluid 2) of radius $R$ is surrounded by another immiscible fluid (fluid 1) in the microchannel. The thermocapillary migration of a droplet was first investigated analytically by Young \emph{et al.} \cite{young1959motion}, where both the heat Peclect and Reynolds numbers are assumed to be infinitely small, and the convective transport of momentum and energy is neglected. The terminal velocity (also known as YGB velocity) of the droplet under constant temperature gradient $\nabla T_{\infty}$ is given as
\begin{align}
V_{YGB} = \frac{2\sigma_T \nabla T_{\infty} R}{(2+k_1/k_2)(2\mu_1+3\mu_2)}.
\end{align}

To validate our phase-field model, we compare the numerical results to the analytical solution $V_{YGB}$ obtained from the sharp-interface model. The characteristic length, velocity and temperature for this test are $R^*=10R$, $V^* = V_{YGB}$ and $T^{*}=\nabla T_{\infty} R$, respectively. Numerical simulations are carried out in a 3D dimensionless domain $[0,0.75]\times[0,0.75]\times[0,1.5]$. We impose periodic boundary conditions on all the side boundaries ($x=0,0.75$,~$y=0,0.75$), and apply no-slip boundary condition for $\boldsymbol{v}$ and no-flux boundary conditions for $\psi$ and $\mu_{c}$ on the top and bottom boundaries ($z=0,1.5$).

The droplet of dimensionless radius $0.1$ is initially  stationary and centered at (0.375,0.375,0.75). Therefore, the initial conditions for velocity and phase variable are given as
\begin{align}
\boldsymbol{v}(x,y,z,0) &= 0,\\
\psi(x,y,z,0) &= 0.5 + 0.5\tanh{\left({(0.1-r)}/{2\sqrt{2}\epsilon}\right)},
\end{align}
where $r = \sqrt{(x-0.375)^2+(y-0.375)^2+(z-0.75)^2}$.
In addition, the dimensionless temperature field is fixed as
\begin{align}
T(z) = 10z+1,
\end{align}
and thus, the heat Eq. (\ref{sys-energy-nodim}) is not considered. The model parameters and ratios of physical properties are set as follows 
\begin{align}
&\mathrm{Pe}_{\psi}=1.5\times10^{4}/\epsilon, \mathrm{Re}=1.\dot{3}\times 10^{-3}, {\rm We}=8.\dot{8}\times 10^{-6}, {\rm Ca}=6.\dot{6}\times 10^{-3}, \notag\\ &{\rm Ma}=7.5, {\rm Ec}=1.\dot{7}\times 10^{-6},  \zeta_{\rho}=1, \zeta_{\mu}=1, \zeta_{C_h}=1, \eta=6 \sqrt{2}.
\end{align}
In the simulations, the droplet migration velocity, $v_{d}$, is calculated numerically by 
\begin{align}
v_{d}=\frac{\int_{\Omega} \psi \boldsymbol{v}\cdot \boldsymbol{k} {\mathrm d}\bs{x}}{\int_{\Omega} \psi {\rm d}\bs{x}},
\end{align}
where $\boldsymbol{k}$ is the unit vector in $z$ direction. We first demonstrate the convergence of the model by testing various values of $\epsilon(=0.0025, 0.005, 0.01, 0.02)$ with a fixed grid size $128\times128\times256$ and a time step of $10^{-5}$. As shown in Fig.~\ref{fix_grid}(a), the terminal velocity converges to $V_{YGB}$ asymptotically as the value of $\epsilon$ decreases. Next, we show the convergence of the results by refining the grid with a fixed $\epsilon=0.0025$ and a time step of $10^{-5}$. Specially, we use four grid sizes ($32\times32\times64, 64\times64\times128, 128\times128\times256, 256\times256\times512$) for the computations. As shown in Fig.~\ref{fix_grid}(b), the migration velocity converges as the grid size increases, and the results for the grid sizes $128\times128\times256$ and $256\times256\times512$ are very close. Therefore, we set the grid size $128\times128\times256$ and a time step of $10^{-5}$ for the 3D computations in the subsequent section.\\ 


\begin{figure}[ht]
\centering
\subfigure[]{\includegraphics[scale=0.9]{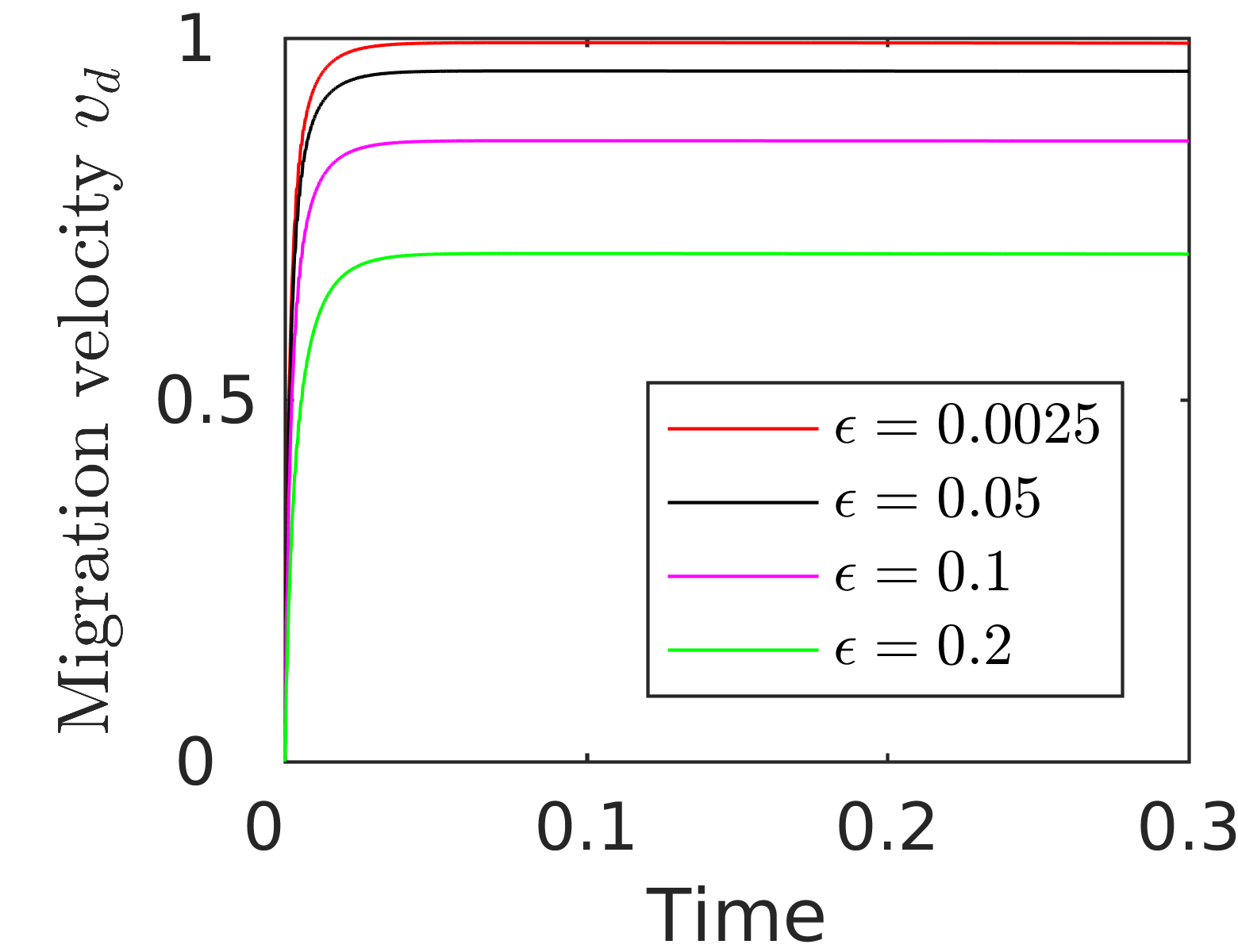}}
\subfigure[]{\includegraphics[scale=0.9]{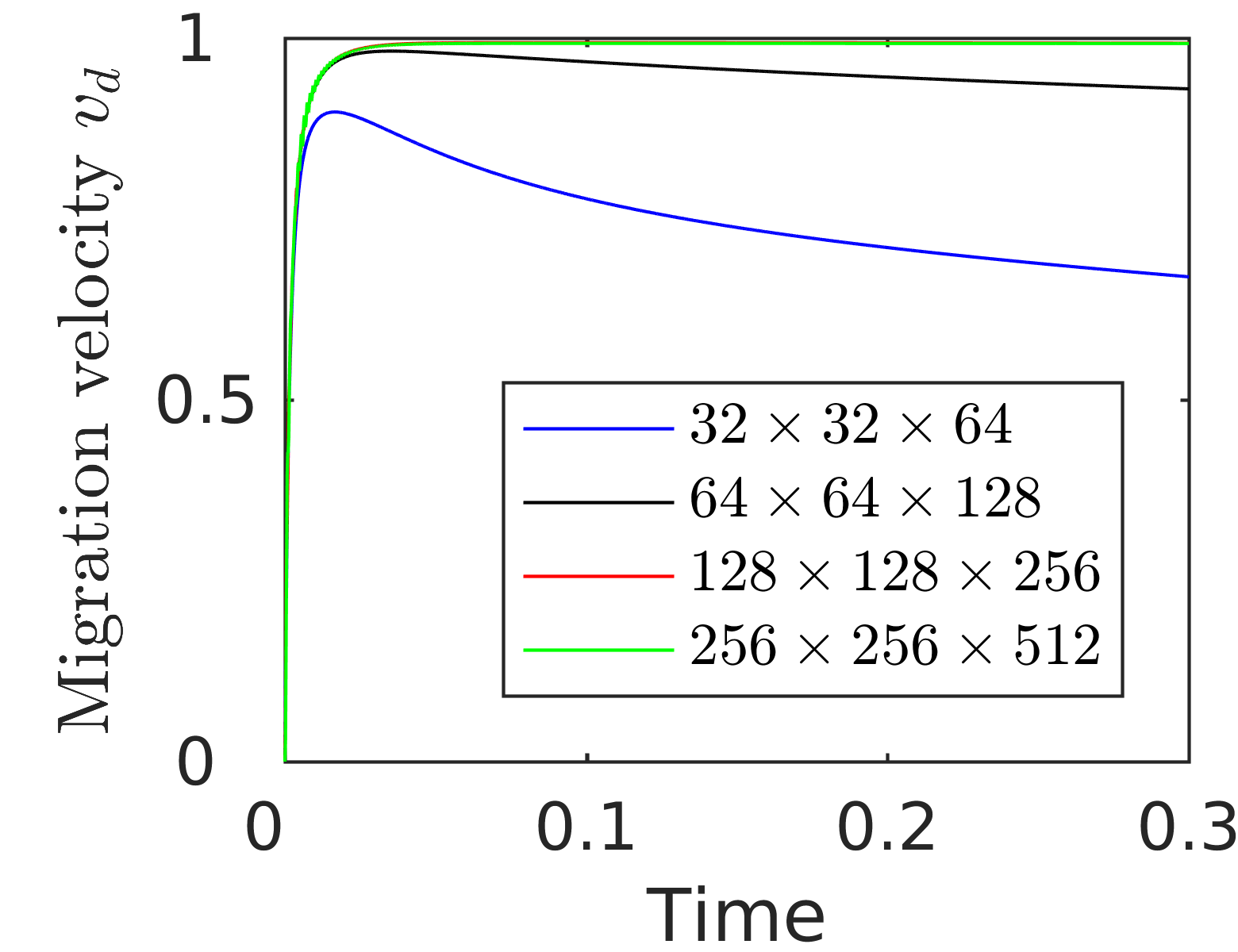}}
\caption{Time evolution of the thermocapillary migration velocity of a droplet with (a) various $\epsilon$ and a fixed grid size $128\times128\times256$, and (b) various grid sizes and a fixed $\epsilon=0.0025$. See \S\ref{Thermocapillary-migration-of-a-drop} for details.}        \label{fix_grid}
\end{figure}

\subsection{Thermocapillary-driven convection }\label{Thermocapillary-convection-of-two-planar-fluid}

We now investigate the thermocapillary-driven convection in a heated microchannel with two superimposed planar fluids \cite{pendse2010analytical}. The setup of the problem is illustrated in Fig.~\ref{diagram_planar_fluids}. The heights of the fluid 1 (upper) and fluid 2 (lower) are $a$ and $b$, respectively, and the fluids extend infinitely in the horizontal direction. The temperatures of the upper and lower walls are imposed as follows:
\begin{align}
T(x,a) = T_{c}, \label{example_2_topT}
\end{align}
and
\begin{align}
T (x,-b) = T_{h} + T_{0} cos(\omega x),\label{example_2_lowerT}
\end{align}
respectively, where ${T_{h}>T_{c}>T_{0}>0}$, and ${\omega=2 \pi / l}$ is a wave number with ${l}$ being the channel length. The given thermal boundary conditions establish a temperature field that is periodic in the horizontal direction with a period length of ${l}$. Therefore, it is sufficient to only concentrate on the solution within one period domain ${-l / 2<x<l / 2}$.

When the heat Peclect number and Reynolds number are negligibly small, it is possible to ignore the convective transport of momentum and energy. In addition, it is assumed that the interface remains flat, and the effect of gravity is assumed to be negligible. By solving the simplified linear governing equations, Pendse and Esmaeeli \cite{pendse2010analytical} obtained the analytical solutions for the temperature field ${\bar{T}(x, y)}$ and stream-function ${\bar{\Phi}(x, y)}$. Specifically, for the upper fluid, the solutions are obtained as follows:
\begin{align}
\bar{T}(x, y)=&\frac{\left(T_{c}-T_{h}\right) y+\tilde{k} T_{c} b+T_{h} a}{a+\tilde{k} b}+T_{0} f(\alpha, \beta, \tilde{k}) \sinh (\alpha-\omega y) \cos (\omega x),\\
\bar{\Phi}(x, y)=&\frac{U_{\max }}{\omega} \frac{1}{\sinh ^{2}(\alpha)-\alpha^{2} }\{\omega y \sinh ^{2}(\alpha) \cosh (\omega y) \notag\\
& -\frac{1}{2}\left[2 \alpha^{2}+\omega y(\sinh (2 \alpha)-2 \alpha)\right] \sinh (\omega y)\} \sin (\omega x),
\end{align}
and for the lower fluid
\begin{align}
\bar{T}(x, y)=&\frac{\tilde{k}\left(T_{c}-T_{h}\right) y+\tilde{k} T_{c} b+T_{h} a}{a+\tilde{k} b}+T_{0} f(\alpha, \beta, \tilde{k})[\sinh (\alpha) \cosh (\omega y)\notag\\
&-\tilde{k} \sinh (\omega y) \cosh (\alpha)] \cos (\omega x), \\
\bar{\Phi}(x, y)=&\frac{U_{\max }}{\omega} \frac{1}{\sinh ^{2}(\beta)-\beta^{2}} \{\omega y \sinh ^{2}(\beta) \cosh (\omega y)\notag\\
&-\frac{1}{2}\left[2 \beta^{2}-\omega y(\sinh (2 \beta)-2 \beta)\right] \sinh (\omega y)\} \sin (\omega x).
\end{align}
The unknowns in the above equations are defined by
\begin{align}
\tilde{k}=\frac{k_{1}}{k_{2}}, ~\alpha=a \omega, ~\beta=b \omega,~f(\alpha, \beta, \tilde{k})=\frac{1}{\tilde{k} \sinh (\beta) \cosh (\alpha)+\sinh (\alpha) \cosh (\beta)},
\end{align}
and
\begin{align}
U_{\max }=\left(\frac{T_{0} \sigma_{T}}{\mu_{2}}\right) g(\alpha, \beta, \tilde{k}) h(\alpha, \beta, \tilde{\mu}),
\end{align}
where the subscripts 1 and 2 stand for the fluid 1 and 2, respectively,
\begin{align}
g(\alpha, \beta, \tilde{k})=&\sinh (\alpha) f(\alpha, \beta, \tilde{k}),\\
h(\alpha, \beta, \tilde{\mu})=&\frac{\left(\sinh ^{2}(\alpha)-\alpha^{2}\right)\left(\sinh ^{2}(\beta)-\beta^{2}\right)}{\tilde{\mu}\left(\sinh ^{2}(\beta)-\beta^{2}\right)(\sinh (2 \alpha)-2 \alpha)+\left(\sinh ^{2}(\alpha)-\alpha^{2}\right)(\sinh (2 \beta)-2 \beta)},
\end{align}
$ \tilde{k}=k_{1}/k_2$ and $\tilde{\mu}=\mu_{1}/\mu_2$ are the thermal conductivity ratio and viscosity ratio between the two fluids, respectively.

Here, we use $a$, $T_{0} \sigma_{T}/\mu_{2}$, $T_{c}$ and $T_{0} \sigma_{T} a/\mu_{2}$ as the characteristic length, velocity, temperature and stream-function for the present test, respectively. The numerical simulations are carried out in a 2D domain $[-l/2, l/2] \times [-b, a]$, where
\begin{align}
l=2, \quad a=b=1,
\end{align}
resulting in a dimensionless domain $[-1, 1] \times [-1, 1]$.
To ensure a flat and rigid interface between the two fluids, we specify the phase variable as
\begin{align}
\psi(y)=\frac{1}{2}+\frac{1}{2} \tanh \left(\frac{y}{2 \sqrt{2} \epsilon}\right), ~~~{\rm for}~~~  y \in(-1, 1),
\end{align}
where $\epsilon$ represents the thickness of the diffuse interface. Periodic boundary conditions are enforced on the left and right boundaries of the domain. On the upper and lower walls, no-slip boundary conditions are imposed. The wall temperatures are prescribed through Eqs.~(\ref{example_2_topT}) and (\ref{example_2_lowerT}), where dimensional temperatures are $T_{h}=20$, $T_{c}=10$, and $T_{0}=4$, yielding dimensionless temperatures $T_{h}=2$, $T_{c}=1$, and $T_{0}=0.4$. The ratios of fluid properties and model parameters are specified as:
\begin{align}
&\mathrm{Re}=0.05, {\rm We}=0.004, {\rm Ca}=0.08, {\rm Ma}=2.5, \mathrm{Pe}_{T}=0.01,\notag\\
& \zeta_{\rho}=1, \zeta_{\mu}=1, \zeta_{C_h}=1, \zeta_{k}=1/\tilde{k},\eta=6 \sqrt{2}.
\end{align}
 
To show the effect of the thermal conductivity ratio on the stream-function and the temperature field, we examine two cases with different values of $\tilde{k}$, i.e., $\tilde{k}=1$ for case 1, and $\tilde{k}=0.2$ for case 2. According to the definition of the variable thermal conductivity $k(\psi)$, we have a constant $k$ for case 1,
$$
k(\psi)={\psi+\frac{k_2}{k_1}(1-\psi)}=1,$$
and a variable $k(\psi)$ for case 2,
$$
k(\psi)={\psi+\frac{k_2}{k_1}(1-\psi)}=5-4\psi.
$$
Figs.~\ref{T_countor} and \ref{stramline} depict the temperature field and stream-functions contours for the two cases, respectively. The numerical simulation is conducted on a grid size $1024\times1024$ with a time step of $10^{-4}$ and $\epsilon=0.0025$. It's apparent that the numerical results agree well with the analytical solutions. Moreover, as the thermal conductivity ratio $\tilde{k}$ decreases, the temperature distribution at the interface becomes more nonuniform (See Fig.~\ref{T_countor}). This results in an enhanced shear force along the interface, and thus strengthens the thermocapillary-driven convection, as reflected in Fig.~\ref{stramline}, where the gradient of the stream-function increases as $\tilde{k}$ decreases. Additionally, to demonstrate the convergence of our phase-field model towards the sharp-interface model as the diffuse interface thickness tends to zero, we compute the $L^2$ norms of the relative differences between the numerical results with five different $\epsilon$ values $(0.0025,0.005,0.01,0.02,0.04)$ and the analytical solutions. The $L^2$ norms of the relative differences are defined as:
\begin{align}
E_{T}=\frac{\|T-\bar{T}\|_{L^{2}}}{\|\bar{T}\|_{L^{2}}} ~~~ {\rm and} ~~~E_{\Phi}=\frac{\|\Phi-\bar{\Phi}\|_{L^{2}}}{\|\bar{\Phi}\|_{L^{2}}}
\end{align}
for the temperature and stream-function, respectively, where $\Phi$ represents the numerical result of the stream-function. The numerical results are shown in Tab.~\ref{relative_error_different_eps}, indicating a decrease in the $L^2$ norm of the relative differences as the value of $\epsilon$ decreases for both the temperature field and the stream-function.

\begin{figure}[ht]
\centering
\includegraphics[scale=0.6]{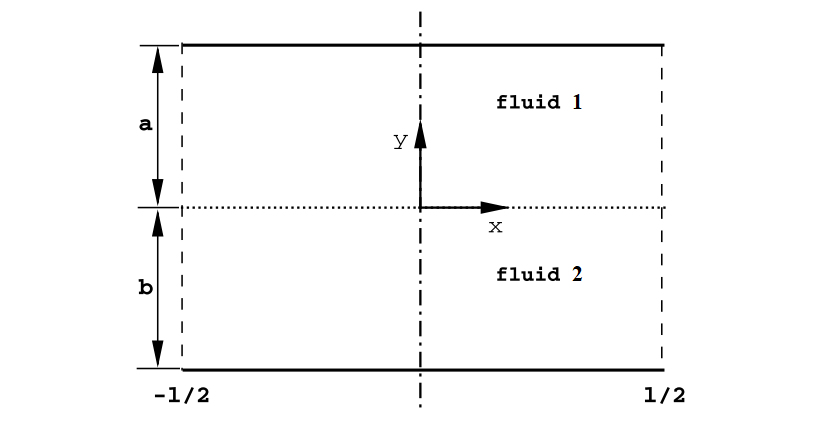}
\caption{The diagram depicting two immiscible fluids in a microchannel. The temperatures of the lower and upper walls are given by ${T(x,-b)=T_{h}+}{T_{0} \cos (\omega x)}$ and ${T(x, a)=T_{c}}$, respectively, where ${T_{h}>T_{c}>T_{0}}$ and ${\omega=\frac{2 \pi}{l}}$ is the wave number. Refer to \S\ref{Thermocapillary-convection-of-two-planar-fluid} for further details.} \label{diagram_planar_fluids}
\end{figure}

\begin{figure}[H]
\subfigure[]{\includegraphics[scale=0.15]{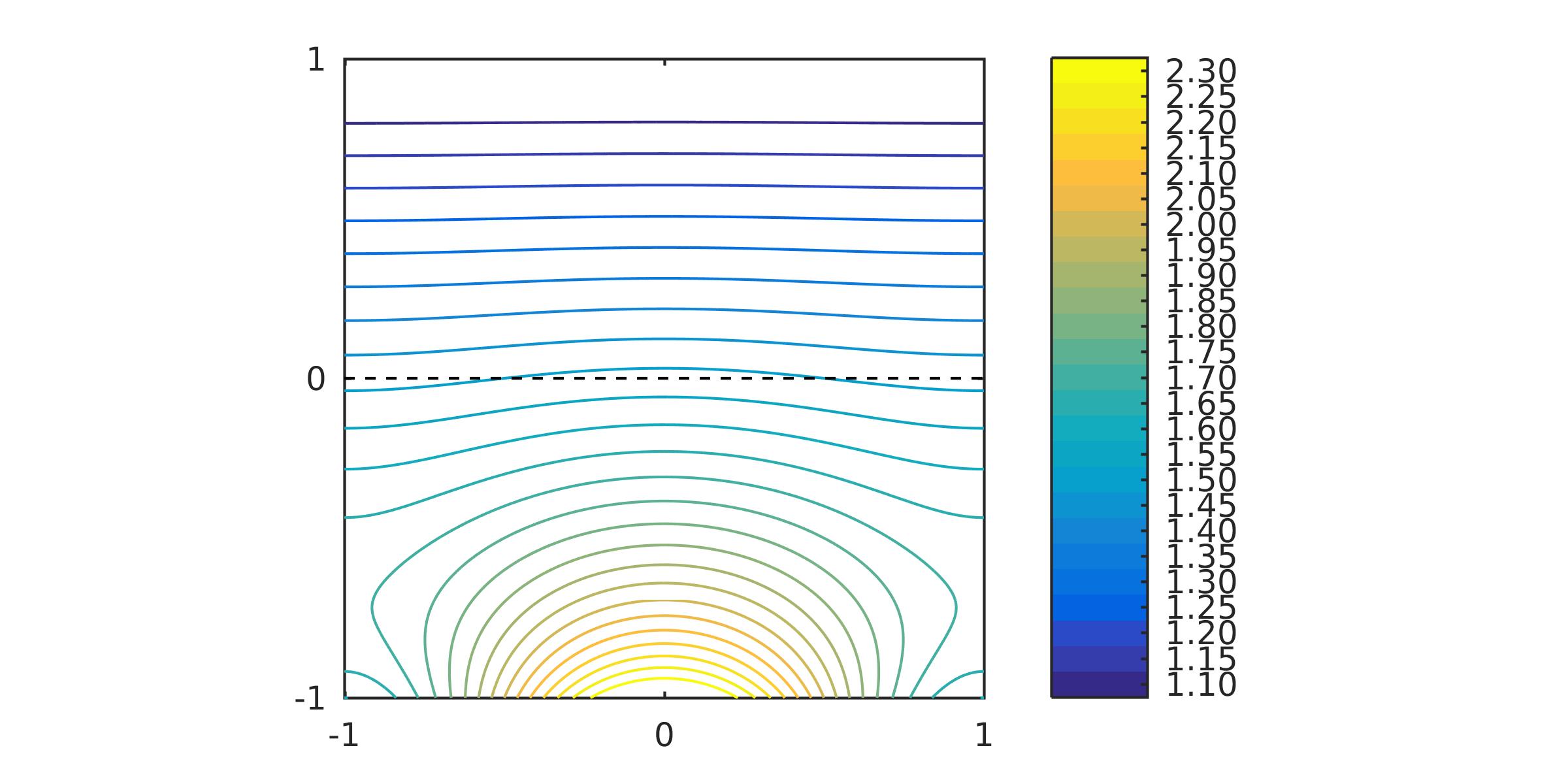}}\\
\subfigure[]{\includegraphics[scale=0.15]{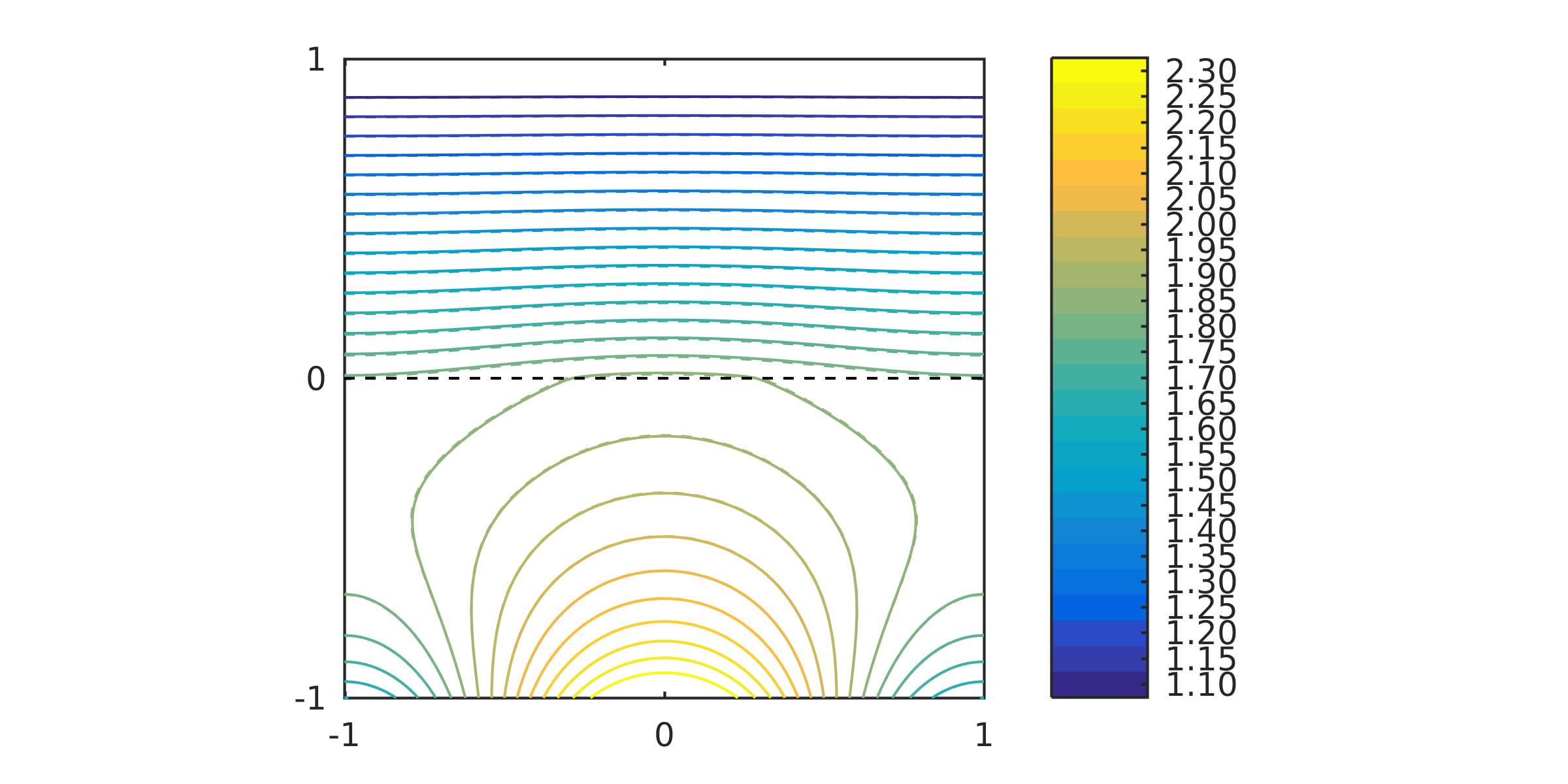}}
\caption{Isotherms for fluid systems with thermal conductivity ratios of (a) $\tilde{k}=1$ and (b) $\tilde{k}=0.2$. Solid lines represent numerical results $T$, while dashed lines depict analytical solutions $\bar{T}$. For further details, refer to \S\ref{Thermocapillary-convection-of-two-planar-fluid}.}
 \label{T_countor}
\end{figure}

\begin{figure}[ht]
\centering
\subfigure[]{\includegraphics[scale=0.125]{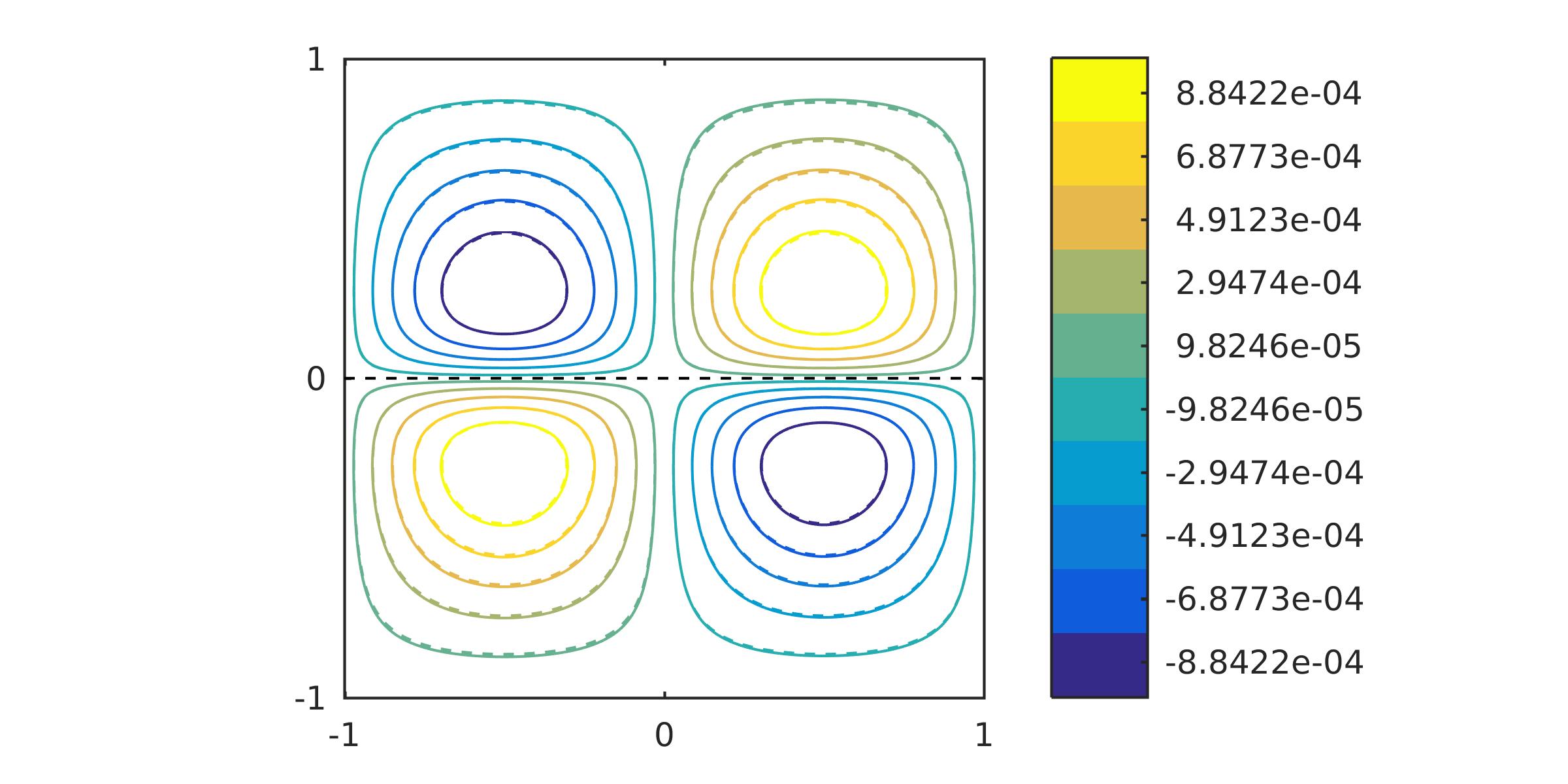}}\\
\subfigure[]{\includegraphics[scale=0.125]{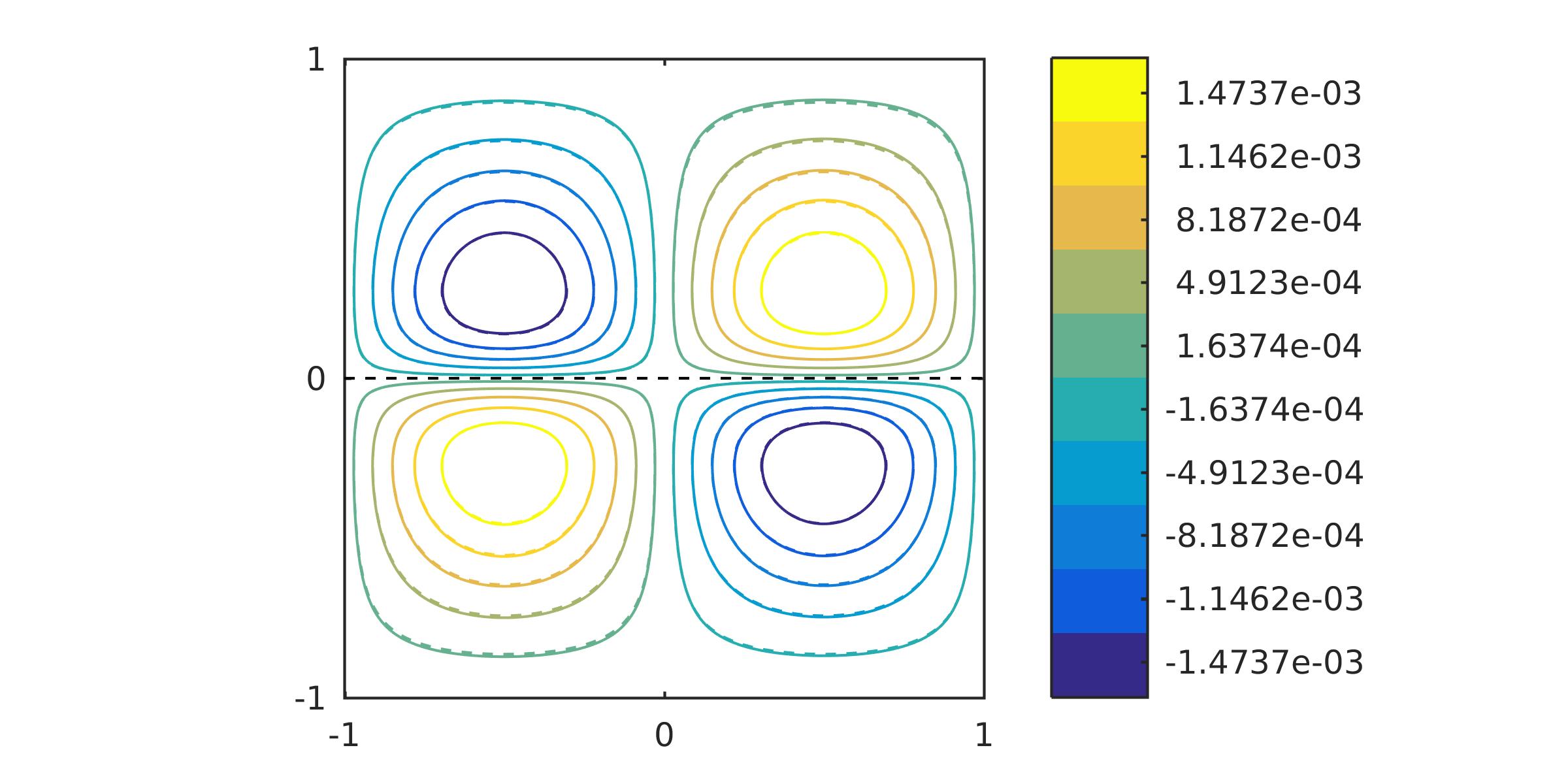}}
\caption{Streamlines for fluid systems with different thermal conductivity ratios: (a) $\tilde{k}=1$ and (b) $\tilde{k}=0.2$. Solid lines represent numerical results $\Phi$, while dashed lines depict analytical solutions $\bar{\Phi}$. For further details, refer to \S\ref{Thermocapillary-convection-of-two-planar-fluid}.}
\label{stramline}
\end{figure}

\begin{table}[ht]
\centering
\caption{The $L^2$ norms of the relative differences between the numerical results obtained at the steady state and analytical solutions with various $\epsilon$ and a fixed grid size $1024\times1024$. For more details, refer to \S\ref{Thermocapillary-convection-of-two-planar-fluid}. \label{relative_error_different_eps}}
\renewcommand\arraystretch{1.25}
\begin{tabular}{ccccccc}
\hline\hline & & \multicolumn{2}{c}{ $\tilde{k}=1$ } & & \multicolumn{2}{c}{ $\tilde{k}=0.2$} \\
\cline { 3 - 4 } \cline { 6 - 7 } $\epsilon$ & & $E_T$ & $E_{\Phi}$ & & $E_T$ & $E_{\Phi}$ \\
\hline
$0.04$ & & 1.8290${\rm e}$-07 & 1.5493${\rm e}$-01 & & 1.0591${\rm e}$-02 & 1.6071${\rm e}$-01 \\
$0.02$ & & 1.8290${\rm e}$-07 & 4.4292${\rm e}$-02 & & 5.7621${\rm e}$-03 & 6.0471${\rm e}$-02 \\
$0.01$ & & 1.8290${\rm e}$-07 & 1.6597${\rm e}$-02 & & 2.9972${\rm e}$-03 &  2.3082${\rm e}$-02  \\
$0.005$ & & 1.8290${\rm e}$-07 & 1.2590${\rm e}$-02 & & 1.5303${\rm e}$-03 & 1.0508${\rm e}$-02  \\
$0.0025$ & & 1.8290${\rm e}$-07 & 1.2233${\rm e}$-02 & & 7.7892${\rm e}$-04 & 9.1579${\rm e}$-03 \\
\hline \hline
\end{tabular}
\end{table}

\subsection{The merging process of two bubbles with thermocapillary effects} \label{two-bubble-rise}
We now investigate the merging process of two spherical gas bubbles in a squared channel with an initially linear temperature field imposed along the channel (${\partial T}/{\partial z}=\nabla T_{\infty}$). The two bubbles, surrounded by a viscous liquid, are initially stationary with the same radius $R_{0}$. We adopt $R^*=R_{0}$, $V^* =\sqrt{gR_{0}}$, and $T^{*}=|\nabla T_{\infty}|R_{0}$ as the characteristic length, velocity, and temperature, respectively. The dimensionless computational domain is $[0, 8] \times [0, 8]\times [0, 16]$, and the initial bubbles, each with a radius of 1, are centered at $(4,4,5)$ and $(5.6, 4,2)$, respectively. The following initial conditions are imposed for the numerical simulations:
\begin{align}
\boldsymbol{v}(x,y,z,0)&= 0,\label{initial-v-two-bubble}\\
\psi(x,y,z,0)& = 1+\frac{1}{2}\tanh{\left(\frac{1-\sqrt{r_{1}}}{2\sqrt{2}\epsilon}\right)}+ \frac{1}{2}\tanh{\left(\frac{1-\sqrt{r_{2}}}{2\sqrt{2}\epsilon}\right)},\\
T(x,y,z,0)&=z+20,~~~~~~({\rm{case~1}}) \label{increasing_temperature}\\
{\rm{or}}~~~~~~~~~~~~~~~\notag\\
T(x,y,z,0)&=-z+36,~~~~({\rm{case~2}})\label{decreasing_temperature}
\end{align}
where $r_{1}= (x-4)^2+(y-4)^2+(z-5)^2$ and $r_{2}= (x-5.6)^2+(y-4)^2+(z-2)^2$. No-slip boundary conditions are imposed for $\boldsymbol{v}$ on all domain boundaries, while no-flux boundary conditions are applied for $\psi$ and $\mu_{c}$. Additionally, no-flux boundary conditions are set for $T$ on all side boundaries ($x=0,8$,~$y=0,8$), and $T$ is specified on the top and bottom boundaries ($z=0,16$) through (\ref{increasing_temperature}) or (\ref{decreasing_temperature}). The model parameters and ratios of physical properties are given as follows
\begin{align}
&\mathrm{Pe}_{\psi}=40, \mathrm{Re}=0.34, {\rm We}=0.05, {\rm Ca}=0.14, {\rm Ma}=0.2, {\rm Ec}=0.001, {\rm Fr}=1,\notag\\
& \zeta_{\rho}=1000, \zeta_{\mu}=100, \zeta_{C_h}=4, \zeta_{k}=23,\epsilon=0.03,\eta=6 \sqrt{2}.
\end{align}
\subsubsection{The merging process of two bubbles under isothermal conditions}  \label{isothermal conditions}
We first investigate the merging behavior of two bubbles under isothermal conditions. Here, surface tension is assumed to be constant ($\sigma=1$) over the bubble interfaces, and the heat equation (\ref{sys-energy-nodim}) is dropped. The numerical results, as shown in Fig.~\ref{numerical_experimental}, are found to align well with experimental results \cite{brereton1991coaxial}. In addition, it is found that the bubbles gradually rise and deform due to the buoyancy force. As two bubbles approach, the lower bubble accelerates because the drag force on lower bubble becomes smaller than the upper bubble.
\begin{figure}[ht]
\centering
\includegraphics[scale=0.15]{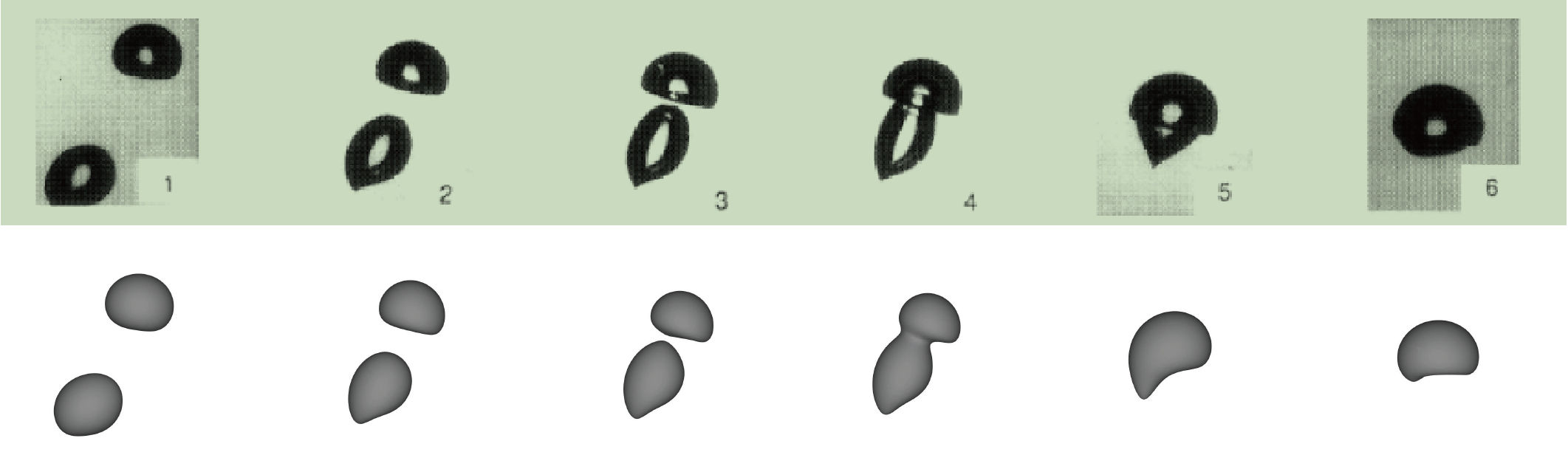}
\caption{Comparison of numerical results and experimental results for merging process of two bubbles, adapted from \cite{brereton1991coaxial}. Refer to \S\ref{isothermal conditions} for further details.}
\label{numerical_experimental}
\end{figure}

\subsubsection{Effect of thermocapillarity on the merging of two bubbles}\label{non-isotherm-condition}
We further investigate the merging behavior of two bubbles under non-isothermal conditions by considering the full model Eqs.~\eqref{sys-vel-nodim}-\eqref{sys-energy-nodim}. Two cases of initial temperature, given by Eqs.~(\ref{increasing_temperature}) and (\ref{decreasing_temperature}), are examined to analyze the effect of the heat Peclect number $\mathrm{Pe}_{T}$ on the merging process of bubbles. Various $\mathrm{Pe}_{T}$, specially $\mathrm{Pe}_{T}=4$, $16$, $64$, $256$, are selected for this examination, with other settings described in \S\ref{two-bubble-rise}. The numerical results are illustrated in Figs.~\ref{increasing} and \ref{decreasing}.

In case 1, where the initial temperature field increases linearly along the channel, the numerical results are presented in Fig.~\ref{increasing}. Comparing with the results under isothermal conditions, several observations are made. Firstly, prior to bubble merging, the upper bubble becomes flatter and the lower bubble elongates. Following the merging, the resulting bubble also flattens further. Secondly, the presence of a positive temperature gradient causes bubbles to merge earlier, evident from the second column of Fig.~\ref{increasing}. Lastly, as $\mathrm{Pe}_{T}$ increases, the merging process slows down and the isotherms surrounding the bubbles become more distorted.

In case 2, where the initial temperature field decreases linearly long the channel, the numerical results are displayed in Fig.~\ref{decreasing}. Similar to case 1, prior to merging, the upper bubble flattens while the lower bubble elongates, followed by further flattening of the resulting bubble post-merging. Moreover, the negative temperature gradient accelerates bubble merging, noticeable from the second column of Fig.~\ref{decreasing}. Furthermore, as $\mathrm{Pe}_{T}$ increases, bubble merging occurs more rapidly, accompanied by greater distortion of the surrounding isotherms.

\begin{figure}[H]
\centering
\includegraphics[scale=0.33]{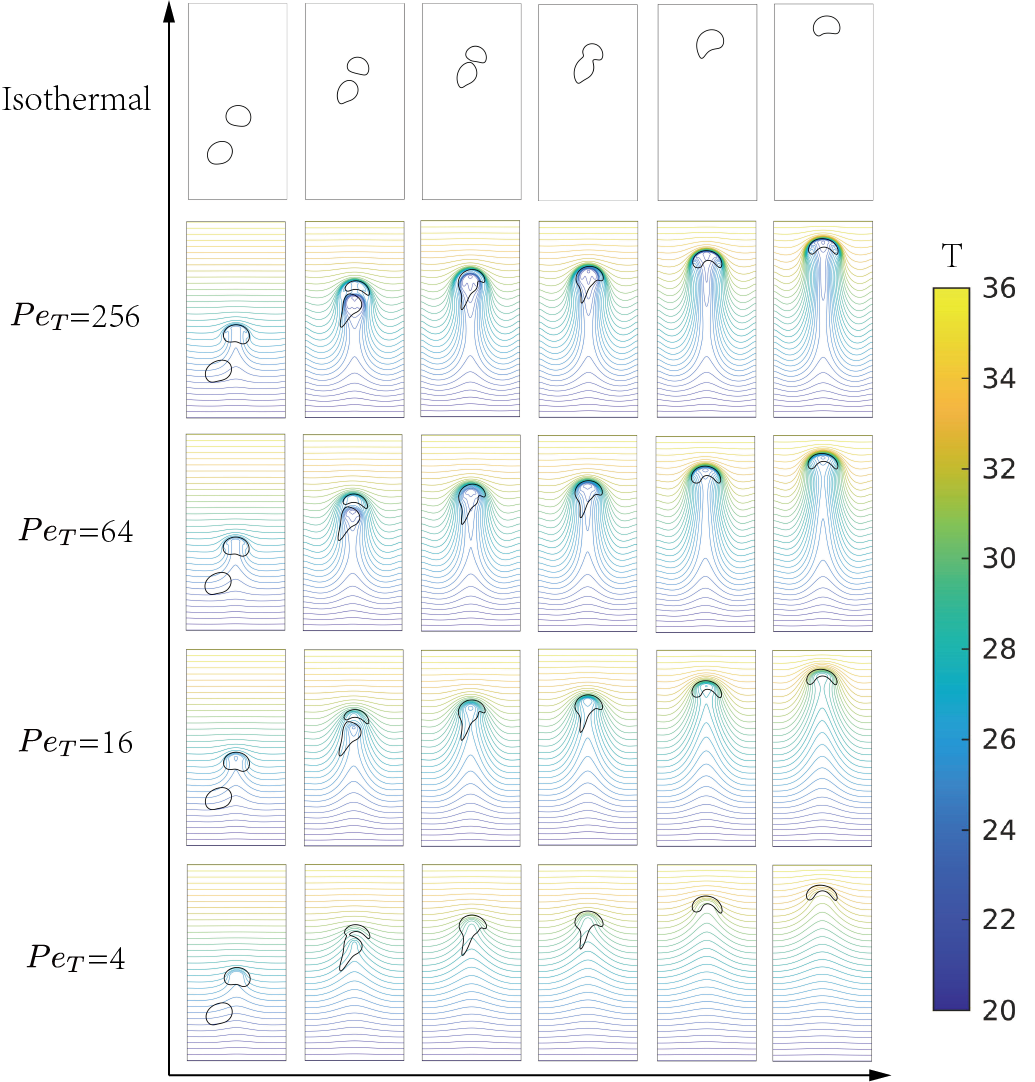}
\caption{Time evolution of bubbles interfaces (contour $\psi=0.5$) on the slice $y=4$ under isothermal conditions, along with the time evolution of isotherms and bubbles interface on the same slice at different $\mathrm{Pe}_{T}$ numbers when the initial temperature field increases linearly along the channel. See \S\ref{non-isotherm-condition} for details.}
\label{increasing}
\end{figure}

\begin{figure}[H]
\centering
\includegraphics[scale=0.33]{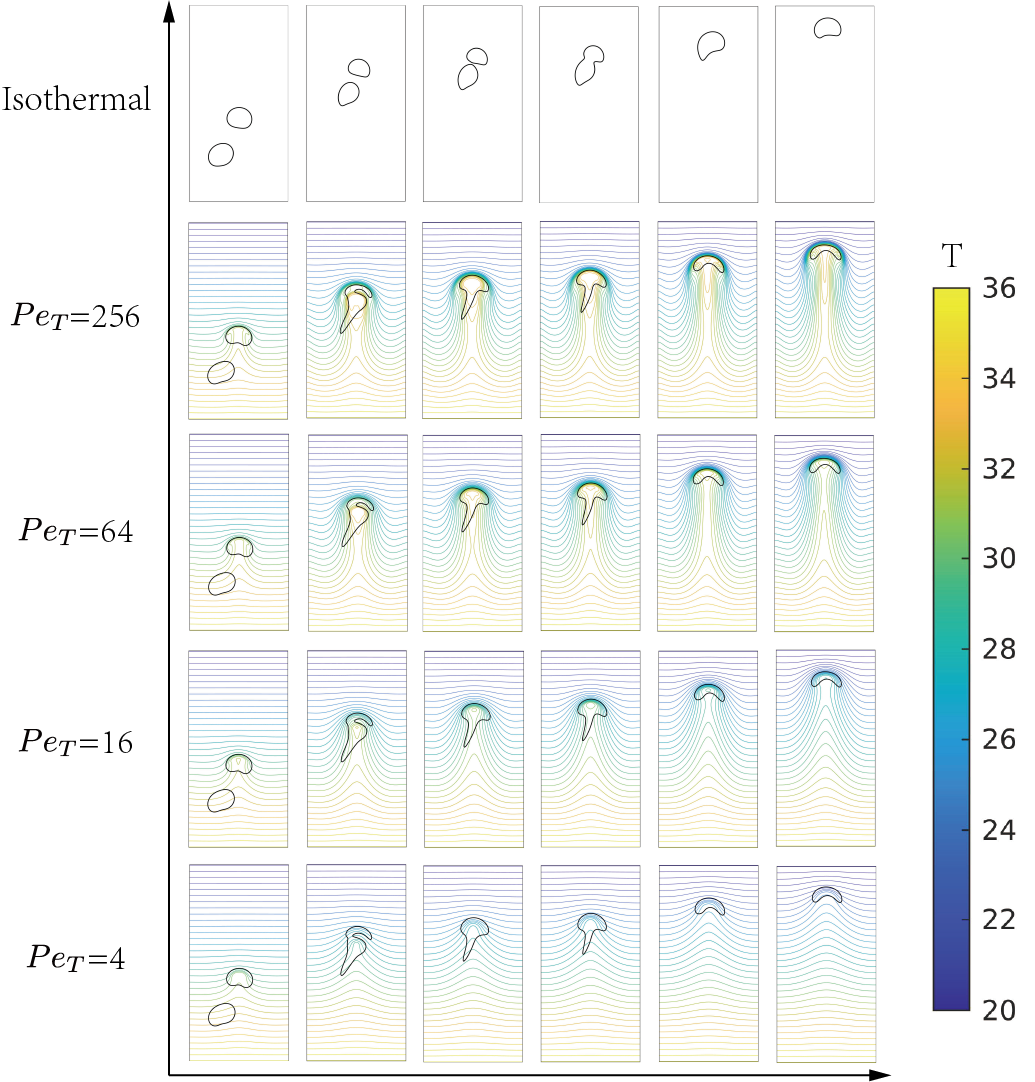}
\caption{Time evolution of bubbles interfaces (contour $\psi=0.5$) on the slice $y=4$ under isothermal conditions, along with the time evolution of isotherms and bubble interfaces on the same slice at different $\mathrm{Pe}_{T}$ numbers when the initial temperature field decreases linearly along the channel. See \S\ref{non-isotherm-condition} for details.}
\label{decreasing}
\end{figure}

\section{Conclusion}\label{conclusion}
In this study, we present a thermodynamically consistent phase-field model for two-phase flows with thermocapillary effects, which allows the two fluid components to have different physical properties, including density, viscosity, heat capacity, and thermal conductivity, and meanwhile maintains the balance laws of mass, momentum, and energy and entropy increase simultaneously. 

Given the highly coupled and nonlinear nature of the model equations, we develop a first order accurate numerical method that satisfies the discrete laws of mass conservation and entropy increasing. We validate both our model and numerical method through a series of numerical tests. These include the thermocapillary migration of a droplet, thermocapillary convection within a heated microchannel featuring two superimposed planar fluids, and the merging dynamics of two bubbles under isothermal conditions. Remarkably, our results align closely with existing analytical solutions or experimental findings. Furthermore, we delve into the merging process of bubbles under non-isothermal conditions. In comparing these scenarios with those under isothermal conditions, we observe significant differences: temperature gradients prompt earlier bubble merging and substantial alterations in bubble morphology. Moreover, as the initial temperature field increases linearly along the channel, the merging rate of bubbles decelerates with rising heat Peclect number ($\mathrm{Pe}_{T}$). Conversely, for scenarios with a linear decrease in the initial temperature field along the channel, bubbles coalesce at a faster rate as $\mathrm{Pe}_{T}$ increases.

\section*{Acknowledgement}
This work is partially supported by National Natural Science Foundation of China (Grant No. 12371387 and No. 12071046), Natural Science Foundation of Shandong Province (Grant No. ZR2021QA018) and China Postdoctoral Science Foundation
Funded Projection (Grant No. 2022M721757).
\newpage
\bibliographystyle{elsarticle-num}
\bibliography{LRe1}

    \clearpage
    \newpage
    \appendix


\section{Derivation of the conservation equations}\label{derivation_of_coservation_equa}
\subsection{Mass and volume conservation equations}
Substituting Eq.~(\ref{def-mass}) into Eq.~(\ref{def-mass-laws}), and substituting Eq.~(\ref{def-phase}) into Eq.~(\ref{def-phase-laws}), we obtain the following mass conservation equation
\begin{align}\label{mass-conservation-equa}
	\frac{\partial \rho}{\partial t}+\boldsymbol{\nabla} \cdot(\rho \boldsymbol{v})&=0,
\end{align}
and volume conservation equation
\begin{align} \label{component_1_volume_conservation}
 \frac{\partial \psi}{\partial t}+\boldsymbol{\nabla} \cdot(\psi \boldsymbol{v})&=-\frac{\boldsymbol{\nabla} \cdot \boldsymbol{J}}{\rho_{1}}.
\end{align}
Similarly, we assume the volume conservation equation for fluid 2 is
\begin{align}\label{component_2_volume_conservation}
	\frac{\partial \Phi_{2}}{\partial t}+\boldsymbol{\nabla} \cdot\left(\Phi_{2} \boldsymbol{v}\right)=-\frac{\boldsymbol{\nabla} \cdot \tilde{\boldsymbol{J}}}{\rho_{2}},
\end{align}
where $\Phi_{2}$ is the volume fraction of fluid 2, and $\tilde{\boldsymbol{{J}}}$ is the volume flux of fluid 2.
Multiplying Eq.~(\ref{component_1_volume_conservation}) by $\rho_{1}$ and Eq.~(\ref{component_2_volume_conservation}) by $\rho_{2}$, adding them together, and using Eq.~(\ref{mass-conservation-equa}), we obtain
\begin{align}
	-\boldsymbol{\nabla} \cdot \boldsymbol{J}-\boldsymbol{\nabla} \cdot \tilde{\boldsymbol{J}}=0
\end{align}
or
\begin{equation*}
	\begin{split}
	\boldsymbol{\nabla} \cdot\left(\boldsymbol{J}+\tilde{\boldsymbol{J}}\right)=0.
\end{split}
\end{equation*}
Furthermore, adding Eqs.~(\ref{component_1_volume_conservation}) and (\ref{component_2_volume_conservation}) together, we obtain
\begin{align}\label{quasi_incompressible_equation}
	\boldsymbol{\nabla} \cdot \boldsymbol{v} & =-\frac{\boldsymbol{\nabla} \cdot \boldsymbol{J}}{\rho_{1}}-\frac{\boldsymbol{\nabla} \cdot \tilde{\boldsymbol{J}}}{\rho_{2}} =\frac{\rho_{1}-\rho_{2}}{ \rho_{1}\rho_{2}} \boldsymbol{\nabla} \cdot \boldsymbol{J}.
\end{align}
Let $\alpha=(\rho_{2}-\rho_{1})/{ \rho_{2}}$, thus Eq.~(\ref{quasi_incompressible_equation}) can be rewritten as
\begin{align}
	\boldsymbol{\nabla} \cdot \boldsymbol{v}=-\alpha \frac{\boldsymbol{\nabla} \cdot \boldsymbol{J}}{\rho_{1}}. \label{Quasi_incompressible_0}
\end{align}
Recall the definition of the total time derivative:
\begin{align} \label{material derivative}
\frac{D \psi}{D t}=\frac{\partial \psi}{\partial t}+\boldsymbol{v} \cdot \boldsymbol{\nabla} \psi,
\end{align}
then (\ref{component_1_volume_conservation}) can be rewritten as
	\begin{align} \label{mass_conserve_0}
\frac{D \psi}{D t}+\psi(\boldsymbol{\nabla} \cdot \boldsymbol{v})=\frac{-\boldsymbol{\nabla} \cdot \boldsymbol{J}}{\rho_{1}}.
\end{align}

\subsection{Momentum and energy conservation equations}
Substituting Eq.~(\ref{def-mom}) into Eq.~(\ref{def-mom-laws}),
and using Eq.~(\ref{mass-conservation-equa}), we obtain the momentum conservation equation
\begin{align}\label{momentum_equation_0}
	\rho \frac{D \boldsymbol{v}}{D t}=\boldsymbol{\nabla} \cdot \mathbf{T}-\rho g \hat{\boldsymbol{z}}.
\end{align}
In addition, substituting Eq.~(\ref{def-energy}) into Eq.~(\ref{def-energy-laws}), we obtain
\begin{align} \label{energy_conser_equa_0-1}
\frac{\partial \hat{u}}{\partial t}+\boldsymbol{\nabla} \cdot(\hat{u}\boldsymbol{v})=-\boldsymbol{\nabla} \cdot \boldsymbol{q}_{E}+\boldsymbol{\nabla} \boldsymbol{v}: \mathbf{T}.
\end{align}
Using the definition of $\hat{u}$ (in \ref{definition_hat_u}), we obtain
\begin{align} \label{energy_conser_equa_1}
\frac{\partial \hat{u}}{\partial t}=&\frac{\partial u}{\partial T} \frac{\partial T}{\partial t}+\frac{\partial u}{\partial \psi} \frac{\partial \psi}{\partial t}+\lambda_{u}\frac{\partial \delta}{\partial t}.
\end{align}
Substituting the definition of $\delta$ (\ref{definition_of_delta}) into (\ref{energy_conser_equa_1}), we obtain
\begin{align} \label{energy_conser_equa_2}
\frac{\partial \hat{u}}{\partial t}&=\frac{\partial u}{\partial T} \frac{\partial T}{\partial t}+\frac{\partial u}{\partial \psi} \frac{\partial \psi}{\partial t}+\lambda_{u}\frac{ W^{\prime}(\psi)}{\epsilon}\frac{\partial \psi}{\partial t} \notag\\
&+\lambda_{u} \epsilon \boldsymbol{\nabla} \psi \cdot \boldsymbol{\nabla} \left(\frac{\partial \psi}{\partial t}\right) \notag\\
&= \frac{\partial u}{\partial T} \frac{\partial T}{\partial t}+\frac{\partial u}{\partial \psi} \frac{\partial \psi}{\partial t}+\lambda_{u}\frac{ W^{\prime}(\psi)}{\epsilon}\frac{\partial \psi}{\partial t} \notag\\
+&\lambda_{u} \epsilon \boldsymbol{\nabla} \cdot (\boldsymbol{\nabla} \psi \frac{\partial \psi}{\partial t})-\lambda_{u} \epsilon \Delta \psi\frac{\partial \psi}{\partial t} \notag\\
&= \frac{\partial u}{\partial T} \frac{\partial T}{\partial t}+\left(\frac{\partial u}{\partial \psi}+\lambda_{u}w\right) \frac{\partial \psi}{\partial t}+\lambda_{u} \epsilon \boldsymbol{\nabla} \cdot (\boldsymbol{\nabla} \psi \frac{\partial \psi}{\partial t} ),
\end{align}
where
\begin{align}
w=\frac{W^{\prime}(\psi)}{\epsilon} -\epsilon \Delta \psi.
\end{align}
In addition, we have the following identity:
\begin{align} \label{nabla_u}
 &\boldsymbol{v}\cdot \boldsymbol{\nabla} \hat{u} = \boldsymbol{v} \cdot\left(\frac{\partial u}{\partial T} \boldsymbol{\nabla} T+\left(\frac{\partial u}{\partial \psi}+\lambda_{u} w \right) \boldsymbol{\nabla} \psi +\lambda_{u} \epsilon \boldsymbol{\nabla}\cdot (\boldsymbol{\nabla} \psi\otimes \boldsymbol{\nabla} \psi)\right).
\end{align}
Substituting (\ref{energy_conser_equa_2}) and (\ref{nabla_u}) into (\ref{energy_conser_equa_0-1}), we obtain the energy conservation equation
\begin{align} \label{energy_equation}
\frac{\partial u}{\partial T} \frac{D T}{D t}&=-\left(\frac{\partial u}{\partial \psi}+\lambda_{u} w\right) \frac{D \psi}{D t}-\lambda_{u} \epsilon \boldsymbol{\nabla} \cdot (\boldsymbol{\nabla} \psi \frac{\partial \psi}{\partial t})-\hat{u}(\boldsymbol{\nabla} \cdot \boldsymbol{v}) \notag\\
&-\lambda_{u} \epsilon \boldsymbol{\nabla}\cdot (\boldsymbol{\nabla} \psi\otimes \boldsymbol{\nabla} \psi)\cdot \boldsymbol{v}-\boldsymbol{\nabla} \cdot \boldsymbol{q}_{E}+\boldsymbol{\nabla} \boldsymbol{v}: \mathbf{T}.
\end{align}





\end{document}